\def\url#1{}
\def\urlprefix{}
\def\path#1{}
\def\newline{}
\begin{document}
\title[Set systems : order types, continuous deformations, and quasi-orders]{Set systems: order types, continuous nondeterministic
deformations, and quasi-orders}

\author{Yohji Akama}
\address{
Mathematical Institute, Tohoku University
Sendai Miyagi JAPAN, 980-8578.\\
(tel) +81-22-795-7708\quad
(fax) +81-22-795-6400}
\email{akama@m.tohoku.ac.jp}

\date{\today}
\maketitle
\newcommand{\Nset}{\mathbb{N}}
\newcommand{\Qset}{\mathbb{N}}
\newcommand{\Zset}{\mathbb{Z}}

\def\Langle{\left\langle}
\def\R{\right\rangle}
\def\l{\langle}
\def\r{\rangle}
\def\pair#1#2{\l\,#1,\,#2\,\r}

\def\qo#1{\mathrm{qo}\left(#1\right)}
\def\Qo#1{\mathrm{Qo}\left(#1\right)}
\def\emptyseq{\langle\,\rangle}

\def\fld#1{\mathrm{fld}(#1)}

\newcommand{\A}{\ensuremath{\mathcal{A}}}
\newcommand{\B}{\ensuremath{\mathcal{B}}}
\newcommand{\QO}{\ensuremath{\mathbb{QO}}}
\newcommand{\SetSys}{\ensuremath{\mathbb{SS}}}
\newcommand{\RC}{\ensuremath{\mathbb{SS}}}
\newcommand{\coh}{\ensuremath{\mathbb{COH}}}

\newtheorem{question}{Question}
\newtheorem{definition}{Definition}
\newtheorem{fact}{Fact}
\newtheorem{example}{Example}
\newtheorem{theorem}{Theorem}
\newtheorem{corollary}{Corollary}
\newtheorem{lemma}{Lemma}
\newtheorem{remark}{Remark}
\newtheorem{proposition}{Proposition}
\def\transpose#1{{\,}^t{#1}}
\def\elang#1{\mathrm{eLang}_\Sigma({#1})}

\def\SGL{\mathcal{S}\textit{ingl}\,}
\def\DCL{\mathcal{D}\mathit{cl}\/}

\def\PosCon{{}}
\def\ewunion{\mathrel{\widetilde{\cup}}}
\def\ewuplus{\mathrel{\widetilde{\uplus}}}
\def\ewcap{\mathrel{\widetilde{\cap}}}
\def\ewbigcap{\mathrel{\widetilde{\bigcap}}}
\def\ewbigcupfin{\mathrel{\widetilde{\bigcup}_{finite}}}
\def\ewprod{\mathrel{\widetilde{\times}}}
\def\ewconcat{\mathrel{\widetilde{\cdot}}}

\def\ewstar#1{{#1}^{\widetilde{\ast}}\ }
\def\ewplus#1{{#1}^{\widetilde{+}}\ }

\def\ewshuffle#1#2{#1\mathrel{\widetilde{\diamond}}#2}
\def\ewshcls#1{{#1}^{\widetilde{\circledast}}\ }

\def\ewdiamond{\mathrel{\widetilde{\diamond}}}

\def\closure#1{\mathrm{ss}\left({#1}\right)}
\def\Ss{\mathrm{Ss}}
\def\ss#1{\Ss(#1)}
\def\upclos#1#2{#1\!\!\mathrel{\uparrow}\!\!#2}

\def\qed{}
\def\O{\mathcal{O}}

\newcommand{\op}{\ensuremath{\mathfrak{O}}}  
\newcommand{\discop}{\ensuremath{\mathfrak{F}}}
\newcommand{\TF}{\ensuremath{\mathcal{T}}}

\def\P#1{\{0,1\}^{#1}}

\def\C{\mathcal{C}}
\def\D{\mathcal{D}}

\def\natprod{\mathrel{\times\!\!\!\!\times}}

\def\E{\mathcal{E}}
\def\L{\mathcal{L}}
\def\M{\mathcal{M}}
\def\N{\mathcal{N}}

\def\S{\mathbf{S}}
\def\U{\mathbf{U}}

\def\X{\mathcal{X}}
\def\Y{\mathcal{Y}}
\def\Z{\mathcal{Z}}

\newcommand{\dom}{\mathrm{dom}}
\newcommand{\rng}{\mathrm{rng}}

\def\set#1#2{\bigl\{\;#1\;;\; #2\bigr\}}
\def\cardset#1#2{\#\bigl\{\;#1\;;\; #2\bigr\}}

\def\card#1{\#{#1}}

\def\otp{\mathrm{otp}}

\def\pit{positive information topology}

\def\inhfin#1{\left[{#1}\right]^{<\omega}}

\def\implies{{\ \ \mathrel{\Rightarrow}\ \ }}
\def\anonbin{{\mathrel{\odot}}}
\def\ewanonbin{{\mathrel{\widetilde{\odot}}}}
\def\aanonbin{\mathrel{{\anonbin}'}}
\def\eqref#1{(\ref{#1})}
\def\id{\mathrm{id}}

\def\Prd#1{\mathrm{Prod}(#1)}
\def\Prod#1{\mathrm{Prod}^*(#1)}
\def\Tch#1{\mathrm{Teach}(#1)}

\def\Ram{\mathrm{Ram}}

\def\inj{\mathsf{i}}

\begin{abstract} 
By reformulating a learning process of a set system $L$ as a game
 between Teacher and Learner, we define the order type of $L$
to be the order type of the game tree, if the tree is well-founded. The features of the order type of
 $L$ ($\dim L$ in symbol) are (1) We can represent any
 well-quasi-order~(\textsc{wqo} for short) by the set system $L$ of the
 upper-closed sets of the \textsc{wqo} such that the \emph{maximal order
 type} of the \textsc{wqo} is equal to $\dim L$.  (2) $\dim L$ is an
 upper bound of the mind-change complexity of $L$. 
 $\dim L$ is defined iff $L$ has a finite elasticity~(\textsc{fe} for
 short), where, according to computational learning theory, if an indexed family of recursive languages has
\textsc{fe} then it is learnable by an algorithm from positive data.
Regarding set systems as subspaces of Cantor spaces, we prove that
\textsc{fe} of set systems is preserved by any continuous function which
is monotone with respect to the set-inclusion. 
By it, we prove that finite
elasticity is preserved by various (nondeterministic) language
operators~(Kleene-closure, shuffle-closure, union, product,
intersection,$\ldots$.)  
The monotone continuous functions represent
nondeterministic computations. If a monotone continuous function
has a computation tree with each node followed by at most $n$ immediate
successors and the order type of a set system $L$ is $\alpha$, then the
 direct image of $L$ is a set system of order type at most $n$-adic diagonal Ramsey number of $\alpha$.
Furthermore, we provide an order-type-preserving
contravariant embedding from the category of quasi-orders and finitely
branching simulations between them, into the complete category of
subspaces of Cantor spaces and monotone continuous functions
having Girard's linearity between them.
Keyword: finite elasticity, shuffle-closure, Ramsey's theorem, finitely branching simulation,
 game, order type\end{abstract}

\def\cmb#1#2{C_{#1}^{#2} }
\def\Rset{\mathbb{R}}
\def\F{\mathcal{F}}
\def\sh#1#2{#1\diamond #2}

\section{Introduction}
A set system $\L$ over a set $T$, a subfamily of the power set $P(T)$,
is a topic of (extremal)
combinatorics~\cite{MR866142,MR1931142}, as well as a target of an
algorithm to learn in computational learning theory of languages~\cite{Lange2008194}.

By reformulating a learning process of a set system $\L$ as a game
between $T$eacher and $\L$earner, we define
the order type of $\L\subseteq P(T)$ to be  the order type of the game tree. The features
of the order type of $\L$ ($\dim \L$ in symbol) are followings:
\begin{itemize}
\item We can represent any well-quasi-order~(\textsc{wqo} for short) by
      the set system of the upper-closed sets of the \textsc{wqo} such that the \emph{maximal order type}~\cite{MR0447056} of the \textsc{wqo}
      is equal to the $\dim \L$.

\item $\dim\L$ is an upper bound of the mind-change
      complexity~\cite{Ambainis1999323} of $\L$ which is recently studied in
relation to Noetherian property of algebras, set-theoretical topology and reverse
mathematics~\cite{luo06:_mind_chang_effic_learn,1715965,MR2640836,brecht09:_topol_and_algeb_aspec_of}. 
 $\dim L$ is defined if and only if $L$ has a finite elasticity~(\textsc{fe} for
 short), where, according to computational learning theory~\cite{114871,Lange2008194}, if an indexed family of recursive languages has
\textsc{fe} then it is learnable by an algorithm from positive data.
\end{itemize}

In computational learning of languages, a set system algorithmically
learnable from positive data is often a combination of set systems~(e.g. extended pattern languages~\cite{Lange2008194}.) 
To discuss which combinatorial operations for set systems preserve \textsc{fe}, quantitatively with the order type of the set systems, let us consider a motivating example. Suppose $\L$ is the class of 
arithmetical progressions over $\Nset$.  Observe the class of binary
unions of arithmetical progressions over $\Nset$, that is, $\L\ewunion\L:=\{
L\,\cup\, M\;;\; L,M\in \L\}$ is more difficult to learn than
$\L\;\widetilde \uplus\;\L:=\{\L\uplus M\;;\; L,M\in \L\}$, where
$L\uplus M$ is the disjoint union of $L$ and $M$, i.e., the union of the
progression $L$ colored red and the progression $M$ colored black. The
difficulty of $\L\ewunion\L$ is because the discoloration brings \emph{nondeterminism} to
Teacher and/or Learner.  By the discoloration of 
 $L\uplus M$, we mean $L\cup M$, and by that of
 $\L\;\widetilde{\uplus}\;\L$, we mean $\L\ewunion\L$. We can notice
 that the discolorization of the direct product $L\times M$ of languages
 $L,M$ is the concatenation $L\cdot M$, and observe that  $\L
 \ewprod \L= \{L\times M\;;\; L,M\in \L\}$ is easier to learn than  the
 discolorization $\L \ewconcat \L=\{ L\cdot M\;;\; L\in \L\}$. 

 Following questions are central in this paper:
\begin{question}\label{q1}
Does  discoloration preserve finite elasticity?
\end{question}

\begin{question}[\protect{\cite{Ng2008150,735102,Kanazawa_finite_elasticity94,kanazawa98:_learn_class_of_categ_gramm}}]\label{q2}
Which operations for set systems preserve finite elasticity?
\end{question} 

\begin{question}\label{q3}
What is the nondeterminism brought by operations that preserve finite
elasticity?
\end{question}
\begin{question}\label{q4}
How much do such operations increase the order type of set systems?
\end{question} 
 Question~\ref{q1} is yes, because Ramsey's theorem~\cite{Graham.Rothschild.ea:80} implies
  any dichromatic coloring of any infinite game sequence of $\L\ewunion\L$
 has an infinite, monochromatic game subsequence of $\L$.
This is another saying of
Motoki-Shinohara-Wright's theorem~\cite{93373,114871}. This argument leads to
  a solution of Question~\ref{q4} with Ramsey
  number~\cite{Graham.Rothschild.ea:80}. 

For Question~\ref{q2}, first observe that
the discoloration $L\cup M$ of $L\uplus M$ is the \emph{inverse image}
 $R^{-1}[L\uplus M]=\{ s\;;\; \exists u\in L\uplus M.\, R(s, u)\}$  by a following finitely branching 
 relation: $R(s,u):\iff u=\pair{s}{red}$ or $u=\pair{s}{black}$. For a
 relation $R\subseteq X\times Y$,
 the
 inverse images of a set $M$ and a set system $\M$ are, by definition, respectively
\begin{equation}
R^{-1}[M]:= \{x\in X\;;\; \exists y\in M.\ R(x,y)\},\ \
\widetilde{R^{-1}}[\M]:=\{ R^{-1}[M]\;;\; M\in\M\}. \label{def:invimage}
\end{equation}
 Let us abbreviate ``a set system with finite elasticity'' by
an \textsc{fess}.
In
\cite{Kanazawa_finite_elasticity94,kanazawa98:_learn_class_of_categ_gramm},
 Kanazawa derived ``the inverse image of an
\textsc{fess} by a finitely branching relation again an \textsc{fess}''
from K\"onig's lemma,
and established not only the union but also the permutation closure and
so on preserves \textsc{fess}s.
We generalize his lemma further as: ``the direct image $\L$ of an
\textsc{fess} $\M$ by a \emph{continuous} function which is monotone
with respect to the set-inclusion
 is again an
\textsc{fess}.''  Here we regard $\L$ and $\M$ as subspaces of Cantor spaces,
which are the product topological spaces $\P{\bigcup\L}$, $\P{\bigcup\M}$ of copies
of finite discrete topological space $\{0,1\}$.

Interestingly, a monotone, continuous function is a
stable function~\cite{Gir89} plus a modest
nondeterministic computation, so to say. To explain the relation among monotone,
continuous functions, (linear) stable functions and
nondeterminism, let us consider a following characterization by
Tychonoff's theorem:
a monotone, continuous function
 is a function $\op:\M\to\L$ such
that there
is a finitely branching relation $R\subseteq
(\bigcup\L)\times\inhfin{\bigcup\M}$ satisfying that for all $x\in \bigcup\L$
and all $M\in\M$,
\begin{displaymath}
 \op(1_M)(x)=\left\{\begin{array}{ll}1,\quad &\left( \exists v\subseteq M.\  R(x,v)
	      \right);\\
	                  0, &(\mbox{otherwise,})
\end{array}\right.
\end{displaymath} 
where $\inhfin{\bigcup\M}$ is the class of finite subsets of
 $\bigcup\M$ and $1_M$ is the indicator function of the set $M$.  From \emph{linear
logic}~\cite{Gir89} point of view, when $\L$ and $\M$ are coherence spaces and $\#\{v\;;\; R(x,v)\}\le 1$ for all $x$, then $\op$ becomes a stable function from $\L$ to
 $\M$, and if further $\forall x\forall v.\,
 (R(x,v)\Rightarrow \#v\le 1)$ holds, then $\op$ becomes a \emph{linear} stable
 function~\cite{Gir89}.  Kanazawa's lemma is nothing but ``the
direct image of an \textsc{fess} by a \emph{linear}, monotone, continuous function is again an \textsc{fess}'' where the relation
$R$ in the lemma is the \emph{trace}~\cite{Gir89} of the linear function.

For Question~\ref{q3}, the nondeterminism brought by the (linear)
monotone, continuous functions $\op$ are the ``finite
OR-parallelism'' caused by finite sets $v$'s. The degree of the
nondeterminism is $\#\{ v\;;\; R(x,v)\}$.  In other words, the trace $R$
of the monotone, continuous function is finitely branching, while
that of stable function has at most one branching. So we can easily
prove that there are
monotone, continuous functions $1_L\mapsto 1_{L^*}$  and
$1_L\mapsto 1_{L^\circledast}$ where $L^\circledast$ is the
shuffle-closure~\cite{MR2078698} of $L$.
 Here
are a non-example and an example of nondeterminism.
\begin{itemize}
\item
Because a $\Pi$-continuous function~\cite{luo06:_mind_chang_effic_learn}
     can represent an unbounded search unlike monotone, continuous functions, the direct image of an
     \textsc{fess} by a $\Pi$-continuous function
 is not necessarily an
\textsc{fess}~(see Theorem~\ref{thm:7}.)

\item
We define the category  $\QO_{FinSim}$
 of \emph{quasi-orders} and finitely branching simulations between
them. Here a
usual order-homomorphism is an instance of a finitely branching
     \emph{simulation} which appears in concurrency theory.
Let $\SetSys$ be
 the complete category of set systems
and monotone, continuous functions between them.
 We provide  an
\emph{order-type-preserving} contravariant embedding from $\QO_{FinSim}$ to
$\SetSys$. By this embedding, each quasi-order is sent
to the family of upper-closed sets.
When the  branching of the relation is at most 1, it is sent to
 a stable~(\emph{sequential}) function~\cite{Gir89} in
 $\SetSys$. In fact, the category of coherence spaces and stable
 functions between them, introduced in \cite{Gir89} embeds in $\SetSys$.
\end{itemize}



As for Question~\ref{q4}, the Ramsey number argument for
Question~\ref{q1} establishes : If a monotone, continuous
function $\op$ with the trace $R$ has $n$  such $\#\{ v\;;\; R(s,v)\}\le
n$ for
each $s$, then the
direct image of $\L$ by $\op$ has order type at most the $n$-adic diagonal Ramsey
number of $\dim \L+2$.

This paper is organized as follows. In the next section, we review parts
of order theory, various (closure) operations of languages from
algebraic theory~\cite{MR2078698,MR2567276} of languages and automata, and finite elasticity of
computational learning theory.  In Section~\ref{sec:intersection}, we
introduce the order type of a set system, and then represent every quasi-order by a set system
having the same order type as the quasi-order.  We prove that if the set system $\L$ is an indexed
family of recursive languages, as in the case of computational learning
theory,  and if moreover the indexing is without
repetition, then $\dim\L$ is exactly a recursive ordinal. 
In Section~\ref{sec:cont_image}, we prove ``the direct image of an
\textsc{fess} by a monotone, continuous function is again an
\textsc{fess}.''  
In Section~\ref{sec:closureop}, we employ Ramsey numbers to answer
Question~\ref{q4}.
In Section~\ref{sec:embed}, we embed the category
$\QO_{FinSim}$ and a categorical model of linear logic in the category
$\SetSys$. 
In \ref{sec:cat},
we record the proof of Theorem~\ref{thm:catmain} on the categorical structure of
$\SetSys$, $\SetSys_{lin}$ and $\SetSys_{seq}$, where $\SetSys_{lin}$ is
the subcategory induced by linear functions and $\SetSys_{seq}$ by
sequential functions.  We prove the category $\SetSys_{seq}$ does not have a
binary coproduct because the sequential function does not represent
a nondeterministic computation. And then we discuss whether $\SetSys$ has
the duality operator and the bang operator as the category of coherence
spaces. 

\section{Preliminaries}\label{sec:preliminaries}

Let $R\subseteq \S\times\U$ be a relation. If the cardinality $B_R(s)$ of $\{ u\in
 \U\;;\; R(s,u) \}$ is finite for all $s\in \S$, then we say $R$ is
 \emph{finitely branching}. If 
 $B_R(s)\le 1$ for all $s\in \S$, then we say $R$ is a \emph{partial
 function}.
For a set $\U$, let 
$\left[\U\right]^{<\alpha}$ be the class of subsets $A$ of $\U$ such
 that $\#A <\alpha$.

\subsection{Order theory}
A quasi-order~(\textsc{qo} for short) over a set $X$ is a pair $\X=(X, \preceq)$ where
$\preceq$  is a reflexive, transitive  relation. A \emph{bad} sequence is
a possibly infinite sequence $\langle a_0, a_1, \ldots , a_n(,
\ldots)\rangle$ such that $a_i \not \preceq a_j$ whenever $i < j$. A
\emph{well-quasi-order} (\textsc{wqo} for short) is a quasi-order that
has no infinite bad sequences. For  $A\subseteq X$, let $A\uparrow\X:=\{ x\in \X\;;\;\exists a
\in A.\, a\preceq x\ \}$.

\begin{definition}For a quasi-order $\X=(X,
\preceq)$, let a set system $\closure{\X}$ be the complete lattice of
 upper-closed subset of $X$ with respect to $\X$.
\end{definition} 

\begin{proposition}[\protect{\cite[Theorem~2.1]{MR0049867}}]\label{prop:higman}
 For every quasi-ordered set $\X$, the following are equivalent:
\begin{enumerate}
\item
 $\X$ is a \textsc{wqo}. 
\item \emph{Finite basis property}:
Every $A \uparrow \X$ is $B\uparrow \X$ for some $B\in\inhfin{X}$. 

\item \emph{Ascending chain condition}:
$\closure{\X}$ is a complete lattice with ascending chain condition. That is,
there is no infinite, strictly ascending sequence of members.
\end{enumerate} 
\end{proposition}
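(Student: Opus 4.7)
The plan is to prove the implications $(1)\Rightarrow(2)\Rightarrow(3)\Rightarrow(1)$, cycling through the three conditions. Throughout I write $\X=(X,\preceq)$.

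For $(1)\Rightarrow(2)$, I would argue contrapositively and bootstrap an infinite bad sequence from a set $A\subseteq X$ with no finite basis. Concretely, I would build a sequence $a_0,a_1,\ldots$ in $A\uparrow\X$ by choosing $a_0$ arbitrarily and, given $a_0,\ldots,a_n$, choosing $a_{n+1}\in A\uparrow\X\setminus\{a_0,\ldots,a_n\}\uparrow\X$; such an element must exist because otherwise $\{a_0,\ldots,a_n\}$ would already generate $A\uparrow\X$ as an upper set. By construction $a_i\not\preceq a_{n+1}$ for all $i\le n$, so the resulting infinite sequence is bad, contradicting \textsc{wqo}. The key observation is simply that membership in $B\uparrow\X$ is witnessed by some $b\in B$ with $b\preceq x$, so failure of finite generation produces a witness that lies above none of the previously chosen points.

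For $(2)\Rightarrow(3)$, suppose toward a contradiction we have a strictly ascending chain $U_0\subsetneq U_1\subsetneq\cdots$ in $\closure{\X}$, and let $U=\bigcup_n U_n$, which is again upper-closed. By the finite basis property, $U=B\uparrow\X$ for some finite $B\subseteq U$. Each $b\in B$ already lies in some $U_{n(b)}$, so taking $N=\max_{b\in B}n(b)$ we get $B\subseteq U_N$, whence $U=B\uparrow\X\subseteq U_N$, contradicting $U_N\subsetneq U_{N+1}\subseteq U$.

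For $(3)\Rightarrow(1)$, I would again work contrapositively. Given an infinite bad sequence $a_0,a_1,\ldots$ in $\X$, set $U_n=\{a_0,\ldots,a_n\}\uparrow\X\in\closure{\X}$. The inclusions $U_n\subseteq U_{n+1}$ are clear, and they are \emph{strict}: if $a_{n+1}\in U_n$ then $a_i\preceq a_{n+1}$ for some $i\le n$, violating badness. Hence $(U_n)_n$ is an infinite strictly ascending chain in $\closure{\X}$, contradicting the ascending chain condition.

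The only step requiring real care is $(1)\Rightarrow(2)$: one must choose the new element $a_{n+1}$ so as to remain in $A\uparrow\X$ while avoiding the upper closure of the previous terms, and one must recognize that this simultaneously guarantees $a_i\not\preceq a_{n+1}$ for all $i\le n$. The remaining two implications are essentially bookkeeping arguments about upper sets and their finitely generated subsets.
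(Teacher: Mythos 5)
Your proposal is correct. The paper does not prove this proposition at all --- it is quoted as a classical result of Higman \cite{MR0049867} --- so there is no internal proof to compare against; your cyclic argument $(1)\Rightarrow(2)\Rightarrow(3)\Rightarrow(1)$ is the standard one, and each step checks out (in $(2)\Rightarrow(3)$ the containment $B\subseteq U$ follows from reflexivity since $B\subseteq B\uparrow\X=U$, and the ``complete lattice'' clause of (3) holds for any quasi-order, so only the ascending chain condition carries content).
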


 The length of a sequence $\sigma= \l b_1,
\ldots, b_m\r$ is, by definition, $ln(\sigma)=m$, and the length of an
infinite sequence $\sigma$ is, by definition, $ln(\sigma)=\infty$.

By a \emph{tree}, we mean a set $T$ of finite sequences such that any
initial segment of a sequence in $T$ is  in $T$. A tree $T$ is said
to be \emph{well-founded} if there is no infinite sequence $\langle
a_1,a_2,\ldots\rangle$  such that $\langle a_1,\ldots,a_n\rangle$ is in
$T$ for each $n$.

Let $T$ be a well-founded tree. For each node $\sigma$ of $T$, let the
ordinal number $| \sigma |$ be the supremum of $|\sigma'| + 1$ such that
$\sigma'\in T$ is an immediate extension of $\sigma$. Then the
\emph{order type} $|T|$ of the well-founded tree $T$ is defined by the
ordinal number $|\emptyseq|$ assigned to the root $\emptyseq$ of $T$.
For a tree $T$ which is not well-founded, let $|T|$ be $\infty$. For the sake of convenience, we set $\alpha < \infty$ for all ordinal
numbers $\alpha$. As in \cite{MR1283862}, we define
the \emph{order type} $\otp(\X)$ of a \textsc{wqo} $\X$ to be the order
type of the well-founded tree of bad sequences in $\X$. According to
\cite[Sect.~2]{MR1283862}, $\otp(\X)$ is equal to
the \emph{maximal order type} of de Jongh-Parikh~\cite{MR0447056}.

By an \emph{embedding} from a tree $T$ to a tree $T'$, we mean an injection
$f:T\to T'$ such that $f(v\sqcup u)=f(v)\sqcup f(u)$ for all vertices
$u,v$ in $T$, where $v\sqcup u$ is the greatest common ancestor of a
pair of vertices $u,v$.
\begin{fact}\label{fact:TE}If there is an embedding from a tree $T$ to
 a tree $T'$, then $|T|\le |T'|$.\end{fact}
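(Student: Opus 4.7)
The plan is to prove, by transfinite induction on the rank, that for every $\sigma\in T$ one has $|\sigma|_T\le |f(\sigma)|_{T'}$ (writing $|\sigma|_T$ for the ordinal attached to $\sigma$ inside $T$). Applying this to $\sigma=\emptyseq$ and observing that ranks are monotone along descendants in a well-founded tree gives $|T|=|\emptyseq|_T\le |f(\emptyseq)|_{T'}\le |T'|$, which is what we want.

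Before running the induction I would dispose of the degenerate cases. If $T$ is not well-founded, pick an infinite strictly ascending branch $\emptyseq < \sigma_1 < \sigma_2 < \cdots$ in $T$. Since each $\sigma_i$ is an ancestor of $\sigma_{i+1}$ we have $\sigma_i\sqcup\sigma_{i+1}=\sigma_i$, so the embedding condition yields $f(\sigma_i)=f(\sigma_i)\sqcup f(\sigma_{i+1})$, i.e.\ $f(\sigma_i)$ is an ancestor of $f(\sigma_{i+1})$ in $T'$. Injectivity of $f$ turns this into an infinite strictly ascending chain in $T'$, so $T'$ is also not well-founded and $|T'|=\infty\ge |T|$. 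From here on, assume both trees are well-founded.

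For the induction step, fix $\sigma\in T$, assume $|\sigma'|_T\le |f(\sigma')|_{T'}$ for every immediate extension $\sigma'$ of $\sigma$ in $T$, and enumerate those extensions as $\sigma_1,\sigma_2,\ldots$. For $i\ne j$ the greatest common ancestor of $\sigma_i$ and $\sigma_j$ in $T$ is $\sigma$, hence $f(\sigma_i)\sqcup f(\sigma_j)=f(\sigma)$: the images lie in distinct immediate subtrees rooted at $f(\sigma)$ in $T'$. Consequently each $f(\sigma_i)$ is a descendant of a unique immediate extension $\tau_i$ of $f(\sigma)$, and the $\tau_i$'s are pairwise distinct. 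Then $|f(\sigma_i)|_{T'}\le |\tau_i|_{T'}$ and $|\tau_i|_{T'}+1\le |f(\sigma)|_{T'}$ by the definition of rank, so together with the induction hypothesis $|\sigma_i|_T+1\le |f(\sigma_i)|_{T'}+1\le |f(\sigma)|_{T'}$. Taking the supremum over $i$ gives $|\sigma|_T=\sup_i(|\sigma_i|_T+1)\le |f(\sigma)|_{T'}$, closing the induction.

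The step I expect to be the main obstacle is the one where I use $\sqcup$-preservation to argue that distinct children of $\sigma$ cannot be mapped into the same immediate subtree of $f(\sigma)$; everything else is a routine ordinal-rank manipulation. One should also recall, for the final inequality $|f(\emptyseq)|_{T'}\le |T'|$, the standard fact that ranks in a well-founded tree do not increase along descendants, which is a direct consequence of the definition of $|\cdot|$.
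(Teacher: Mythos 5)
Your argument is correct. The paper states this as a bare \emph{Fact} and gives no proof of it, so there is nothing to compare against; what you give is the natural argument — dispose of the non-well-founded case by noting that an embedding carries an infinite branch to an infinite strictly ascending chain, then prove $|\sigma|\le|f(\sigma)|$ by well-founded induction, using $\sqcup$-preservation to see that distinct children of $\sigma$ map into distinct immediate subtrees below $f(\sigma)$. The only points you leave tacit are harmless: that $f(\sigma_i)$ is a \emph{strict} descendant of $f(\sigma)$ even when $\sigma$ has a single child (apply $f(\sigma\sqcup\sigma_i)=f(\sigma)\sqcup f(\sigma_i)$ and injectivity, exactly as in your degenerate-case paragraph), and that the remaining case where $T$ is well-founded but $T'$ is not is trivial since then $|T'|=\infty$.
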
 

\subsection{Computational learning
theory for languages} 

A \emph{set system} is a subfamily of a power set. We use
$\L,\M,\N,\ldots$ to represent set systems.

We say
a set system $\L$ over $X$ has an \emph{infinite elasticity}, if
there are infinite sequences $t_0, t_1,\ldots\in X$ and
$L_1,L_2,\ldots\in \L$ such that $\{t_0,\ldots, t_{i-1}\}\subseteq
L_i\not\ni t_i$ for every positive integer $i$. Otherwise, we say
$\L$ has a \emph{finite elasticity}~(\textsc{fe}.) A set system with an
\textsc{fe} is abbreviated as an \textsc{fess}.

Let $\Nset$ be the set of nonnegative integers.
\begin{example}\label{eg}\rm
\begin{enumerate}

\item The class of integer lattices contained in $\Zset^d$ and the class of
      ideals over $\Zset[x,y,z]$ are 
      \textsc{fess}s, because $\Zset^d$ and $\Zset[x,y,z]$ are is a
      Noetherian module and a Noetherian ring respectively~\cite[p.~112]{MR1790326}. 

\item The class of finitely generated free sub-semigroups of $(\Nset^2,
      +)$ is not
      an \textsc{fess}~(\cite{DBLP:conf/lata/Akama09}.)

\item The class of (extended) pattern languages with bounded number of
      variables is an indexed family of recursive languages, and is an
      \textsc{fess}~(\cite{114871}. For an elementary proof, see \cite{DBLP:conf/lata/Akama09}.)

\item $\SGL:=\left\{ \{x\}\;;\; x\in \Nset\right\}$ of
 singletons is an \textsc{fess}.

\item The class
$\DCL:=\left\{ \{y\;;y\le x\}\;;\; x\in \Nset\right\}\subseteq
 P(\Nset)$ is not an \textsc{fess}.
\end{enumerate} 
\end{example} 


\begin{definition}\label{def:aho}
By an \emph{indexed family of recursive languages~(\textsc{ifrl}
for short)}, we  mean
a pair $\L=(\nu:J\to X,\ \gamma:I\times J\to \{0,1\})$ such that
 $I,J\subseteq\Nset$, $\nu$ is a bijection, and $\gamma$ is recursive.
Put  $\L^i:=\{\nu(j)\in X\;;\; \gamma(i,j)=1\}$ for
       $i\in I$. An \textsc{ifrl} without repetition is just an
 \textsc{ifrl} such that $\L^i\ne L^j$ for distinct $i,j$.
\end{definition}

\begin{proposition}[\protect{\cite{93373,114871}}]  Every \textsc{ifrl} with an \textsc{fe} is learnable from
 positive data by an algorithm.\end{proposition}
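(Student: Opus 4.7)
The plan is to factor the proof through Angluin's finite tell-tale characterization of learnability from positive data.

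\emph{Step 1 (finite elasticity yields tell-tales).} I would first show that every $L^* \in \L$ admits a finite $T \subseteq L^*$ with the property that no $L^j \in \L$ satisfies $T \subseteq L^j \subsetneq L^*$. Argue by contradiction: if no finite subset of $L^*$ were a tell-tale, then starting from any $t_0 \in L^*$ one inductively picks $L_i \in \L$ with $\{t_0, \ldots, t_{i-1}\} \subseteq L_i \subsetneq L^*$ and $t_i \in L^* \setminus L_i$, yielding precisely the sequences forbidden by \textsc{fe}.

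\emph{Step 2 (effective tell-tales).} For each index $i$, I would construct an effective enumeration $T_i^0 \subseteq T_i^1 \subseteq \cdots$ whose union $T_i \subseteq \L^i$ is a tell-tale. This is done by dovetailing over finite candidates inside $\L^i$ (recursively enumerable from $\gamma$ and $\nu$) and running bounded searches for refuting witnesses $\nu(k) \in T \setminus \L^j$ or $\nu(k) \in \L^j \setminus \L^i$ for each $j$; Step~1 guarantees that some finite candidate survives all refutations.

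\emph{Step 3 (learner and convergence).} The learner $M$ on data $D_n = \{t_0, \ldots, t_{n-1}\}$ at stage $n$ outputs the least $i \leq n$ such that $D_n \subseteq \L^i$ and $T_i^n \subseteq D_n$; both tests are decidable because $\gamma$ is recursive. For the target $L^* = \L^{i^*}$, eventually $T_{i^*}^n \subseteq D_n$, so $i^*$ is acceptable. For any $j < i^*$ with $\L^j \neq L^*$, either $L^* \not\subseteq \L^j$ and consistency eventually fails as data fills in $L^*$, or $L^* \subsetneq \L^j$, in which case the tell-tale property of $\L^j$ applied to $L^* \in \L$ forces some element of $T_j$ outside $L^*$, so $T_j^n \not\subseteq D_n$ eventually; in either case $j$ is rejected, so $M$ stabilizes on an index of $L^*$.

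The principal obstacle is Step~2: extracting an effective tell-tale enumeration from the purely existential statement of Step~1, since proper inclusion of recursive languages is not decidable. The resolution is the dovetailed search described above, whose correctness rests on the conjunction of \textsc{fe} (Step~1, which guarantees success in finite time) with the decidability of $\gamma$ under the bijective $\nu$-indexing (which makes all refutation witnesses recursively searchable).
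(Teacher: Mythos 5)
The paper offers no proof of this proposition---it is imported from Wright and Motoki--Shinohara--Wright---so I am judging your argument against the standard one. Your architecture (reduce to Angluin's tell-tale criterion) is the right one, and Steps~1 and~3 are correct. The gap is in Step~2, which is where all of the content lives. A ``generate candidates and test them'' scheme cannot produce the required uniformly r.e.\ tell-tales, because being a tell-tale is neither an r.e.\ nor a co-r.e.\ property of the pair $(T,i)$: your refutation witnesses $\nu(k)\in T\setminus \mathcal{L}^j$ and $\nu(k)\in \mathcal{L}^j\setminus\mathcal{L}^i$ only certify that a \emph{particular} $j$ is harmless, and for every $j$ with $T\subseteq\mathcal{L}^j\subseteq\mathcal{L}^i$ (in particular $j=i$ itself, and any index with $\mathcal{L}^j=\mathcal{L}^i$, which a general \textsc{ifrl} may repeat) no witness of either kind ever appears. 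So ``$T$ survives all refutations'' has no finite certificate, the procedure can never commit to a candidate, and it certainly cannot emit the winning candidate as the limit of an increasing chain $T_i^0\subseteq T_i^1\subseteq\cdots$ after having already emitted elements of discarded candidates. Nor can Step~1 alone carry the day: the bare existence of finite tell-tales is strictly weaker than Angluin's condition, so the effectivization is precisely the point where \textsc{fe} must be used a second time, not merely quoted from Step~1.

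The repair, which is the actual argument of the cited papers, is to grow a single set rather than test candidates. Set $T:=\emptyset$ and dovetail over pairs $(j,k)$; at stage $(j,k)$, if the current finite $T$ satisfies $T\subseteq\mathcal{L}^j$ and $\nu(k)\in\mathcal{L}^i\setminus\mathcal{L}^j$ (both decidable from $\gamma$ since $T$ is finite), adjoin $\nu(k)$ to $T$. If infinitely many elements $t_1,t_2,\ldots$ were added, with $t_m$ added against index $j_m$, then $\{t_1,\ldots,t_{m-1}\}\subseteq\mathcal{L}^{j_m}\not\ni t_m$ for every $m$, and relabeling gives an infinite elasticity sequence---so \textsc{fe}, used here directly, forces $T$ to stabilize to a finite $T_i\subseteq\mathcal{L}^i$. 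If some $j$ had $T_i\subseteq\mathcal{L}^j\subsetneq\mathcal{L}^i$, choose $k$ with $\nu(k)\in\mathcal{L}^i\setminus\mathcal{L}^j$; at stage $(j,k)$ the then-current $T$ was contained in $T_i\subseteq\mathcal{L}^j$, so $\nu(k)$ was added, contradicting $T_i\subseteq\mathcal{L}^j$. This yields uniformly r.e.\ finite tell-tales, after which your Step~3 goes through verbatim.
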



By an alphabet we mean a finite nonempty set.
Let $\Sigma$ be an alphabet. Denote the empty word by $\varepsilon$.
For words $u,v\in \Sigma^*$, the
\emph{shuffle product} $u \diamond v$ of $u$ and $v$ is, by definition,
the set of all the words $u_1 v_1 u_2 v_2...u_n v_n$ such that $\exists n \geq 1 \exists
u_1,u_2,\ldots,u_n,v_1,v_2,\ldots,v_n \in \Sigma^*$ we have $u = u_1 u_2 \cdots u_n$ and  $v = v_1
v_2 \cdots v_n$.
 For $L,M\subseteq \Sigma^*$, let $L\diamond M:= \bigcup \{ u\diamond v\;;\;
u\in L, v\in M\}$. Put
$L^\diamond:= L\cup (L\diamond L)\cup (L\diamond L\diamond
L)\cup\cdots\enspace$. Let us call $L^\circledast := L^\diamond\cup\{\varepsilon\}$ \emph{the shuffle-closure of $L$}. The
shuffle-product and shuffle-closure are studied in 
algebraic theory of automata and languages~\cite{MR2078698,MR2567276}, for example.

Let a disjoint union of languages $B_i$ $(i\in I)$ be
\begin{displaymath}
 \biguplus_{i\in I} B_i:=\{ \pair{b}{i}\;;\; b\in B_i, i\in I\ \}.
\end{displaymath}

For a language $M$, let $M^m$ be $\overbrace{M\cdot M\cdot\cdots \cdot
M}^m$ $(m\ge1)$, let us call $M^+ := \bigcup_{m\ge1}M^m$ \emph{the positive Kleene-closure},
and let $M^*$ be the Kleene-closure. Let $M^\circledast$ be
the shuffle-closure of $M$, 
$\frac{1}{2}(M)$ be the half initial segment.

 For all $\L_i\subseteq P(X_i)$  $(1\le i\le n)$ and an operation
 $\anonbin$ on languages of arity $n$, put
\begin{displaymath}
\mathrel{\ewanonbin}(\L_1,\ldots, \L_n):=\{\anonbin(L_1,\ldots,L_n)\;;\;
 L_i\in \L_i, (1\le i\le n)\ \}.
\end{displaymath} 

%
Here is an application of Ramsey's theorem:

\begin{proposition}[\protect{Motoki-Shinohara-Wright~\cite{93373,114871}}]\label{prop:msw}
If $\L_1$ and $\L_2$ are \textsc{fess}s, so is
$\L_1\ewunion\L_2$.
\end{proposition}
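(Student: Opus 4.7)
The plan is to argue by contradiction using the infinite Ramsey theorem for pairs. Assume $\L_1 \ewunion \L_2$ has infinite elasticity, witnessed by sequences $t_0, t_1, \ldots$ and $N_1, N_2, \ldots \in \L_1 \ewunion \L_2$ with $\{t_0, \ldots, t_{i-1}\} \subseteq N_i \not\ni t_i$. By definition of $\ewunion$, write $N_i = L_i \cup M_i$ with $L_i \in \L_1$ and $M_i \in \L_2$. For every pair $j < i$ the membership $t_j \in N_i$ forces $t_j \in L_i$ or $t_j \in M_i$; define a $2$-coloring of the pairs $\{j,i\}$ with $j < i$ by picking colour $1$ in the first case and colour $2$ otherwise (breaking ties arbitrarily in favour of $1$).

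By Ramsey's theorem applied to this coloring of $[\Nset]^2$, there is an infinite monochromatic set $I = \{i_0 < i_1 < i_2 < \cdots\} \subseteq \Nset$. Suppose the colour of $I$ is $1$; the other case is symmetric by swapping the roles of $\L_1$ and $\L_2$. Then for all $k < \ell$ we have $t_{i_k} \in L_{i_\ell}$, while the original assumption still gives $t_{i_k} \notin N_{i_k} \supseteq L_{i_k}$. Setting $s_k := t_{i_k}$ and $L'_k := L_{i_k}$, the inclusions $\{s_0, \ldots, s_{k-1}\} \subseteq L'_k$ and $s_k \notin L'_k$ hold for every $k \ge 1$, so the sequences $(s_k)_k$ and $(L'_k)_k$ witness an infinite elasticity of $\L_1$, contradicting the hypothesis. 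In the other case one obtains an infinite elasticity witness for $\L_2$.

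The only subtle point is ensuring the colouring is well-defined when $t_j$ lies in both $L_i$ and $M_i$; this is handled by the tie-breaking convention, which is harmless because in either monochromatic extraction we only use one side of the disjunction. Beyond that, the argument is entirely mechanical: Ramsey's theorem does all the combinatorial work, turning a single-sequence elasticity witness for the union into a witness for one of the original families. No quantitative bookkeeping on order types is needed here; that refinement is postponed to Section~\ref{sec:closureop}, where the diagonal Ramsey number quantifies how badly $\dim(\L_1 \ewunion \L_2)$ can exceed $\dim \L_1$ and $\dim \L_2$.
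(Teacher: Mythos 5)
Your proof is correct and follows exactly the Ramsey-theoretic argument the paper attributes to Wright and sketches (the paper only cites Proposition~\ref{prop:msw}, but describes this very dichromatic coloring of pairs $\{j,i\}$ by whether $t_j\in L_i$, and its finite quantitative analogue appears verbatim in the proof of Lemma~\ref{thm:ramseyfn}). The extraction of a monochromatic infinite set and the resulting elasticity witness for $\L_1$ or $\L_2$ are handled correctly, including the tie-breaking and the indexing of the witness sequences.
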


In fact, it is derived from a weak principle: K\"onig's lemma.

\begin{proposition}[\protect{Moriyama-Sato~\cite{735102}}] \label{prop:a}
For a fixed finite alphabet, the family of language classes with
\textsc{fe} is closed under $\cup$, $\ewunion$, $\widetilde{\cdot}$, $\ewcap$,
${\,}^{\widetilde*}$, ${\,}^{\widetilde+}$, and  ${\,}^{\widetilde m}$
 for every positive integer $m$, but not under
elementwise complement.
\end{proposition}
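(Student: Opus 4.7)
My plan is to combine three ingredients: direct pigeonhole arguments, Kanazawa's lemma (the inverse image of an \textsc{fess} by a finitely branching relation is again an \textsc{fess}), and König's lemma on a tree of factorizations. The easy cases succumb to pigeonhole on a putative infinite bad pair. For $\cup$: one of $\L_1,\L_2$ must contribute infinitely many languages of the bad pair, yielding a bad pair in that family. For $\ewunion$ this is exactly Proposition~\ref{prop:msw}. For $\ewcap$: given $(L_1^i\cap L_2^i,\,t_i)$, each excluded $t_i$ misses $L_1^i$ or $L_2^i$, and pigeonhole on the losing side isolates a bad pair in one of $\L_1$, $\L_2$. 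The same idea handles the Cartesian product $\L_1\ewprod\L_2$.

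For concatenation $\ewconcat$ I apply Kanazawa's lemma to the finitely branching relation $R\subseteq\Sigma^*\times(\Sigma^*\times\Sigma^*)$ defined by $R(w,(u,v))\iff w=uv$: a word of length $n$ has $n{+}1$ splittings, and $R^{-1}[L_1\times L_2]=L_1\cdot L_2$, so $\widetilde{R^{-1}}[\L_1\ewprod\L_2]=\L_1\ewconcat\L_2$ inherits \textsc{fe} from the product. For ${}^{\widetilde m}$: $\L^{\widetilde m}\subseteq\L\ewconcat\cdots\ewconcat\L$ ($m$ factors), and every subfamily of an \textsc{fess} is an \textsc{fess}, so iterating the previous case suffices.

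The main obstacle is the Kleene closures ${}^{\widetilde*}$ and ${}^{\widetilde+}$, since the number of factors needed for $w\in L^*$ is a priori unbounded. From a putative infinite bad pair $(L_i^*,\,t_i)$ for $\L^{\widetilde*}$, let $F(t_k)$ be the finite set of factorizations of $t_k$ into nonempty words; whenever $t_k\in L_i^*$, some element of $F(t_k)$ uses only factors from $L_i$. Build the tree whose level-$n$ nodes are tuples $(\phi_0,\ldots,\phi_{n-1})\in F(t_0)\times\cdots\times F(t_{n-1})$ all of whose factors lie in $L_i$ for infinitely many $i$; this tree is finitely branching and, by pigeonhole on the finite product at each level, nonempty at every level, hence infinite. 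König's lemma produces an infinite path $\phi_0,\phi_1,\ldots$ and indices $i_1<i_2<\cdots$ along which every factor of $\phi_0,\ldots,\phi_{i_{j-1}}$ lies in $L_{i_j}$. Since $t_{i_j}\notin L_{i_j}^*$, the factorization $\phi_{i_j}$ contains some $u_j\notin L_{i_j}$; as each $u_{j'}$ ($j'<j$) is a factor of $\phi_{i_{j'}}$ with $i_{j'}\le i_{j-1}$, we have $u_{j'}\in L_{i_j}$, so $(L_{i_j},u_j)_j$ is an infinite bad pair for $\L$, contradicting its \textsc{fe}. The ${}^{\widetilde+}$ case is identical.

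Finally, elementwise complement fails to preserve \textsc{fe}: over a unary alphabet $\{a\}$, elementwise complementation turns $\SGL=\{\{a^i\}:i\ge 0\}$ into $\{\Sigma^*\setminus\{a^i\}:i\ge 0\}$, which admits the infinite bad pair $(\Sigma^*\setminus\{a^i\},\,a^i)_{i\ge 0}$.
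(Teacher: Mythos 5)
Your proof is correct. The paper itself does not prove this proposition --- it is quoted from Moriyama--Sato --- but it re-derives most of the individual closure properties elsewhere with its own machinery, so a comparison is still informative. Your pigeonhole arguments for $\cup$, $\ewcap$ and $\ewprod$ match the case analyses the paper performs inside Theorem~\ref{lem:coproduct} and Corollary~\ref{ew:cap} (where, however, $\ewcap$ is obtained as an inverse image of $\ewprod$ under the finitely branching relation $s\mapsto\{\langle s,s\rangle\}$), and your use of Kanazawa's Proposition~\ref{prop:5} with the splitting relation $R(w,(u,v))\iff w=uv$ for $\widetilde{\cdot}$ is in the spirit of Corollary~\ref{cor:kanazawa}. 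The genuine divergence is at ${\,}^{\widetilde*}$ and ${\,}^{\widetilde+}$: the paper encodes $L\mapsto L^*$ as a monotone continuous function (an oracle machine querying the finitely many factorizations of its input) and invokes Corollary~\ref{cor:contimage}, whose proof passes through Hall's marriage theorem (Theorem~\ref{thm:bang}) and Proposition~\ref{prop:5}; you instead run K\"onig's lemma directly on the finitely branching tree of factorization tuples and read off a bad sequence for $\L$. The combinatorial core is the same --- your tree is exactly the unfolding of the relation ``$v$ is the set of factors of a nonempty factorization of $w$'', to which Corollary~\ref{thm:newkan} applies --- but your version is self-contained and avoids both the topology and the marriage theorem, at the cost of being specific to the star. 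Your complement counterexample is the one in Theorem~\ref{thm:7}(2). The only loose end is the $\varepsilon$ bookkeeping in the star/plus cases: for ${\,}^{\widetilde*}$ every $t_i$ with $i\ge1$ is automatically nonempty since $\varepsilon\in L_i^*$, and empty factors of a witnessing factorization must be discarded before your $F(t_k)$ makes sense; for ${\,}^{\widetilde+}$ at most one term of the sequence can be $\varepsilon$ and should be deleted first. The paper handles this explicitly via the operators $\op_1$ and $\op_3$; in your write-up these are trivial repairs, not gaps.
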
 

Independently, 
Kanazawa~\cite{Kanazawa_finite_elasticity94,kanazawa98:_learn_class_of_categ_gramm}
proved  a following nice result by using K\"onig's lemma: 

\begin{proposition}[\protect{Kanazawa~\cite{Kanazawa_finite_elasticity94,kanazawa98:_learn_class_of_categ_gramm}}]\label{prop:5}
If $\M\subseteq P(Y)$ is an \textsc{fess} and $R \subseteq X \times Y$
is finitely branching, then $\widetilde{R^{-1}}[\M]\subseteq P(X)$ is so.
\end{proposition}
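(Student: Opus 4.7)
The plan is to argue by contradiction: assuming $\widetilde{R^{-1}}[\M]$ has an infinite elasticity, I would use the finite-branching hypothesis on $R$ to distill an infinite elasticity witness for $\M$ itself, contradicting the hypothesis on $\M$. Suppose $s_0,s_1,\ldots\in X$ and $M_1,M_2,\ldots\in\M$ satisfy $\{s_0,\ldots,s_{i-1}\}\subseteq R^{-1}[M_i]$ and $s_i\notin R^{-1}[M_i]$ for every $i\ge 1$. Unpacking definition~\eqref{def:invimage}, this says that for every $j<i$ the finite set $B_R(s_j)$ meets $M_i$, while $B_R(s_i)\cap M_i=\emptyset$.

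I would first record that this bad-sequence structure survives any simultaneous subsequence extraction: for a strictly increasing $\sigma:\Nset\to\Nset$, the pair $(s_{\sigma(n)})_{n\ge0},(M_{\sigma(n)})_{n\ge1}$ is again a witness, and in particular $B_R(s_{\sigma(k)})\cap M_{\sigma(k)}=\emptyset$ still holds. Using this and the finiteness of each $B_R(s_j)$, I would iteratively pigeonhole-and-thin: at step $n$, having arranged that choices $u_0,\ldots,u_{n-1}\in Y$ lie in every surviving $M_i$ with $i>n-1$, pick $u_n\in B_R(s_n)$ (in the current indexing) lying in $M_i$ for infinitely many of the surviving $i$, and restrict. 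A standard diagonal argument across steps then produces a single final subsequence in which $u_n\in B_R(s_n)$ and $u_j\in M_i$ for every $j<i$. Equivalently, one may apply K\"onig's lemma in one shot to the finitely branching tree whose level-$n$ nodes are the tuples $(u_0,\ldots,u_{n-1})\in\prod_{j<n}B_R(s_j)$ such that $u_j\in M_i$ for all $j<n$ and for infinitely many $i>n$.

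After renaming indices in the final subsequence, we have $u_n\in B_R(s_n)$ with $\{u_0,\ldots,u_{k-1}\}\subseteq M_k$ by construction, and $u_k\notin M_k$ because $u_k\in B_R(s_k)$ while the index-wise condition $B_R(s_k)\cap M_k=\emptyset$ survives all thinnings. Hence $(u_n)_{n\ge0},(M_k)_{k\ge1}$ witnesses infinite elasticity of $\M$, the desired contradiction. The main obstacle is the K\"onig/diagonalization step: producing one coherent sequence $u_0,u_1,\ldots$ simultaneously compatible with all later $M_i$'s is exactly where finite branching of $R$ is indispensable, since without it the tree of consistent partial choices could branch infinitely. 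Once that is in place, the ``not in'' clause automatically transfers from $\widetilde{R^{-1}}[\M]$ back to $\M$ because it is an index-wise condition.
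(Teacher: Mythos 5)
Your proof is correct: it is exactly the K\"onig's-lemma/pigeonhole argument that the paper attributes to Kanazawa (the paper states Proposition~\ref{prop:5} as a cited result and gives no proof of its own), and your key observations --- that the elasticity witness survives simultaneous subsequence extraction and that the index-wise condition $s_i\notin R^{-1}[M_i]$ transfers directly to $u_i\notin M_i$ --- are precisely what make the thinning argument go through. The only quibble is notational: the paper defines $B_R(s)$ as the \emph{cardinality} of $\{u\;;\;R(s,u)\}$, whereas you use it for the set itself.
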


In fact, without invoking K\"onig's lemma, he showed
\begin{lemma}[\protect{\cite{Kanazawa_finite_elasticity94,kanazawa98:_learn_class_of_categ_gramm}}]\label{lem:a} If $\dim \L, \dim \M<\infty$, then
  $\dim \L\ewuplus \M$. 
\end{lemma}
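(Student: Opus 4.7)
\emph{Proof plan.} The plan is to argue by contradiction, using only the pigeonhole principle on the two-element colour set $\{0,1\}$ that the disjoint union $\uplus$ makes explicit. An infinite-elasticity witness for $\L \ewuplus \M$ will, when restricted to one of its two colour classes, yield an infinite-elasticity witness for either $\L$ or $\M$, contradicting the hypothesis that both $\dim\L$ and $\dim\M$ are finite. This is exactly why no appeal to K\"onig's lemma is needed: the colouring that Motoki--Shinohara--Wright must recover combinatorially for $\ewunion$ comes for free with $\ewuplus$.

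Concretely, I would assume $\L \ewuplus \M$ has infinite elasticity and fix witnesses: a sequence $t_0, t_1, \ldots$ in $\bigl(\bigcup\L\bigr) \uplus \bigl(\bigcup\M\bigr)$ together with $L_i \in \L$, $M_i \in \M$ satisfying $\{t_0,\ldots,t_{i-1}\} \subseteq L_i \uplus M_i \not\ni t_i$ for every $i \ge 1$. Because every datum has the form $\pair{b}{c}$ with $c \in \{0,1\}$, I thin the index set to an infinite subset $i_1 < i_2 < \cdots$ on which the colour of $t_{i_k}$ is constantly, say, $0$, and write $t_{i_k} = \pair{b_k}{0}$ with $b_k \in \bigcup\L$. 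The verification is then immediate: the exclusion $t_{i_k} \notin L_{i_k} \uplus M_{i_k}$ combined with the colour tag forces $b_k \notin L_{i_k}$, and for $j < k$ the inclusion $t_{i_j} \in L_{i_k} \uplus M_{i_k}$ forces $b_j \in L_{i_k}$. After a trivial reindexing, $(b_k)_k$ together with $(L_{i_k})_k$ therefore constitutes an infinite-elasticity witness for $\L$, a contradiction; the case where colour $1$ dominates is symmetric and yields a witness for $\M$.

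There is no real obstacle in the argument itself. The only genuine design choice is the decision to pigeonhole the \emph{enumerated data} $t_i$ rather than the surrounding languages $L_i \uplus M_i$; the former is a binary split and finishes in one line, whereas the latter would require something closer to a Ramsey-style recolouring. For a quantitative refinement, I would track ordinals through the same pigeonhole split and embed the bad-sequence tree of $\L \ewuplus \M$ into an appropriate shuffle of those of $\L$ and $\M$ via Fact~\ref{fact:TE}, yielding a bound of the shape $\dim(\L \ewuplus \M) \le \dim\L \oplus \dim\M$ (natural sum); but finiteness alone, which is all the stated lemma asserts, already follows from the pigeonhole step above.
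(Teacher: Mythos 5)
Your argument is correct: the explicit colour tag on each datum $t_i$ lets a single pigeonhole step extract an infinite monochromatic subsequence, and the restriction to that colour class is verified exactly as you say, so $\L$ or $\M$ would inherit infinite elasticity. The paper itself states Lemma~\ref{lem:a} as a citation to Kanazawa and gives no proof, but your route is precisely the one the surrounding text advertises (no K\"onig's lemma, because $\ewuplus$ supplies the colouring that $\ewunion$ forces one to recover Ramsey-theoretically), and it parallels the dichotomy the paper does write out for $\L\ewprod\M$ in Corollary~\ref{ew:cap}; your closing remark that a finer bookkeeping would give an ordinal bound of natural-sum type is a reasonable refinement but, as you note, not needed for the statement.
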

Then he proved various language operations preserves \textsc{fess}s, by 
applying Proposition~\ref{prop:5}.

\begin{corollary}[\protect{Kanazawa~\cite{Kanazawa_finite_elasticity94,kanazawa98:_learn_class_of_categ_gramm}}] \label{cor:kanazawa}For a fixed alphabet,
the family of language classes with an \textsc{fe} is closed under
 $\ewunion$, elementwise permutation closures,
 $\ewdiamond$, and $\widetilde{\frac{1}{2}}(\cdot)$. For each nonerasing homomorphism
 $h:\Sigma_1^*\to\Sigma_2^*$, if a language class $\L \subseteq
 P(\Sigma_1^*)$ has an \textsc{fe}, so does $\widetilde{h}[\L]\subseteq
 P(\Sigma_2^*)$. If $\L$ is a class of $\varepsilon$-free languages with
 an \textsc{fe}, then so is $\{ L_1\cdot L_2 \cdot \cdots\cdot L_n\;;\; n\ge1, L_1,\ldots,L_n\in\L \}$.
\end{corollary}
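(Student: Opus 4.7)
The plan is to derive each item of the corollary from Kanazawa's general lemma (Proposition~\ref{prop:5}) applied to suitable \textsc{fess}s built from $\L$, with Lemma~\ref{lem:a} furnishing the disjoint-union base step. For every listed operation $\Phi$, I will exhibit a finitely branching relation $R$ and an \textsc{fess} $\M$ obtained from $\L$ so that $\widetilde{\Phi}[\L] = \widetilde{R^{-1}}[\M]$; Proposition~\ref{prop:5} then transports \textsc{fe} from $\M$ to $\widetilde{\Phi}[\L]$.

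For $\L \ewunion \L$, take $\M := \L \ewuplus \L$, which is an \textsc{fess} by Lemma~\ref{lem:a}, together with the two-valued discoloration $R(s,u) \iff u = \pair{s}{1} \vee u = \pair{s}{2}$; then $R^{-1}[L_1 \uplus L_2] = L_1 \cup L_2$ and $B_R(s)=2$. For the elementwise permutation closure, take $\M := \L$ and let $R(s,u)$ declare that $s$ is a permutation of $u$, so $B_R(s)\le |s|!$. For $\widetilde{\frac{1}{2}}(\L)$, take $\M := \L$ and $R(u,w) \iff w = uv$ with $|v|=|u|$, so $B_R(u) \le |\Sigma|^{|u|}$. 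For a nonerasing homomorphism $h$, take $\M := \L$ and $R(s,w) \iff h(w)=s$; nonerasingness forces $|w|\le|s|$, so $B_R(s)$ is finite. In every case $\widetilde{R^{-1}}[\M]$ is the desired class.

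The shuffle product (and the concatenation cases below) need one intermediate step, since inverse images are naturally unary. I first verify that the Cartesian-product class $\L \ewprod \L = \{L_1 \times L_2 : L_i \in \L\}$ is an \textsc{fess} by a direct pigeonhole: given any putative bad sequence $\bigl((u_i,v_i)\bigr)_i$ with languages $\bigl(L^{(i)}_1 \times L^{(i)}_2\bigr)_{i\ge 1}$, assign index $i$ the color $1$ if $u_i \notin L^{(i)}_1$ and $2$ otherwise, extract an infinite monochromatic subsequence, and read off an infinite bad sequence in $\L$ itself, contradicting \textsc{fe}. Then apply Proposition~\ref{prop:5} with this $\M := \L \ewprod \L$ and the finitely branching relation $R(w,\pair{u}{v}) \iff w \in u \diamond v$ (at most $2^{|w|}$ colorings of the letters of $w$), obtaining $\widetilde{R^{-1}}[\M] = \L \ewdiamond \L$.

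The remaining iterated concatenation $\{L_1 \cdots L_n : n\ge 1,\ L_i \in \L\}$ with each $L_i$ $\varepsilon$-free is the main obstacle, because the number of factors is not bounded a priori. I handle it in two steps. First, for each fixed $n \ge 1$, iterate the pigeonhole of the previous paragraph to see that $\L^{\ewprod n}$ is an \textsc{fess}, and then apply Proposition~\ref{prop:5} with the factorization relation $R_n(w,\langle w_1,\ldots,w_n\rangle) \iff w = w_1 \cdots w_n$ with each $w_i \ne \varepsilon$ (at most $\binom{|w|-1}{n-1}$ factorizations, hence finitely branching), to conclude that $\L^{\ewconcat n}$ is an \textsc{fess}. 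Second, suppose for contradiction that there is an infinite bad sequence $(t_i)_{i\ge 0}$, $(M_i)_{i\ge 1}$ with $M_i = L^{(i)}_1 \cdots L^{(i)}_{n_i}$. The inclusion $t_0 \in M_i$ for every $i\ge 1$ combined with $\varepsilon$-freeness of each $L^{(i)}_k$ forces $n_i \le |t_0|$, so the arities $(n_i)_{i\ge 1}$ take only finitely many values; an infinite subsequence of constant arity $n$ then yields an infinite bad sequence inside $\L^{\ewconcat n}$, contradicting the first step.
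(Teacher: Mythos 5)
The paper states this corollary without proof, as a cited result of Kanazawa, indicating only that it is obtained by applying Proposition~\ref{prop:5} (with Lemma~\ref{lem:a} supplying the disjoint-union base case); your reconstruction follows exactly that indicated route, and each of the six finitely branching relations you exhibit is correct. The two points that genuinely need more than a one-line relation --- the intermediate finite elasticity of $\L\ewprod\L$ for the shuffle, which the paper itself later proves as Corollary~\ref{ew:cap}~\eqref{assrt:2} by the same pigeonhole on which coordinate fails, and the bound $n_i\le|t_0|$ extracted from $\varepsilon$-freeness to reduce the unbounded concatenation to fixed arity --- are both handled correctly.
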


\section{Order types of set systems and WQOs} \label{sec:intersection}

We introduce order types of set systems, study
the set system of upper-closed
subsets of a \textsc{qo} from viewpoint of order types and algebraic
theory of lattices~\cite{0494.06001}.

We regard a learning process of a set system $\L\subseteq P(T)$, as a game
between Teacher $T$ and Learner $\L$ where in each inning $i\ge1$
Teacher presents a ``fresh'' example $e_{i-1}\in T$ and Learner submits
a hypothesis $H_i\in \L$ that explains examples presented so far, that
is, $\{e_0,\ldots, e_{i-1}\}\subseteq H_i$. By a ``fresh''example
$e_{i-1}$, we mean $e_i\not\in H_i$.  The well-foundedness of
the game tree coincides with the \emph{finite elasticity}~\cite{93373,114871} of the set system
$\L$, which was introduced in computational learning theory of
languages~\cite{Lange2008194}. 
 If $\L$ is further an \textsc{ifrl}, 
then some algorithm can learn $\L$ from positive data~\cite{93373,114871}.  First, we
introduce \emph{the order type $\dim \L$ of the set system} $\L$ by the
order type of the game tree.

\begin{definition}[Production sequence]A \emph{production sequence} of a
 set system $\L$ is  a sequence  $\Langle
 \pair{t_0}{L_1}, \pair{t_1}{L_2},  \ldots, \pair{t_{m-1}}{L_m}\R$ $(m\ge 0)$ or an infinite
 sequence 
  $\Langle \pair{t_0}{L_1},
\l t_1 , L_2\r, \ldots \R$ such that  

\begin{displaymath}  \{t_0,
 \ldots , t_{i-1}\} \subseteq L_i \in \L\ (i=1,2,\ldots(,m))\quad \mbox{and}\quad
 L_j\not\ni t_j\ (j=1,2,\ldots(,m-1).)\end{displaymath}
Let $\Prd{\L}$ be the set of all production sequences of $\L$.
\end{definition} 

Clearly a sequence $\langle L_1,L_2, \ldots\rangle$ is a bad sequence in
a poset $(\L, \supseteq)$, because $i<j$ implies $L_j\setminus L_i\ni t_i$.

\begin{definition}[Dimension]
The \emph{dimension} of $\L$, denoted by $\dim \L$, is defined to be $|\Prd{\L}|$.
\end{definition} 

By Fact~\ref{fact:TE}, $\L \subseteq \L'$ implies  $\dim \L\le \dim
\L'$.

Let us see examples of order types of set systems.

We note that any ordinal $\alpha$ is the dimension of the set
system of upper-closed subsets of $\alpha$. 

As in \cite[p.~384]{MR982269}, we understand that a \emph{recursive
ordinal} is an ordinal number $\alpha$ such that $\alpha=|T|$ for some
recursive well-founded tree $T$. 

\begin{theorem}\label{thm:omega1ck}If an \textsc{ifrl} $\L$ without repetition is an \textsc{fess}, then $\dim\L$ is
 a recursive ordinal. Conversely, for every recursive ordinal $\alpha$
 there is an \textsc{ifrl} $\L$ without repetition such that $\alpha=\dim
 \L $.
\end{theorem}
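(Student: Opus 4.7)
For the forward direction, I plan to encode the tree $\Prd{\L}$ of production sequences as a recursive well-founded tree $T' \subseteq \Nset^{<\omega}$. Fix an \textsc{ifrl} $\L = (\nu: J \to X, \gamma: I \times J \to \{0,1\})$ without repetition that has \textsc{fe}. The encoding sends a production sequence $\langle \pair{t_0}{L_1}, \ldots, \pair{t_{m-1}}{L_m}\rangle$ to the finite sequence $\langle \pair{\nu^{-1}(t_0)}{i_1}, \ldots, \pair{\nu^{-1}(t_{m-1})}{i_m}\rangle$ in $\Nset^{<\omega}$, where $i_k \in I$ is the \emph{unique} index with $\L^{i_k} = L_k$; uniqueness of $i_k$ is precisely the no-repetition hypothesis. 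Because $\gamma$ is recursive, membership in $T'$ is decidable by checking, for each $k$, that $\gamma(i_k, \nu^{-1}(t_\ell)) = 1$ for all $\ell < k$ and $\gamma(i_j, \nu^{-1}(t_j)) = 0$ for all $j < m$. The \textsc{fe} hypothesis forbids infinite production sequences, so $T'$ is well-founded, and the encoding is a tree isomorphism onto $T'$, whence $|T'| = |\Prd{\L}| = \dim \L$. This directly witnesses that $\dim \L$ is a recursive ordinal in the sense of the paper.

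For the converse, given a recursive ordinal $\alpha$, I would pick a recursive well-ordering $\le_A$ of $\Nset$ of type $\alpha$ (recursive ordinals coincide with order types of recursive well-orderings, equivalently with order types of recursive well-founded trees as in the paper's definition; this is standard, see e.g.\ the passage from a recursive well-founded tree of rank $\alpha$ to a recursive well-order by a Kleene-style notation argument). Set $X = J = I = \Nset$, $\nu = \mathrm{id}$, and $\gamma(a,b) = 1$ iff $a \le_A b$, so that $\L^a = \{\,b \in \Nset : a \le_A b\,\}$ is the tail of $\le_A$ starting at $a$. Recursiveness of $\le_A$ gives recursiveness of $\gamma$, so $\L$ is an \textsc{ifrl}; and if $a \neq a'$ with WLOG $a <_A a'$, then $a \in \L^a \setminus \L^{a'}$, so $\L$ is without repetition. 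The family $\{\L^a : a \in \Nset\}$ is exactly the collection of non-empty upper-closed subsets of the well-order $(\Nset, \le_A)$. The empty set cannot occur as any $L_i$ in a production sequence (since $L_i \supseteq \{t_0\} \neq \emptyset$), so removing or adjoining $\emptyset$ does not affect $\dim$; hence $\dim \L = \dim \closure{(\Nset, \le_A)}$. The paper's note that $\dim \closure{\alpha} = \alpha$ for any ordinal $\alpha$ (a case of the identity $\dim \closure{\X} = \otp(\X)$ for a \textsc{wqo} $\X$, applied to the linear well-order of type $\alpha$) then yields $\dim \L = \alpha$.

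The main obstacle is the role played by the \emph{without repetition} hypothesis in the forward direction. If repetition were permitted, a single production sequence would admit multiple encodings (one per choice of index representing each $L_k$), so the natural code tree would only surject onto $\Prd{\L}$ rather than be tree-isomorphic to it; its order type could therefore strictly exceed $\dim \L$. Restoring a bijection would require selecting a canonical index for each language, and this demands deciding equality of recursively enumerated languages, which is not itself recursive. Thus the uniqueness granted by no-repetition is exactly what ensures the recursive code tree has order type equal to $\dim \L$, not merely an upper bound; for the converse, the only delicate point is the passage from the tree-based definition of recursive ordinal to a recursive well-ordering presentation, which is cleanly handled by the standard equivalence.
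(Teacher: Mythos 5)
Your proof is correct and follows essentially the same route as the paper's: the forward direction codes $\Prd{\L}$ as a recursive tree of index pairs, with the without-repetition hypothesis (plus bijectivity of $\nu$) making the coding a tree isomorphism, and the converse realizes $\alpha$ as the order type of a recursive linear well-order and takes the family of its nonempty upper-closed tails, invoking $\dim\closure{\X}=\otp(\X)$. The only divergence is that the paper obtains its well-order from a Rogers-style system of notations, which yields a linear order of type $\alpha+1$ (and then asserts $\dim\L=\otp\left((\alpha+1,\le)\right)=\alpha$), whereas your direct choice of a well-order of type $\alpha$ sits more cleanly with the identity $\dim\closure{(\beta,\le)}=\beta$; just note that for finite $\alpha$ you must well-order a finite subset of $\Nset$ rather than all of $\Nset$, the case the paper dispatches with ``similarly.''
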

\begin{proof}
Let $\L$ be an \textsc{ifrl} without repetition by a pair of functions
 $\nu, \gamma$. 
Define a set $T_\L\subset \Nset$ inductively as follows. We also use
 symbols  '$\l$' and '$\r$' for sequence numbers, and 
 Odifreddi's notation~\cite[p.~88]{MR982269} of operations
 on sequence numbers.
(1)
 $\emptyseq\in T_\L$. (2) If  $\gamma(e,j_0)=1$, then
 $\left\langle \langle j_0, e \rangle\right\rangle \in T_\L$.
(3) If $\sigma\in T_\L$, $\gamma\left(\left((\sigma)_{ln
 (\sigma)-1}\right)_1, j\right)=0$, and 
 $\gamma(e,j)=1=\gamma\left(e,\ \left( (\sigma)_k\right)_0 \right)$
 $\forall k<\ln(\sigma)$, then $\sigma\ast \left\langle\langle
 j,e\rangle\right\rangle\in T_\L$. Clearly $T_\L$ is a recursive tree. 

Let $\varphi:\Prd{\L}\to T_\L$ take any $\Langle \l t_0,
 L_1\r,\l t_1, L_2 \r,\ldots,\l t_{l-1}, L_l\r\R\in \Prd{\L}$ to a node $\left\langle
 \pair{j_0}{e_1}, \pair{j_1}{e_2}, \ldots, \pair{j_{l-1}}{e_l}\R$ of $T_\L$ where $e_i$ is the unique
 number such that $L_i=\L^{e_i}$ and $j_i$ is
 the unique number such that $\nu(j_i)=t_i$. Then $j_i$ is well-defined
 because $\nu$ is bijective, and  $e_i$ is too because $\L$ is an \textsc{ifrl} \emph{without
 repetition}. The function $\varphi$ is obviously an surjective
 order-homomorphism that preserves glb's, and in fact an injection because $\L$ is an
 \textsc{ifrl} without repetition. Therefore $\dim \L= |T_\L|$.  Since
 $\L$ is an \textsc{fess}, $T_\L$ is a recursive well-founded tree, so
 $\dim \L$ is a recursive ordinal number.

Next we prove the second assertion. The ordinal number $\alpha$ is
 constructive by \cite[Theorem XX, Ch.11]{MR0224462}. So there is a
 recursively related, univalent system assigning a notation to $\alpha$
 by \cite[Theorem XIX, Ch.11]{MR0224462}. Therefore, there is
 an injective function $\nu$ from some set $J\subseteq\Nset$ onto
 an initial segment $\{\beta\;;\;0\le\beta\le\alpha\}=\alpha+1$ of the
 ordinal numbers such that 
\begin{equation}
\{\langle x,y\rangle\;;\; x,y\in J,\ 
 \nu(x)\le\nu(y)\}\subseteq\Nset\ \mbox{is recursive}. \label{recrela}
\end{equation} 
In particular, $J$ is recursive.  When $J$ is infinite, there is a
recursive strictly monotone function $d$ with the range being $J$. Put
$\L :=\{ L^i\;;\; i\in\Nset\}$, and $L^i:=\{ \beta\;;\; \exists j\in J.\
\beta=\nu(j)\ge \nu(d(i))\ \}$. Because $\nu$ is a bijection to
$\alpha+1$, $L^i$ is an upper-closed subset of $\alpha+1$. Define a
function $\gamma:\Nset\times J\to \{0,1\}$ by $\gamma(i,j)=1$ if
$\nu(j)\ge \nu(d(i))$, 0 otherwise. From \eqref{recrela}, $\gamma$ is
recursive.  In fact, $\L$ is $\closure{(\alpha+1, \le)}$ because the
range of $d$ is exactly $J$ and $\nu(J)=\alpha+1$. Then $\L$ is an
\textsc{ifrl}. Moreover $\L$ is an \textsc{ifrl} without repetition
because $d$ and $\nu$ are injective. By
Theorem~\ref{thm:repre}~\eqref{assert:otp}, $\dim\L=\otp\left((\alpha+1,
\le)\right)=\alpha$.
When $J$ is finite, we can prove the assertion similarly.
\qed
\end{proof}

Next we introduce a left-inverse of $\closure{\bullet}$.

\begin{definition}\label{def:ss_to_qo}
For a set system $\L \subseteq P(X)$, define a quasi-order
\begin{displaymath}
x\preceq_\L y:\iff \forall L\in \L\, (x\in L\Rightarrow y\in L.)\quad
 \qo{\L}:=(X, \preceq_\L.)
\end{displaymath}
\end{definition} 

Below, we prove that
$\closure{\bullet}$ is an order-type preserving representation of
\textsc{qo}s by set systems. In other words, the order type of a \textsc{wqo} turns out to be the difficulty in learning the
class of upper-closed subsets of the \textsc{wqo}. Then we prove that $\closure{\bullet}$ indeed has $\qo{\bullet}$ as the left-inverse.
\begin{theorem}[Representation of QO]\label{thm:repre}
Let $\X=(X,\preceq)$ be a quasi-order. 
\begin{enumerate}

\item \label{assert:otp}
     $\otp(\X) = \dim \closure{\X}$.

\item \label{assert:inj}  $\X= \qo{\closure{\X}}$. 
\end{enumerate}
\end{theorem}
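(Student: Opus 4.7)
My plan is to handle part~\eqref{assert:inj} by a short two-sided check, and to attack part~\eqref{assert:otp} by comparing the tree of bad sequences in $\X$ with the tree $\Prd{\closure{\X}}$ via two natural maps, obtaining the equality as a pair of ordinal inequalities. For~\eqref{assert:inj}, the forward inclusion $\preceq\,\subseteq\,\preceq_{\closure{\X}}$ is immediate from the definition of upper-closedness: if $x\preceq y$ and $L\in\closure{\X}$ contains $x$, then $y\in L$. For the converse, I would test the hypothesis against the principal upper set $\{x\}\!\uparrow\!\X$, which lies in $\closure{\X}$ and contains $x$; then $y\in\{x\}\!\uparrow\!\X$, i.e.\ $x\preceq y$.

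For part~\eqref{assert:otp}, I would first establish $\otp(\X)\le\dim\closure{\X}$ by defining an injection $\varphi$ from bad sequences in $\X$ into $\Prd{\closure{\X}}$ that pairs each entry $a_i$ of a bad sequence with the canonical upper-closed set $L_{i+1}=\{a_0,\ldots,a_i\}\!\uparrow\!\X$. The bad-sequence condition $a_i\not\preceq a_j$ for $i<j$ is exactly what is needed to guarantee $L_j\not\ni a_j$, so $\varphi$ lands in $\Prd{\closure{\X}}$; since $\varphi$ visibly preserves prefixes, hence greatest common ancestors, Fact~\ref{fact:TE} delivers the inequality.

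The reverse inequality $\dim\closure{\X}\le\otp(\X)$ is where the main obstacle sits. The first-component projection $\psi:\Prd{\closure{\X}}\to$~(bad sequences of $\X$) is well-defined, because in any production sequence $L_i$ is upper-closed and contains $\{t_0,\ldots,t_{i-1}\}$, so $L_i\supseteq\{t_0,\ldots,t_{i-1}\}\!\uparrow\!\X$, and combined with $L_i\not\ni t_i$ this forces $t_j\not\preceq t_i$ for $j<i$. However $\psi$ collapses production sequences that agree on first components but differ in their $L_i$'s, so it is \emph{not} a tree embedding and Fact~\ref{fact:TE} cannot be invoked directly. To bypass this, I would prove by transfinite induction on $|\sigma|$ the rank-comparison lemma $|\sigma|\le|\psi(\sigma)|$: for any valid immediate extension $\sigma\ast\langle\pair{t_k}{L_{k+1}}\rangle$, one has $t_k\notin L_k\supseteq\{t_0,\ldots,t_{k-1}\}\!\uparrow\!\X$, so $\psi(\sigma)\ast\langle t_k\rangle$ is again a bad sequence, and the inductive hypothesis bounds each ordinal entering the supremum defining $|\sigma|$ by a corresponding rank above $\psi(\sigma)$. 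Applied at the root, this yields $\dim\closure{\X}\le\otp(\X)$, which together with the embedding direction closes the equality.
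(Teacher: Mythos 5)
Your proposal is correct and follows essentially the same route as the paper: part (2) is the paper's argument verbatim (testing against the principal upper set), and for part (1) the paper establishes exactly your two correspondences --- production sequences project to bad sequences by upper-closedness of the $L_i$, and bad sequences lift via the canonical witness $L_i=\{t_0,\ldots,t_{i-1}\}\!\uparrow\!\X$ --- and then transfers this to order types by the same transfinite induction. Your explicit split into a tree embedding (Fact~\ref{fact:TE}) plus a rank-comparison lemma for the non-injective projection merely makes precise the induction the paper leaves implicit.
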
 

\begin{proof}

We prove the assertion~\eqref{assert:otp},
by a transfinite induction using
\begin{eqnarray*}
\exists L_1,\ldots, L_l.\ 
\Langle \l t_0, L_1\r, \l t_1, L_2 \r, \ldots,\l t_{l-1}, L_l \r\R \in
 \Prd{\closure{\X}}\\
\iff\l t_0,\ldots,t_{l-1}\r\mbox{ is a bad sequence of $\X$}. 
\end{eqnarray*}
The $\Rightarrow$-part is demonstrated as follows:
 We have $\{t_0,\ldots,
 t_{i-1}\}\subseteq L_i\not\ni t_i$ $(1\le i\le l-1)$. Because each
 $L_i\in \closure{\X}$ is
 upper-closed, for any nonnegative integers $j<i\le l-1$, $t_j\not\preceq
 t_i$.  The $\Leftarrow$-part is witnessed by $L_i:=\upclos{\{t_0,\ldots,t_{i-1}\}}{\X}$.

\eqref{assert:inj}
 Assume $x\preceq_{\closure{\X}} y$. Then $\forall L\in
 \closure{\X}.\ (x\in L \Rightarrow y\in L)$. Take $L:=\{ y\in
 X\;;\; x\preceq y\}\in \closure{\X}$. Hence $x\preceq y$. 
Conversely, assume $x\preceq y$. Then because every $L\in \closure{\X}$
 is upper-closed with respect to $\preceq$,   $x\in L$ implies $y\in
 L$. Therefore $x\preceq_{\closure{\X}} y$. \qed
\end{proof}

\begin{theorem}\label{rem:false}
 If $\qo{\L}$ is a
 \textsc{wqo}, $\L$ is an \textsc{fess} but not conversely. Actually
$\dim\L\le \otp(\qo{\L})\le \infty$ and $\dim (\SGL)=1<
 \otp(\qo{\SGL})$.
\end{theorem}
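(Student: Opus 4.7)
My plan is to prove the quantitative bound $\dim\L\le\otp(\qo{\L})\le\infty$, from which the \textsc{wqo}-implies-\textsc{fess} direction follows at once: a \textsc{wqo} has an ordinal $\otp(\qo{\L})$, hence $\dim\L$ is also an ordinal, i.e.\ $\L$ is an \textsc{fess}.

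First I would define the forgetful projection $\varphi$ from $\Prd{\L}$ into the tree of bad sequences of $\qo{\L}$ by
\[
 \varphi\bigl(\Langle \pair{t_0}{L_1},\ldots,\pair{t_{m-1}}{L_m}\R\bigr):=\l t_0,\ldots,t_{m-1}\r,
\]
and verify that the image really is a bad sequence: for $0\le i<j\le m-1$ the production-sequence conditions give $t_i\in L_j\not\ni t_j$, so $L_j\in\L$ itself witnesses $t_i\not\preceq_\L t_j$ via the definition of $\preceq_\L$.

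The main obstacle is that $\varphi$ need not be injective: two siblings in $\Prd{\L}$ may share the same $t$-coordinate but pair it with different hypotheses $L$, so Fact~\ref{fact:TE} does not apply directly. To work around this I would prove $|\sigma|\le|\varphi(\sigma)|$ by transfinite induction on the rank of $\sigma\in\Prd{\L}$. Each immediate extension $\sigma\ast\Langle\pair{t}{L}\R$ of $\sigma$ projects to an immediate extension $\varphi(\sigma)\ast\l t\r$ of $\varphi(\sigma)$ in the bad-sequence tree; the inductive hypothesis applied term by term, together with the observation that $\varphi$ sends the set of immediate extensions of $\sigma$ into those of $\varphi(\sigma)$, gives $|\sigma|\le|\varphi(\sigma)|$ after taking supremum. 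Evaluating at the root $\emptyseq$ yields $\dim\L\le\otp(\qo{\L})$.

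For the failure of the converse and the concrete gap, I would analyse $\SGL=\left\{\{x\}\;;\;x\in\Nset\right\}$. Each singleton $\{x\}\in\SGL$ witnesses $x\not\preceq_{\SGL}y$ for every $y\ne x$, so $\preceq_{\SGL}$ is the identity on $\Nset$, and any injective sequence of naturals is then an infinite bad sequence, giving $\otp(\qo{\SGL})=\infty$. On the other hand, any production sequence of $\SGL$ of length $\ge 2$ would require some $L_2\in\SGL$ with $\{t_0,t_1\}\subseteq L_2$, which is impossible for a singleton; hence the production-sequence tree of $\SGL$ has depth one and $\dim\SGL=1$.
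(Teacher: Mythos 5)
Your proposal is correct and follows essentially the same route as the paper: project each production sequence onto its $t$-coordinates to obtain a bad sequence of $\qo{\L}$, establish $\dim\L\le\otp(\qo{\L})$ by transfinite induction on $\Prd{\L}$, and exhibit $\SGL$ (where $\preceq_{\SGL}$ is equality) as the counterexample with $\dim\SGL=1<\infty=\otp(\qo{\SGL})$. You merely spell out the induction more explicitly than the paper does, correctly noting that the non-injectivity of the projection blocks a direct appeal to Fact~\ref{fact:TE}.
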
 
\begin{proof}
Observe that for every
$\Langle \l t_0, L_1\r, \l t_1, L_2 \r, \ldots,\l t_{l-1}, L_l \r\R\in
\Prd{\L}$, a sequence $\l t_0,
 t_1,\ldots, t_{l-1}\r$ is a bad sequence of $\qo{\L}$. So we can prove
 the inequality
by a transfinite induction~\cite{MR0219424} on $\Prd{\L}$.  The equality $\dim\L= \otp(\qo{\L})$ is not necessarily
 true. For example, although $\dim \SGL=1$, a quasi-order
 $\qo{\SGL}=(\Nset, =)$ has an infinite bad sequence $\l 0,1,2,3,\ldots\r$,
 which implies $\otp (\qo{\SGL})=\infty$. 
\end{proof}

\begin{proposition}[\protect{\cite[p.~41]{brecht09:_topol_and_algeb_aspec_of}}]
If $\L$ has a finite thickness and $\L$ has no infinite anti-chain with
 respect to $\subseteq$, then $\qo{\L}$ is a \textsc{wqo}.
\end{proposition}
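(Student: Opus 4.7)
The plan is to argue by contradiction. Suppose $\qo{\L}$ fails to be a \textsc{wqo}, so there is an infinite bad sequence $x_0, x_1, x_2, \ldots$ in $X$. Unfolding the definition of $\preceq_\L$, for each pair $i < j$ the failure $x_i \not\preceq_\L x_j$ supplies a witness $L_{i,j} \in \L$ with $x_i \in L_{i,j}$ and $x_j \notin L_{i,j}$. Setting $\L_i := \{L \in \L : x_i \in L\}$, each $\L_i$ is finite by the finite thickness hypothesis, and $L_{i,j} \in \L_i$ for every $j > i$.

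I would then extract, by iterated pigeonhole on the finite sets $\L_i$, a subsequence of indices $i_0 < i_1 < i_2 < \cdots$ and sets $M_k \in \L_{i_k}$ such that $x_{i_l} \notin M_k$ whenever $l > k$. Concretely, starting from $i_0 = 0$ and $J_0 = \{j : j > 0\}$, pigeonhole applied to the coloring $j \mapsto L_{i_0, j}$ of $J_0$ by the finite palette $\L_{i_0}$ picks $M_0 \in \L_{i_0}$ and an infinite $J_1 \subseteq J_0$ on which this coloring is constant; then set $i_1 := \min J_1$ and iterate with $\L_{i_1}$, and so on. The inductive thinning produces the required $(i_k, M_k)$, and in particular $x_{i_l} \in M_l \setminus M_k$ for every $k < l$, which already forbids $M_l \subseteq M_k$.

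Finally, I would apply the infinite Ramsey theorem to the $2$-coloring of pairs $\{k, l\}$, $k < l$, that assigns \emph{chain} when $M_k \subsetneq M_l$ and \emph{anti} when $M_k$ and $M_l$ are $\subseteq$-incomparable (these exhaust the possibilities, since $M_l \not\subseteq M_k$ excludes $M_l \subseteq M_k$ and also excludes $M_k = M_l$). A monochromatic infinite subsequence of color \emph{anti} is an infinite $\subseteq$-antichain in $\L$, contradicting the antichain hypothesis. A monochromatic infinite subsequence $M_{k_1} \subsetneq M_{k_2} \subsetneq \cdots$ of color \emph{chain} consists of pairwise distinct elements all containing $x_{i_{k_1}}$, and therefore all lying in the finite family $\L_{i_{k_1}}$, again a contradiction.

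The conceptual obstacle is the asymmetry introduced by the pigeonhole step: it forbids $M_l \subseteq M_k$ for $l > k$ but says nothing about the opposite direction, so the antichain hypothesis alone cannot be invoked immediately. The remedy is to use finite thickness a \emph{second} time through the Ramsey dichotomy, exploiting the fact that an ascending $\subsetneq$-chain would force infinitely many distinct sets to share the element $x_{i_{k_1}}$ and thus overload the finite stratum $\L_{i_{k_1}}$. Everything else reduces to bookkeeping; one could even dispense with Ramsey and construct the antichain directly by a diagonal refinement that at each stage uses finite thickness to discard both candidates that contain all later $M$'s and candidates comparable with all later $M$'s.
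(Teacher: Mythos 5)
Your argument is correct. Note that the paper itself offers no proof of this proposition: it is quoted verbatim from de Brecht's thesis (the cited reference), so there is nothing internal to compare against; your write-up is a self-contained derivation of the cited fact. Checking the details: the thinning step is sound (in an infinite bad sequence every $x_i$ has some $j>i$ with $x_i\not\preceq_\L x_j$, so each palette $\L_i=\{L\in\L : x_i\in L\}$ is nonempty and finite, and the iterated pigeonhole yields $M_k\in\L_{i_k}$ with $x_{i_l}\in M_l\setminus M_k$ for $k<l$, hence $M_l\not\subseteq M_k$); the Ramsey dichotomy is exhaustive precisely because $M_l\not\subseteq M_k$ rules out both equality and reverse inclusion; the \emph{anti} branch contradicts the antichain hypothesis; and the \emph{chain} branch produces infinitely many pairwise distinct members of $\L$ all containing $x_{i_{k_1}}$, contradicting finite thickness. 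Your closing observation that finite thickness must be used twice (once to thin, once to kill the ascending chain) correctly identifies the only nontrivial point; the remark that Ramsey can be replaced by a direct diagonal refinement is also right, though unnecessary for the result.
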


We will study structure of 
the representation of \textsc{qo}s by set systems from viewpoint of algebraic theory of
lattices~\cite{0494.06001}. From \cite{0494.06001}, we recall ``atom,''
``atomic,'' and ``compact'' (and the dual notions.)

Let $L$ be a complete lattice.
By a \emph{coatom} of  $L$, we mean any
nontop element $C$ such that every nontop $c\in
L$ is codisjoint from $C$ (i.e. $c\cup C$ is top) or less than or equal to
$C$. A \emph{coatomic}, complete lattice is, by definition, a complete lattice such that
for any nontop element $C_0$ there is a coatom greater than or equal to
$C_0$. We say an element $c$ in a complete lattice $L$ is called
\emph{compact} if whenever $c\le \bigcup S$ there exists a
      finite subset $T\subseteq S$ with $c\le \bigcup
      T$. 
\begin{proposition}[\protect{\cite{0494.06001}}]\label{prop:Dilworth}
Every element of a complete lattice $L$ is compact if and only if $L$
satisfies the ascending chain condition.
\end{proposition}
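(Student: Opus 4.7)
The plan is to prove both implications by contradiction using the standard ``compactness vs.\ ACC'' argument.

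For the forward direction, assume every element of $L$ is compact and suppose toward a contradiction that $L$ fails ACC, so there is a strictly ascending sequence $a_1 < a_2 < a_3 < \cdots$ in $L$. Let $a := \bigcup_{n\ge 1} a_n$. Since $a$ is compact and $a \le \bigcup_{n} a_n$, there exists a finite subset $\{a_{n_1},\ldots,a_{n_k}\}$ of the chain with $a \le a_{n_1} \cup \cdots \cup a_{n_k}$. Because the sequence is ascending, the right-hand side equals $a_N$ where $N := \max\{n_1,\ldots,n_k\}$. But then $a_{N+1} \le a \le a_N$, contradicting $a_N < a_{N+1}$.

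For the backward direction, assume ACC and let $c \in L$ with $c \le \bigcup S$. Consider
\[
T := \Bigl\{\,\bigcup F \;;\; F \subseteq S,\ F \text{ finite}\,\Bigr\}.
\]
This $T$ is nonempty (take $F = \emptyset$ if one allows the bottom element, or any singleton) and closed under finite joins. The key step is to show, from ACC, that $T$ has a greatest element. Indeed, if $T$ had no maximum, one could inductively build a strictly ascending chain in $T$: having chosen $m_i \in T$, since $m_i$ is not maximum there is $t \in T$ with $t \not\le m_i$, and then $m_{i+1} := m_i \cup t \in T$ satisfies $m_{i+1} > m_i$; this violates ACC. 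So $T$ has a maximum $m = \bigcup F_0$ with $F_0 \subseteq S$ finite. For every $s \in S$, $m \cup s \in T$ and $m \cup s \ge m$, whence $m \cup s = m$ by maximality, i.e.\ $s \le m$. Therefore $\bigcup S \le m$, and $c \le m = \bigcup F_0$, so $c$ is compact.

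The only mildly subtle step is the use of ACC to extract a maximum from the nonempty, join-closed set $T$; this is where dependent choice enters implicitly in the inductive construction of the putative strictly ascending chain. Everything else is routine manipulation of joins in a complete lattice, so I would expect the proof to be short and to fit comfortably in a few lines once the set $T$ is introduced.
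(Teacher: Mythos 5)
Your proof is correct: both directions are the standard argument (joining a strictly ascending chain to its supremum and applying compactness for one direction; taking the join-closed set of finite sub-joins of $S$ and extracting a maximum via ACC for the other), and your remark that dependent choice enters in deriving a maximal element from the no-infinite-ascending-chain form of ACC is an honest and accurate caveat. The paper itself supplies no proof of this proposition --- it is cited verbatim from Crawley--Dilworth --- so there is nothing internal to compare against; your argument is exactly the one found in that reference.
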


\begin{theorem}\label{thm:repr}
Let $\X=(X,\preceq)$ be a quasi-order. 
\begin{enumerate}
\item \label{assert:tfae}The following are equivalent:
\begin{enumerate}
\item 
$\X$ is a
 \textsc{wqo}.
\item  $\closure{\X}$ is an \textsc{fess}. 

\item $\closure{\X}$ is a complete lattice such that every element is
      compact.
\end{enumerate}

\item \label{assert:coatomic}If $\X$ is a
 \textsc{wqo}, then
$\closure{\X}$
 is a coatomic, complete lattice.

\end{enumerate}
\end{theorem}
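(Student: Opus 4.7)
My plan is to derive assertion~(\ref{assert:tfae}) by combining two previously established characterisations with the production-sequence/bad-sequence bijection implicit in the proof of Theorem~\ref{thm:repre}, and to handle assertion~(\ref{assert:coatomic}) by constructing, above any proper upper-closed set, a coatom whose complement is the principal downset of a $\preceq$-minimal element.

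For assertion~(\ref{assert:tfae}), the equivalence of~(a) and~(c) is a two-step chain: $\closure{\X}$ is a complete lattice by definition, Proposition~\ref{prop:higman} identifies \textsc{wqo} with the ascending chain condition on $\closure{\X}$, and Proposition~\ref{prop:Dilworth} identifies the ascending chain condition with every element being compact. For (a)~$\Leftrightarrow$~(b) I would reuse the correspondence already exhibited in the proof of Theorem~\ref{thm:repre}(\ref{assert:otp}): a sequence $\Langle \l t_0,L_1\r, \l t_1,L_2\r, \ldots \R$ lies in $\Prd{\closure{\X}}$ iff $\l t_0,t_1,\ldots\r$ is a bad sequence of $\X$, with the backward direction realised by the canonical choice $L_i := \upclos{\{t_0,\ldots,t_{i-1}\}}{\X}$. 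The same equivalence applies verbatim to infinite sequences, so $\closure{\X}$ has infinite elasticity iff $\X$ admits an infinite bad sequence, i.e.\ iff $\X$ is not a \textsc{wqo}.

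For assertion~(\ref{assert:coatomic}) I first invoke the standard fact that in a \textsc{wqo} every non-empty subset $S$ has a $\preceq$-minimal element, since otherwise iterating the choice of a strictly $\preceq$-smaller element produces an infinite bad sequence. Given a non-top $C_0\in\closure{\X}$, I apply this to the non-empty lower-closed set $S := X\setminus C_0$ and obtain $x$ minimal in $S$; because $S$ is lower-closed, every $y\preceq x$ in $X$ already lies in $S$, so $x$ is actually minimal in all of $\X$. Set $C := X\setminus\{y\in X:y\preceq x\}$. This $C$ is upper-closed, strictly below $X$ (as $x\notin C$), and contains $C_0$: if some $y\preceq x$ lay in $C_0$, upper-closure of $C_0$ would force $x\in C_0$, contradicting $x\in S$; hence $\{y:y\preceq x\}\subseteq X\setminus C_0$, so $C\supseteq C_0$. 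Finally, $C$ is a coatom, for any $D\in\closure{\X}$ properly extending $C$ contains some $z$ with $z\preceq x$, and upper-closure of $D$ yields $x\in D$; minimality of $x$ then says that every $w\preceq x$ satisfies $x\preceq w$, so a second application of upper-closure gives $\{y:y\preceq x\}\subseteq D$ and therefore $D=X$. The delicate point throughout assertion~(\ref{assert:coatomic}) is keeping careful track of which direction of $\preceq$ is being used, in particular when transferring minimality of $x$ from $X\setminus C_0$ to $X$ and then exploiting it twice under upper-closure; everything else amounts to citing Propositions~\ref{prop:higman} and~\ref{prop:Dilworth} or recycling the bijection from Theorem~\ref{thm:repre}.
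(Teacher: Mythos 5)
Your proof is correct, and for assertion~(\ref{assert:tfae}) it follows essentially the paper's route: (a)$\Leftrightarrow$(c) via Propositions~\ref{prop:higman} and~\ref{prop:Dilworth}, and (a)$\Leftrightarrow$(b) via the production-sequence/bad-sequence correspondence, which is exactly what underlies the paper's appeal to $\otp(\X)=\dim\closure{\X}$. For assertion~(\ref{assert:coatomic}), however, you take a genuinely different route. The paper argues by contradiction: assuming no coatom lies above some nontop $C_0$, it repeatedly enlarges $C_0$ by witnesses $c_i$ of the failure of coatomicity and extracts an infinite production sequence of $\closure{\X}$, contradicting the finite elasticity already established in part~(\ref{assert:tfae}). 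You instead construct the coatom explicitly: you pick a $\preceq$-minimal element $x$ of the lower-closed set $X\setminus C_0$ (which exists because a \textsc{wqo} is well-founded) and take $C$ to be the complement of the principal downset of $x$. Your verification that $C$ is upper-closed, contains $C_0$, omits $x$, and is maximal among nontop elements is sound; the only cosmetic point is that the paper defines ``coatom'' by the condition that every nontop $c$ is either codisjoint from $C$ or below $C$, which is equivalent to maximality here because $\closure{\X}$ is closed under unions (if $c\cup C\ne X$ then $c\cup C$ is a nontop element above $C$, hence equals $C$). Your approach buys an explicit description of the coatoms of $\closure{\X}$ as complements of principal downsets of minimal elements and needs only well-foundedness of the \textsc{wqo}, while the paper's approach is more uniform with the rest of the section, reusing the production-sequence formalism and the finite elasticity just proved.
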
 
\begin{proof}
As for the assertion~\eqref{assert:tfae}, the equivalence between the conditions (a) and (b)
 follows from Theorem~\ref{thm:repr}~\eqref{assert:otp}. The equivalence between
 the conditions (a) and (c) is
  by Proposition~\ref{prop:higman} and Proposition~\ref{prop:Dilworth}.

\eqref{assert:coatomic} Let $\X=(X,\preceq)$. $\{\emptyset, X\}$ is obviously a coatomic,
 complete lattice. So assume $\closure{\X}\ne \{\emptyset, X\}$. By
 the assertion~\eqref{assert:otp},
 the complete lattice
 $\closure{\X}$ is an \textsc{fess}. 
If $\closure{\X}$ is not coatomic,
 then there exists $C_0 \in \closure{\X}\setminus\{X\}$ such that
\begin{equation}
\forall C\in \closure{\X}\setminus\{X\}\ \bigl(C_0\subseteq
 C\ \Longrightarrow\ \exists c \in\closure{\X}\ \left( c \cup C\ne
 X \ \&\  c \setminus
 C\ne\emptyset\right)\bigr).\label{raa:coatom}
\end{equation}
We can construct an infinite $\Langle \l x_0,C_1\r,\l x_1, C_2\r, \ldots\R\in \Prd{\closure{\X}}$
 as follows: Because $\closure{\X}\supsetneq\{\emptyset, X\}$,  we can
 take a pair of
 $C_1\in \closure{\X}\setminus\{X\}$ and $x_0\in C_1$
 such that $C_0\subseteq C_1$. Suppose we have a pair of $C_i\in
 \closure{\X}\setminus\{X\}$ and $x_{i-1}\in C_i$ such that
 $C_0\subseteq C_i$. Once we can find a pair of $C_{i+1}\in
 \closure{\X}\setminus\{X\}$ and $x_i\in C_{i+1}\setminus C_i$ such that
 $C_0\subseteq C_{i+1}$, then by
 iterating this process, we can construct an infinite production
 sequence of ${\closure{\X}}$.
 Because $C_0\subseteq C_i$ and 
 \eqref{raa:coatom}, there exist $c_i\in \closure{\X}$ and $x_i\in
 c_i\setminus C_i$ such that $c_i\cup C_i\ne X$. So, let
 $C_{i+1}:=C_i\cup c_i$. Then it is in $\closure{\X}\setminus\{X\}$ because
 $\closure{\X}$ is closed under the union. Moreover $x_i\in
 C_{i+1}\setminus C_i$ because $x_i\in c_i\setminus C_i$. Clearly
 $C_0\;\subseteq\; C_i\;\subseteq\; C_i\cup c_i=C_{i+1}$.
\qed
\end{proof}

This section suggests a close similarity between \textsc{wqo}s and 
finitely elastic set systems, so it is worth studying
 whether the
closure properties for \textsc{wqo}s solve the questions of which operation on set systems 
preserves finite elasticity. According to \cite{MR1189832},
the study on
closure properties for \textsc{wqo}s~(Higman's theorem for \textsc{wqo}s on finite sequences~\cite{MR0049867},
 Kruskal's theorem for \textsc{wqo}s on finite trees~\cite{MR0306057}, Nash-Williams' theorem for \emph{better-quasi-order}s
on transfinite sequences~\cite{MR0221949},...) can be advanced via set-theoretic topological methods and a Ramsey-type
argument. 
So, to advance the study on  the questions of which operation on set system 
preserves finite elasticity, 
it is natural for us to employ set-theoretic topology (see Section~\ref{sec:cont_image}) 
and a Ramsey-type argument~(see Section~\ref{sec:closureop}.)

\section{Continuous deformations of set systems}
\label{sec:cont_image}

For nonempty finite set $U$, the product topological space $\{0,1\}^U$
is called a \emph{Cantor} space. 
Subspaces of Cantor spaces are
represented by $\C, \D, \E,\ldots$.

\begin{definition}\label{def:inj} 
For every set system $\L\subseteq P(X)$, define a function 
\begin{displaymath}
 \inj\;:\; \L\to 
\inj \L:= \left\{1_L\in \P{\bigcup\L}\;;\; L\in
 \L\right\}\ ;\ L\mapsto 1_L\enspace.
\end{displaymath}
Then $\inj \L$ is a topological space, induced from a Cantor space
$\P{\bigcup \L}$.  For  $\C\subseteq \P{X}$, put 
\begin{displaymath}
\fld{\C}:=\bigcup \inj^{-1}(\C)\subseteq X. 
\end{displaymath}
\end{definition} 

Let us identify $g\in \D$ with an infinite sequence
 $\left(g(y)\right)_{y\in\fld{\D}}$. For each $x\in \fld{\C}$, let
 $\pi_x:\C\to\{0,1\}$ be the canonical projection to the $x$-th
 component. So $\pi_x(f)=f(x)$ for every $x\in\fld{\C}$. Recall that a
 Cantor space $\P{\fld{\C}}$ is generated by a class of sets
 $\pi_x^{-1}[\{b\}]$ such that $x\in\fld{\C}$ and $b\in\{0,1\}$. Let us
 call each $\pi_x^{-1}\left[ \{b\}\right]$ a generator of $\C$. Then an
 open set of $\C$ is exactly an arbitrary union of finite intersections
 of generators. Note that each generator of $\C$ is clopen.
 
A \emph{Boolean formula over a set $Y$} is built up from the truth values 0, 1, or elements of
$Y$,  by means of negation, finite conjunction, and finite disjunction.

\begin{lemma}\label{prop:b}
A function
$\op:\D\to\C$ is continuous, if and only if
 there is a sequence $(B_x)_{x\in \fld{\C}}$ of Boolean formulas over
 $\fld{\D}$  such that for every $g\in\D$ and every $x\in\fld{\C}$,  the value
 $\op(g)(x)$ is the truth value of $B_x$ under the truth assignment
 $g$. 
\end{lemma}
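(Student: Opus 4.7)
The plan is to prove the two directions separately: the ``if'' direction is componentwise and elementary, while the ``only if'' direction uses Tychonoff compactness of the ambient Cantor space to cut an open cover down to a finite one.

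For the ``if'' direction I would use that $\C$ inherits its topology from $\{0,1\}^{\fld{\C}}$, so $\op:\D\to\C$ is continuous iff every composition $\pi_x\circ\op:\D\to\{0,1\}$ is. Given a Boolean formula $B_x$ over $\fld{\D}$, $B_x$ mentions only finitely many variables $y_1,\dots,y_k\in\fld{\D}$, so $g\mapsto B_x(g)$ factors as the finite tuple of projections $\D\to\{0,1\}^{\{y_1,\dots,y_k\}}$ followed by a function on a finite discrete space. The former is continuous and the latter trivially so, hence each $\pi_x\circ\op$ is continuous, and so is $\op$.

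For the ``only if'' direction, fix $x\in\fld{\C}$. Since $\pi_x^{-1}[\{1\}]\cap\C$ is clopen in $\C$, the set $U_x:=\{g\in\D\;:\;\op(g)(x)=1\}$ is clopen in $\D$, as is its complement $V_x=\D\setminus U_x$. By the subspace topology there exist open $W,W'\subseteq\{0,1\}^{\fld{\D}}$ with $W\cap\D=U_x$ and $W'\cap\D=V_x$, and each is a union of basic clopens $\pi_{y_1}^{-1}[b_1]\cap\cdots\cap\pi_{y_k}^{-1}[b_k]$, i.e.\ truth sets of finite conjunctions of literals over $\fld{\D}$. Expanding $W$ into its basic clopens and adjoining $W'$ gives an open cover of $\D$; by Tychonoff compactness a finite subcollection $N_1,\dots,N_k$ of the basic clopens coming from $W$ suffices, so $U_x=\D\cap(N_1\cup\cdots\cup N_k)$, and the finite disjunction $B_x:=N_1\vee\cdots\vee N_k$ is a Boolean formula with $B_x(g)=\op(g)(x)$ for all $g\in\D$. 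Running this over every $x\in\fld{\C}$ produces the required sequence $(B_x)_{x\in\fld{\C}}$.

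The main obstacle is precisely this compactness step: passing from a potentially infinite union of basic clopens to a finite one while respecting the partition of $\D$ into $U_x$ and $V_x$ uses that $\D$ is compact in $\{0,1\}^{\fld{\D}}$. This is the same Tychonoff appeal that the paper makes in its subsequent extraction of the trace of a monotone continuous function; without it, clopen subsets of non-closed subspaces can depend genuinely on infinitely many coordinates (for instance the alternating subset of $\{1_{\{0,\dots,n\}}\}_{n}\subseteq\{0,1\}^{\Nset}$) and no finite Boolean formula would exist.
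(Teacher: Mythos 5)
Your proof follows essentially the same route as the paper's: the if-direction rests on a Boolean formula mentioning only finitely many coordinates of $\fld{\D}$, and the only-if direction uses compactness to reduce the clopen preimage $\op^{-1}\left[\pi_x^{-1}\left[\{1\}\right]\right]$, a priori an infinite union of basic clopen sets, to a finite union that can be read off as a finite disjunction of conjunctions of literals. The one caveat applies equally to both arguments: the compactness being invoked holds only when the domain is closed in $\{0,1\}^{\fld{\D}}$ (the paper's appeal to Tychonoff's theorem elides this), and your closing counterexample correctly identifies that the lemma genuinely requires this hypothesis, so flagging it is a point in your favour rather than a defect relative to the paper.
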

\begin{proof}
(If-part) The inverse
 image $\op^{-1}\left[ \pi_x^{-1}\left[\{b\}\right]\right]$ of a generator $\pi^{-1}_x\left[\{b\}\right]$ is the class of the
 truth assignments $g\in \D$ under which  the truth value of $B_x$  is $b$. Because the Boolean formula $B_x$ is equivalent to a finite
 disjunction of finite conjunctions of elements of $\fld{\D}$ and the
 negations of elements of $\fld{\D}$, the inverse image $\op^{-1}\left[
 \pi_x^{-1}\left[ \{b\}\right]\right]$ is just a finite union of finite
 intersections of generators of $\D$, while $\op^{-1}\left[
 \pi_x^{-1}\left[ \{0\}\right]\right]$ is just a finite intersection of
 finite unions of generators of $\D$. Therefore, the inverse image
 $\op^{-1}\left[ \pi_x^{-1}\left[\{b\}\right]\right]$ is open. 

(Only-if-part) Because $\op$ is continuous and $\{b\}$ $(b=0,1)$ is clopen in the
 finite discrete topology $\{0,1\}$, the inverse image $\op^{-1}\left[ \pi_x^{-1}\left[\{b\}\right]\right]$ of a
 generator $\pi_x^{-1}\left[\{1\}\right]$ by $\op$ is clopen, which is an
 arbitrary union of intersections of generators. 

Because $\{0,1\}$ is compact, Tychonoff's theorem implies the compactness
 of 
 $\{0,1\}^{\fld{\C}}$ and thus that of $\C$. Moreover, $\C$ is a
 Hausdorff space, because for all distinct $f,g\in\C$, there is
 $x\in\fld{\C}$ such that $f(x)\ne g(x)$, which implies that $
 \pi_x^{-1}\left[ \{ f(x)\}\right]$ and $ \pi_x^{-1}\left[ \{
 g(x)\}\right]$ are open sets such  that $f\in
 \pi_x^{-1}\left[ \{ f(x)\}\right]$ and $g\in \pi_x^{-1}\left[
 \{g(x)\}\right]$.

Since every closed subset of a compact Hausdorff space is compact, the
 clopen set $\op^{-1}\left[\pi_x^{-1}\{b\}\right]$  is
  $\bigcup_{i=1}^m\bigcap_{j=1}^{n_i} \pi_{y_{ij}}^{-1}\left[ \left\{
 b_{ij}\right\}\right]$ for some nonnegative integers $m, n_i$ $(1\le
 i\le m)$, some $y_{ij}\in \fld{\D}$, and some $b_{ij}\in \{0,1\}$
 ($1\le i\le m$, $1\le j\le n_i$.) So, define a Boolean formula over
 $\fld{\D}$ by $\bigvee_{i=1}^m\bigwedge_{j=1}^{n_i}
 (b_{ij}\leftrightarrow y_{ij})$, where each $b_{ij}\leftrightarrow y_{ij}$
 represents a Boolean formula $y_{ij}$ for $b_{ij}=1$ and the negation
 $\overline{y_{ij}}$ for $b_{ij}=0$. Clearly we have $\op(g)(x)=1$ iff
 $g\in \op^{-1}\left[\pi_x^{-1}[\{1\}]\right]$ iff 
 $g$ satisfies $B_x$.
\qed\end{proof}

For functions
$f,g\in \{0,1\}^Z$, we write $f\le g$ if $f(z)\le g(z)$ for all $z\in Z$.
\begin{definition}[Monotone functions]Let $\C\subseteq \P{X}$ and
 $\D\subseteq \P{Y}$.
We say a function $\op : \D \to \C$ is \emph{monotone},
if $f\le g$ implies $\op(f) \le \op(g)$.\end{definition} 

We say a \emph{Boolean formula} positive if it does not contain a
negation.

\begin{definition}
Let $\S$ and $\U$ be two (not necessarily distinct) sets of objects, and
$R$ be a $R\subseteq \S\times\inhfin{\U}$.
For $M\subseteq \U$ and $\M \subseteq P(\U)$, define
\begin{displaymath}
 R^{-1}[[M]]:= \left\{ s\;;\; \exists v\in \inhfin{M}.\  R(s,v)
 \right\},\ \ 
\
\widetilde{R^{-1}}[[\M]]:= \left\{ R^{-1}[[M]] \;;\; M\in \M \right\}.
\end{displaymath}
\label{def:bang}
Define $!\M:= \left\{
 \inhfin{M}\;;\; M\in \M\right\}$. Then $\bigcup !\M\subseteq \inhfin{\bigcup\M}$.
\end{definition}

\begin{lemma}\label{lem:or}
\begin{enumerate}
\item \label{thm:expcont} 
 Following conditions are equivalent:
\begin{enumerate}

\item A function $\op:\D\to \C$ is monotone and continuous.

\label{assert:poscon}

\item $\op$ is a function of $g\in\D$ and $x\in \fld{\C}$ such that it
      first produces a positive Boolean formula $B_x$ over
      $\fld\D$,  and then queries to an oracle
 $g$ whether $g$ satisfies $B_x$ or not.
\label{assert:posdnf}
\end{enumerate} 

\item \label{lem:Or}
 If
 $R\subseteq \fld{\C}\times \inhfin{\fld{\D}}$ is a finitely branching
 relation, then
\begin{equation}
 \op_R(g)(x):= \bigvee_{R(x,v)} \bigwedge_{y\in
 v}\left(g(y)=1\right),\quad(g\in \D,\  x\in \fld{\C}) \label{disjtt}
\end{equation} 
defines a monotone, continuous function from $\D$ to $\C$ such that,
\begin{equation}
 \widetilde{R^{-1}}[[\M]]\ =\  \inj^{-1}\left( \op_R\left[\inj
 \M\right]\right)\ \subseteq\ P\left(\fld{\C}\right),\qquad\left(\, \M \subseteq P(\fld{\D})\,.\right) \label{special}
\end{equation}
In fact, every monotone, continuous function from $\D$ to $\C$ is written as
\eqref{disjtt}.
\end{enumerate} 
\end{lemma}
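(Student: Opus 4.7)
My plan is to handle part (1) first and then derive part (2) from it. For the equivalence in part (1), direction (b)$\Rightarrow$(a) is immediate: Lemma~\ref{prop:b} yields continuity from any Boolean-formula representation, and a positive Boolean formula is manifestly monotone in its variables. For (a)$\Rightarrow$(b), start from the Boolean formulas $B_x$ supplied by Lemma~\ref{prop:b}, which represent $\op(g)(x)$ and, as its proof shows, depend on only finitely many variables of $\fld{\D}$. Monotonicity of $\op$ forces the finite Boolean function $g \mapsto B_x(g)$ to be monotone on $\D$; collecting the minimal supports in $\D$ on which $B_x$ is true and forming the disjunction of the corresponding conjunctions of positive literals replaces $B_x$ by an equivalent positive Boolean formula.

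For the forward direction of part (2), assume $R$ is finitely branching. For each $x$ the formula $\bigvee_{R(x,v)} \bigwedge_{y \in v} y$ is a finite positive Boolean formula (finite branching bounds the outer disjunction; each $v \in \inhfin{\fld{\D}}$ is finite), so by the equivalence just proved $\op_R$ is monotone and continuous. The identity \eqref{special} follows by direct unpacking: for $M \in \M$, $\op_R(1_M)(x) = 1$ iff some $v$ with $R(x,v)$ satisfies $v \subseteq M$, i.e., iff $x \in R^{-1}[[M]]$; hence $\op_R(1_M) = 1_{R^{-1}[[M]]}$, and $\inj^{-1}(\op_R[\inj\M]) = \widetilde{R^{-1}}[[\M]]$ as required.

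Conversely, given a monotone continuous $\op : \D \to \C$, apply part (1) to obtain positive Boolean formulas $B_x$. Put each $B_x$ in positive DNF as $\bigvee_i \bigwedge_{y \in v_i^x} y$ for a finite family of finite sets $v_i^x \subseteq \fld{\D}$, and set $R(x,v) :\iff v = v_i^x$ for some $i$. Then $R$ is finitely branching, and direct evaluation recovers $\op = \op_R$. The principal obstacle throughout is the (a)$\Rightarrow$(b) step in part (1): converting the possibly negated Boolean formulas from Lemma~\ref{prop:b} into positive equivalents using monotonicity, which rests on the classical positive-DNF representation of monotone Boolean functions on finitely many variables.
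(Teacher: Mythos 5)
Your handling of (b)$\Rightarrow$(a), of the forward half of part (2), and of the identity \eqref{special} is fine, and your overall architecture matches the paper's (whose entire proof of this lemma is the remark that part (1) follows from Lemma~\ref{prop:b} because positivity means absence of negation, and that part (2) follows from part (1)). The genuine gap is at the step you yourself single out as the principal obstacle, (a)$\Rightarrow$(b). The classical positive-DNF representation applies to Boolean functions that are monotone on the whole cube $\{0,1\}^V$, whereas monotonicity of $\op$ only constrains $B_x$ on pairs of points of $\D$ that are comparable in $\{0,1\}^{\fld{\D}}$ --- and two points of $\D$ can have comparable restrictions to $V$ without being comparable globally. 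Concretely, take $\fld{\D}=\{a,b,c\}$, $\D=\{1_{\{a,c\}},1_{\{b\}},1_{\{a,b\}}\}$, $\fld{\C}=\{x\}$, and $\op(g)(x)=g(a)\wedge\neg g(b)$. This $\op$ is continuous and monotone on $\D$ (the only comparable pair is $1_{\{b\}}\le 1_{\{a,b\}}$, where the values are $0\le 0$), and $B_x=a\wedge\neg b$ is a legitimate output of Lemma~\ref{prop:b} over $V=\{a,b\}$; but $B_x$ is not monotone on $\{0,1\}^V$, and your recipe (disjunction over minimal true points of $B_x$) yields the positive formula $a$, which evaluates to $1$ at $1_{\{a,b\}}$ although $\op(1_{\{a,b\}})(x)=0$. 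A correct positive formula here is $c$ --- it must use a variable outside $V$ --- so no argument that only massages $B_x$ over its own variable set can succeed.

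Worse, the implication cannot in general be repaired by enlarging the variable set: with $\fld{\D}=\{p,c_0,c_1,\dots\}$, $\D=\{1_{\{p\}}\}\cup\{1_{\{c_n\}}\;;\;n\in\Nset\}$ (an antichain, so every function on it is vacuously monotone) and $\op(g)(x):=1-g(p)$, the map $\op$ is continuous, yet any positive formula over a finite $W$ takes a constant value on every $1_{\{c_n\}}$ with $c_n\notin W$ (all its variables being $0$ there); if that value is $0$ it disagrees with $\op$ at such a $1_{\{c_n\}}$, and if it is $1$ then by positivity it equals $1$ at $1_{\{p\}}$ as well, again disagreeing with $\op$. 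So the step needs some additional input --- e.g.\ compactness (closedness) of $\D$ in the Cantor space, or enough comparable pairs in $\D$ --- and the argument that then works chooses, for each $g$ with $\op(g)(x)=1$ and each $h$ with $\op(h)(x)=0$, a coordinate where $g$ is $1$ and $h$ is $0$ (such a coordinate exists precisely by monotonicity of $\op$ on $\D$), and conjoins finitely many such witnesses. To be fair, the paper's one-sentence proof does not address this point either; but as written your argument does not establish (a)$\Rightarrow$(b), and the converse half of part (2) inherits the gap.
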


\begin{proof}(1)  By Lemma~\ref{prop:b}. Positivity of a Boolean formula is equivalent
 to absence of negation in the formula. (2) follows from (1).
\qed\end{proof} 

\begin{lemma}\label{lem:bang} $\widetilde{R^{-1}}[[\M]]= \widetilde{R^{-1}}[!\M]$.\end{lemma}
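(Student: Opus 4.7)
The plan is simply to unfold definitions; there is no real obstacle here, and indeed the lemma is stated precisely so that the ``bang'' notation $!\M$ lets one apply the single-bracket inverse image operator $\widetilde{R^{-1}}[\cdot]$ from \eqref{def:invimage} in place of the double-bracket operator $\widetilde{R^{-1}}[[\cdot]]$ from Definition~\ref{def:bang}.

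More concretely, I would fix $M \in \M$ and first verify, directly from the definitions, that
\begin{displaymath}
R^{-1}\bigl[[M]^{<\omega}\bigr] \;=\; \bigl\{\,s \;;\; \exists v \in [M]^{<\omega}.\ R(s,v)\,\bigr\} \;=\; R^{-1}[[M]].
\end{displaymath}
Here the first equality is an instance of the single-bracket definition in \eqref{def:invimage}, applied to $R$ viewed as a relation $R \subseteq \fld{\C} \times [\fld{\D}]^{<\omega}$ (equivalently $\S \times [\U]^{<\omega}$) and to the set $[M]^{<\omega} \in [\U]^{<\omega}$, and the second equality is the definition of $R^{-1}[[M]]$ from Definition~\ref{def:bang}.

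Next I would pass from individual sets to the set systems: by definition,
\begin{displaymath}
\widetilde{R^{-1}}[[\M]] \;=\; \{\,R^{-1}[[M]] \;;\; M \in \M\,\},
\end{displaymath}
while, applying the single-bracket operator to the set system $!\M = \{[M]^{<\omega} \;;\; M \in \M\}$,
\begin{displaymath}
\widetilde{R^{-1}}[!\M] \;=\; \{\,R^{-1}[N] \;;\; N \in !\M\,\} \;=\; \bigl\{\,R^{-1}\bigl[[M]^{<\omega}\bigr] \;;\; M \in \M\,\bigr\}.
\end{displaymath}
Combining these two displays with the pointwise equality from the first step gives the claim. The only thing worth flagging is the minor typographical overload between $\widetilde{R^{-1}}[\cdot]$ and $\widetilde{R^{-1}}[[\cdot]]$, which the proof should name explicitly so the reader sees that the lemma is literally a change of notation that will be convenient when invoking Kanazawa-style results on inverse images along finitely branching relations.
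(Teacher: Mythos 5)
Your proof is correct and follows essentially the same route as the paper's own one-line argument: both simply unfold the two definitions to see that $R^{-1}[[M]]=\{s\;;\;\exists v\in\inhfin{M}.\ R(s,v)\}$ is literally the single-bracket inverse image $R^{-1}\left[\inhfin{M}\right]$ of the element $\inhfin{M}\in\,!\M$, and then pass to the elementwise operators. Your explicit flagging of the notational overload is a fair presentational improvement but does not change the substance.
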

\begin{proof}$L\in \widetilde{R^{-1}}[[\M]]$ iff there exists
 $M\in\M$ such that $L=R^{-1}[[M]]=\{x\;;\; \exists v\in \inhfin{M}.\
 R(x,v)\}=R^{-1}\left[ \inhfin{M}\right]\in \widetilde{R^{-1}}\left[\,!\M\right]$.\qed\end{proof}

\begin{theorem}\label{thm:bang} If $\M$ is an \textsc{fess}, so is
$!\M$.
\end{theorem}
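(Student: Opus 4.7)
The plan is a direct reduction by contraposition: I will show that any infinite production sequence of $!\M$ can be converted, via a single index shift and a choice of witnesses, into an infinite production sequence of $\M$, so that the finite elasticity of $\M$ forces that of $!\M$.

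First I would unfold the definitions. The underlying set of $!\M$ is (a subset of) $\inhfin{\bigcup\M}$, and its members are the sets $\inhfin{M}$ with $M\in\M$. So an infinite production sequence of $!\M$ consists of finite sets $v_0, v_1, \ldots \in \inhfin{\bigcup\M}$ and members $M_1, M_2, \ldots \in \M$ satisfying $\{v_0, \ldots, v_{i-1}\} \subseteq \inhfin{M_i}$ and $v_i \notin \inhfin{M_i}$ for every $i \geq 1$. Because each $v_j$ is already finite, the first condition collapses to $v_j \subseteq M_i$ for $j<i$, and the second to $v_i \not\subseteq M_i$ for $i\ge 1$. In particular, $v_i \setminus M_i$ is nonempty for every $i\ge 1$.

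Next, for each $i\ge 1$ I would pick an element $t_i \in v_i \setminus M_i$. I then claim that the shifted sequence $\langle \langle t_1, M_2\rangle, \langle t_2, M_3\rangle, \ldots\rangle$ is an infinite production sequence of $\M$. Indeed, for every $i\ge 1$ and $1 \le j \le i$ we have $t_j \in v_j \subseteq M_{i+1}$ (since $j \le i < i+1$), hence $\{t_1, \ldots, t_i\} \subseteq M_{i+1}$; and by construction $t_{i+1} \in v_{i+1} \setminus M_{i+1}$, so $M_{i+1} \not\ni t_{i+1}$. This contradicts the finite elasticity of $\M$, forcing $!\M$ to be an \textsc{fess}.

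The argument involves no substantive obstacle; it is a pure index-chasing exercise. The only point requiring care is the shift by one: the initial example $v_0$ of the production sequence of $!\M$ may equal $\emptyset$, which lies in $\inhfin{M_i}$ for every $M_i$ and therefore cannot in general be converted into a legitimate first example of $\M$; it is simply discarded in the shift.
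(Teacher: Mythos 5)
Your proof is correct and follows essentially the same route as the paper's: argue by contraposition, observe that each $v_i\setminus M_i$ is nonempty, and extract from it a witness element to build an infinite elasticity sequence for $\M$. The only difference is cosmetic --- the paper selects its witnesses as a system of distinct representatives of the pairwise disjoint sets $v_i\setminus M_i$ via Hall's marriage theorem, whereas you pick arbitrary elements and shift the index past $v_0$; since distinctness is automatic in any infinite elasticity sequence, nothing is lost by your simpler selection.
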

\begin{proof} Otherwise there
exist an infinite sequence $v_0,v_1,\ldots$ of elements of $\bigcup
 !\M$ and an infinite sequence $\inhfin{M_1},
 \inhfin{M_2},\ldots$ of elements of $!\M$ such that for each $n\ge1$ we
 have $\{v_0,\ldots, v_{n-1}\}\subseteq\inhfin{M_n}\not\ni v_n$, which implies $\bigcup_{i=1}^{n-1}v_i\subseteq M_n\not\supseteq v_n$. Put
 $v_i':=v_i\setminus M_i$ ($i=0,1,\ldots$.) Then $v'_i\cap
 v'_j=\emptyset$ ($0\le i<j$) and each $v'_i$ is a nonempty finite
 set. Therefore $\{v'_i\;;\; i\in \Nset\}$ satisfies the Hall's condition
 of the marriage theorem~\cite[Theorem~3.41]{MR2440898}: for each finite set $F\subset\Nset$ we have $\#\left(
 \bigcup_{i\in F} v'_i \right)\ge \#F$. By the marriage theorem, $\{
 v'_i\;;\; i\in \Nset\}$ has a system of distinct representative
 $\{y_i\;;\;i\in\Nset\}$, i.e., $y_i\ne y_j$ ($0\le i<j$) and $y_i\in
 v'_i$ ($i=0,1,\ldots$.) Then for each $n\ge1$ $\{y_0,\ldots,
 y_{n-1}\}\subseteq \bigcup_{i=0}^{n-1}v_i \subseteq M_n$, while
 $y_n\not\in M_n$ because $y_n\in v'_n=v_n\setminus M_n$. This
 contradicts the \textsc{fe} of $\M$.\qed\end{proof}

The previous theorem generalizes Proposition~\ref{prop:5} which is
useful in Section~\ref{sec:cont_image}.

\begin{corollary}\label{thm:newkan}
Let $\M \subseteq P(\U)$ be an \textsc{fess} and let $R\subseteq \S\times
 \inhfin{\U}$ be a finitely branching relation. Then 
$\L= \widetilde{R^{-1}}[[\M]]\subseteq P(\S)$ is also an \textsc{fess}.
\end{corollary}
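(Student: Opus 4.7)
The plan is to derive this corollary as a straightforward concatenation of three ingredients already on the page: Lemma~\ref{lem:bang}, Theorem~\ref{thm:bang}, and Kanazawa's Proposition~\ref{prop:5}. The underlying idea is that the ``finite-subset'' flavour of $\widetilde{R^{-1}}[[\M]]$ can be absorbed into the set system itself by passing from $\M$ to $!\M$, after which the statement collapses to the classical finitely-branching inverse-image lemma.

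First I would rewrite the object of interest: by Lemma~\ref{lem:bang}, $\widetilde{R^{-1}}[[\M]] = \widetilde{R^{-1}}[\,!\M]$, so it suffices to show the right-hand side is an \textsc{fess}. Next I would reinterpret $R$ as an ordinary binary relation between the two ground sets $\S$ and $\inhfin{\U}$, taking $Y := \inhfin{\U}$ in the statement of Proposition~\ref{prop:5}. The finitely branching hypothesis $\#\{v \in \inhfin{\U} : R(s,v)\} < \aleph_0$ in the corollary is exactly the finitely branching condition needed for this reinterpretation. The set system $!\M = \{\inhfin{M} : M \in \M\}$ lives over $\inhfin{\U}$ and, by Theorem~\ref{thm:bang}, is an \textsc{fess}. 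Now Proposition~\ref{prop:5} applies verbatim and yields that the inverse image $\widetilde{R^{-1}}[\,!\M] \subseteq P(\S)$ of the \textsc{fess} $!\M$ under the finitely branching relation $R$ is again an \textsc{fess}.

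I do not foresee any genuine obstacle: the real combinatorial content (the marriage-theorem argument promoting finite elasticity from $\M$ to $!\M$) has already been absorbed into the proof of Theorem~\ref{thm:bang}, and Proposition~\ref{prop:5} handles the pull-back step. The only minor care point is the typing mismatch — Proposition~\ref{prop:5} is phrased for a relation $R \subseteq X \times Y$ between arbitrary sets, while here the second coordinate is itself a set of finite subsets — but this is cosmetic, since once we name $Y := \inhfin{\U}$, the hypotheses and conclusion transfer without modification.
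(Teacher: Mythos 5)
Your proof is correct and is precisely the paper's own argument: the paper's proof of Corollary~\ref{thm:newkan} reads ``By Theorem~\ref{thm:bang}, Lemma~\ref{lem:bang} and Proposition~\ref{prop:5},'' which is exactly the three-step chain you describe (rewrite $\widetilde{R^{-1}}[[\M]]$ as $\widetilde{R^{-1}}[\,!\M]$, promote $\M$ to the \textsc{fess} $!\M$, then apply Kanazawa's inverse-image lemma with $Y:=\inhfin{\U}$). Your observation about the typing of $R$ is also the right reading of how Proposition~\ref{prop:5} is being invoked.
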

\begin{proof} By Theorem~\ref{thm:bang},  Lemma~\ref{lem:bang} and
Proposition~\ref{prop:5}. \qed
\end{proof}
Conversely, Theorem~\ref{thm:bang} follows from
Corollary~\ref{thm:newkan} with $\U:=\bigcup \M$,
$\S:=\inhfin{\bigcup\M}$, 
and a following:
\begin{definition}\label{def:rbang}
\begin{displaymath}
R_!:=\left\{(v,v)\;;\; v\in \inhfin{\bigcup\M}\right\},\ \mbox{and}\ \ 
 !\M=\widetilde{R_!^{-1}}[[ \M]]
\end{displaymath}
\end{definition}

%


In terms of topology, the previous corollary becomes a following:

\begin{corollary}\label{cor:contimage}Assume $\L$ and $\M$ are set systems
 and  $\op:\inj \M\to \inj \L$
 is a monotone, continuous function. Then if $\M$ is an \textsc{fess},
 so is
$\inj^{-1} \left( \op\left[\inj \M\right]\right)$.
\end{corollary}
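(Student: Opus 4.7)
The plan is to translate the topological hypothesis on $\op$ into a combinatorial one about a finitely branching relation and then apply Corollary~\ref{thm:newkan}.

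First I would apply the ``in fact, every monotone, continuous function from $\D$ to $\C$ is written as'' clause of Lemma~\ref{lem:or}(\ref{lem:Or}) to the function $\op:\inj\M\to\inj\L$, regarded as a map between the subspaces $\inj\M\subseteq\P{\bigcup\M}$ and $\inj\L\subseteq\P{\bigcup\L}$ of Cantor spaces. This produces a finitely branching relation $R\subseteq(\bigcup\L)\times\inhfin{\bigcup\M}$ with $\op=\op_R$, where $\fld{\inj\L}=\bigcup\L$ and $\fld{\inj\M}=\bigcup\M$.

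Next, the identity~(\ref{special}) in the same part of the lemma gives
\begin{displaymath}
\inj^{-1}\bigl(\op[\inj\M]\bigr) \;=\; \inj^{-1}\bigl(\op_R[\inj\M]\bigr) \;=\; \widetilde{R^{-1}}[[\M]].
\end{displaymath}
Since $\M$ is assumed to be an \textsc{fess} and $R$ is finitely branching, Corollary~\ref{thm:newkan} (which in turn chains Theorem~\ref{thm:bang} with Kanazawa's Proposition~\ref{prop:5}) yields that $\widetilde{R^{-1}}[[\M]]$ is an \textsc{fess}. Combined with the displayed equality, this is exactly the desired conclusion.

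The only delicate point is the first step, which rests on the ``only if'' direction of Lemma~\ref{prop:b}: there, compactness of the ambient Cantor space $\P{\fld{\inj\L}}$ obtained from Tychonoff's theorem is used to express each clopen pre-image $\op^{-1}[\pi_x^{-1}\{1\}]$ as a finite union of finite intersections of generators, and monotonicity of $\op$ then guarantees that the resulting Boolean formula can be chosen positive, which in turn gives rise to the finitely branching relation $R$. Once this representation is in hand, the remainder of the argument is bookkeeping over results already proved in this section.
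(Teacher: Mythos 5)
Your proposal matches the paper's own proof essentially step for step: both invoke Lemma~\ref{lem:or}~\eqref{lem:Or} to realize $\op$ as $\op_R$ for a finitely branching relation $R\subseteq(\bigcup\L)\times\inhfin{\bigcup\M}$, identify $\inj^{-1}\left(\op\left[\inj\M\right]\right)$ with $\widetilde{R^{-1}}[[\M]]$ via \eqref{special}, and conclude by Corollary~\ref{thm:newkan}. The closing remarks about compactness and positivity correctly trace where Lemma~\ref{prop:b} enters, so the argument is sound and is the same route the paper takes.
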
 

\begin{proof} By Lemma~\ref{lem:or}~\eqref{lem:Or}, 
$ \op(1_M)(x) = \bigvee_{R(x,v)} \bigwedge_{y\in v } (y\in M)$
where $R\subseteq \bigcup\L\times \inhfin{\bigcup \M}$ is a finitely branching relation. Therefore
we have
$L=R^{-1}[[M]]\iff L=\{x\;;\; \exists v\left( R(x,v)\ \&\ v\subseteq M\right)\} \iff
 1_L=\op (1_M)$. Hence the family $\inj^{-1}\left( \op\left[ \inj \M
 \right]\right)$ is $\widetilde{R^{-1}}[[\M]]$, which is an \textsc{fess} by Corollary~\ref{thm:newkan}.
\qed
\end{proof} 

Although the mind-change complexity of
language identification from positive data is characterized by using
the
\emph{\pit}~\cite{luo06:_mind_chang_effic_learn,MR2640836,brecht09:_topol_and_algeb_aspec_of},
Corollary~\ref{cor:contimage} does not hold for \pit.  Recall that the
\pit\ $\C\subseteq \{0,1\}^X$ is induced by the product
topology of the topology $\{0,1\}$ where the only nontrivial open subset
of $\{0,1\}$ is $\{1\}$. So the basic open sets of the \pit\ are
\begin{equation}
U^\C_F=\{ f\in \C ; f [ F] = \{1\}\}\  \mbox{where $F$ is an arbitrary
 finite subset of $X$}.  \label{u}
\end{equation} 
Let us abbreviate ``continuous with respect to the \pit'' by ``$\Pi$-continuous.''

\begin{lemma}A monotone, continuous function is $\Pi$-continuous.  
\end{lemma}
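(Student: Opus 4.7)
The plan is to reduce $\Pi$-continuity of $\op:\D\to\C$ to checking preimages of subbasic open sets for the \pit\ on $\C$, and then invoke the characterization of monotone continuous functions provided by Lemma~\ref{lem:or}. A subbase for the \pit\ on $\C$ is given, by~\eqref{u}, by the singleton-indexed sets $U^\C_{\{x\}}=\{f\in\C\;;\;f(x)=1\}$ for $x\in\fld{\C}$, so by the standard subbase criterion for continuity it suffices to show $\op^{-1}\bigl(U^\C_{\{x\}}\bigr)$ is \pit-open in $\D$ for each such $x$.

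By Lemma~\ref{lem:or}(\ref{thm:expcont}), monotonicity together with Cantor-continuity of $\op$ yields, for each $x\in\fld{\C}$, a \emph{positive} Boolean formula $B_x$ over $\fld{\D}$ such that $\op(g)(x)$ is the truth value of $B_x$ under the assignment $g$; equivalently, by Lemma~\ref{lem:or}(\ref{lem:Or}), there is a finitely branching relation $R\subseteq\fld{\C}\times\inhfin{\fld{\D}}$ with
\begin{equation*}
\op(g)(x)=1\iff\exists v\in\inhfin{\fld{\D}}\bigl(R(x,v)\ \&\ g[v]=\{1\}\bigr).
\end{equation*}
Hence $\op^{-1}\bigl(U^\C_{\{x\}}\bigr)=\bigcup_{v\,:\,R(x,v)}U^\D_v$, which is a union of basic \pit-open sets in $\D$ and is therefore \pit-open. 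This establishes $\Pi$-continuity of $\op$.

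I expect the only conceptual point is to recognize that \emph{positivity} of $B_x$---equivalently, the absence of negated literals in the normal-form expansion, guaranteed by monotonicity---is exactly what ensures that $\op^{-1}\bigl(U^\C_{\{x\}}\bigr)$ is describable using only conditions of the form ``$g(y)=1$''. A literal ``$g(y)=0$'' would demand $\{g\in\D\;;\;g(y)=0\}$, which is Cantor-open but not \pit-open, and the reduction would fail. Finite branching of $R$ plays no role here, so the argument needs no compactness or Ramsey-type machinery beyond what Lemma~\ref{lem:or} has already packaged.
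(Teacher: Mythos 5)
Your proof is correct and follows essentially the same route as the paper's: both invoke Lemma~\ref{lem:or} to express $\op(g)(x)$ via a positive Boolean formula (equivalently a finitely branching relation $R$) and observe that positivity makes the preimage a union of sets of the form $U^\D_v$. The only cosmetic difference is that you reduce to subbasic sets $U^\C_{\{x\}}$ via the subbase criterion, whereas the paper handles a general basic set $U^\C_F$ directly and uses the finiteness of $F$ to close the argument.
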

\begin{proof} Let $\op:\D\to\C$ be  a monotone, continuous function 
 and let $U_F^\C$ be a basic open set  of the \pit\ $\C$ where
$F$ is a finite subset of $\fld\C$.  By
 Lemma~\ref{lem:or}~\eqref{thm:expcont}, there are positive Boolean
 formulas $B_x$ over $\fld{D}$ ($x\in F$) such that for every $F\in
 \inhfin{\fld{\C}}$ the inverse image $\op^{-1}\left[ U^\C_F \right]$ 
is $\bigcap_{x \in F}\;\{ g\in \D\;;\; g\ \mbox{satisfies}\ B_x
\}$. Observe that each $B_x$ is equivalent to $\bigvee_{i=1}^{n_x}\bigwedge
 F_{x,i}$ for some $n_x\ge0$ and some $F_{x,i}\in \inhfin{\fld{\D}}$
 ($1\le i\le n_x$.) Therefore $\op^{-1}\left[ U^\C_F\right]$ is
 $\bigcap_{x\in F} \bigcup_{i=1}^{n_x} U^\D_{F_{x,i}}$, which is open
 with respect to \pit\ 
 because $F$ is finite. \qed \end{proof}

Recall that $\SGL=\{\{n\}\;;\;n\in\Nset \}$ is an \textsc{fess}. 
For $L\subseteq\Nset$, let $\downarrow L\subseteq \Nset$ be the downward
closure $\{n\;;\; n\le m\ (\exists m\in L)\}$ of $L$. To decide whether $n\in \downarrow L$, we must carry out
\emph{unbounded} search to find some $m \in L\cap [n,\,\infty)$.
\begin{theorem}\label{thm:7}
\begin{enumerate}
\item A function $\op_\downarrow\;:\; \P{\Nset}\to \P{\Nset}$ that sends
$1_L$ to $1_{\downarrow L}$ is monotone and $\Pi$-continuous but
 $\inj^{-1} \op_\downarrow \left[\,\inj \SGL\,\right]$ is not an
\textsc{fess}; and

\item There is a non-monotone, continuous, non-$\Pi$-continuous
function $\op_\neg:\P\Nset\to\P\Nset$  such that
 $\inj^{-1} \op_\neg \left[\,\inj \SGL\,\right]$ is not an
\textsc{fess}.
\end{enumerate}
\end{theorem}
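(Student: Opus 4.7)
The plan is to handle each part by exhibiting an explicit witness function, checking the required topological and monotonicity conditions, and then identifying the image set system with one already known (via Example~1 or a direct production sequence) not to be an \textsc{fess}.

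For (1), define $\op_\downarrow : \P{\Nset} \to \P{\Nset}$ by $\op_\downarrow(f)(n) = 1$ iff there exists $m \ge n$ with $f(m) = 1$. Monotonicity is immediate from $L \subseteq L' \Rightarrow \downarrow L \subseteq \downarrow L'$. For $\Pi$-continuity, I would fix a basic $\Pi$-open set $U^\C_F$ with $F \in \inhfin{\Nset}$ and set $N := \max F$; the key observation is that a single witness $m \ge N$ serves every $n \in F$ simultaneously, so
\begin{displaymath}
\op_\downarrow^{-1}\left[U^\C_F\right] \;=\; \bigcup_{m \ge N} U^\C_{\{m\}},
\end{displaymath}
a union of basic $\Pi$-open sets. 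Finally, $\inj^{-1}\op_\downarrow[\inj \SGL] = \{\{0, 1, \ldots, n\} : n \in \Nset\} = \DCL$, which is not an \textsc{fess} by Example~1(5).

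For (2), I would take $\op_\neg(f)(n) := 1 - f(n)$, so $\op_\neg(1_L) = 1_{\Nset \setminus L}$. This is patently non-monotone (in fact antitone), and it is Cantor-continuous because the preimage of each generator $\pi_n^{-1}[\{b\}]$ is the generator $\pi_n^{-1}[\{1-b\}]$. The image $\inj^{-1}\op_\neg[\inj \SGL] = \{\Nset \setminus \{n\} : n \in \Nset\}$ has infinite elasticity, witnessed by $t_i := i$ and $L_i := \Nset \setminus \{i\}$, since $\{t_0, \ldots, t_{i-1}\} = \{0, \ldots, i-1\} \subseteq L_i \not\ni t_i$.

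The only mildly delicate point is non-$\Pi$-continuity of $\op_\neg$. I would examine the preimage $\op_\neg^{-1}\left[U^\C_{\{0\}}\right] = \{f \in \P{\Nset} : f(0) = 0\}$ and argue that no basic $\Pi$-open set $U^\C_F$ can be contained in it. Indeed, if $0 \in F$ then $U^\C_F$ forces $f(0) = 1$ and is disjoint from $\{f : f(0)=0\}$; while if $0 \notin F$, then any $f \in U^\C_F$ can be altered to take value $1$ at coordinate $0$ while still satisfying $f[F] = \{1\}$, so $U^\C_F$ contains points outside $\{f : f(0) = 0\}$. Hence the preimage admits no $\Pi$-open neighbourhood of any of its points, so it is not $\Pi$-open and $\op_\neg$ fails to be $\Pi$-continuous.
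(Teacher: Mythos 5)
Your proof is correct and follows essentially the same route as the paper: the same two witness functions, the same identifications of the images with $\DCL$ and $\{\Nset\setminus\{n\}\;;\;n\in\Nset\}$, and the same style of argument for establishing $\Pi$-continuity in (1) and refuting it in (2). The only difference is cosmetic: your preimage formula $\op_\downarrow^{-1}\left[U^\C_F\right]=\bigcup_{m\ge\max F}U^\C_{\{m\}}$ exploits the linearity of $\le$ on $\Nset$ (one large witness serves all of $F$) and is simpler than the paper's union over subsets $B\subseteq F$, which is the form that would generalize to down-closures of arbitrary quasi-orders.
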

\begin{proof}A basic open set \eqref{u} with $\C=\P\Nset$ is simply
 written $U_F$ below:
(1)
The monotonicity of $\op_{\downarrow}$ is obvious. The function $\op_\downarrow$ is $\Pi$-continuous, because for
      every basic open set $U_F$ with finite $F\subseteq \Nset$, the
      inverse image by $\op_\downarrow$ is an open set
$\bigcup_{B\subseteq F}\left(
      U_{F\setminus B} \cap \bigcap_{b\in B}\bigcup_{x>b}
      U_{\{x\}} \right)$ where $\bigcap_{b\in B}\cdots$ is $\P\Nset$ if
      $B=\emptyset$. However $\inj^{-1} \op_\downarrow\left[\, \inj\,{\SGL}\, \right] =
\DCL$ is not an
      \textsc{fess}.
\medskip

(2)
Moriyama-Sato~\cite{735102} observed that the elementwise complement
does not preserve the \textsc{fe} of set systems. Define
\begin{displaymath}
\op_\neg(g)(x)=1-g(x).
\end{displaymath} Then 
$\inj^{-1} \op_\neg[\,\inj \SGL\,]=\{ \Nset \setminus\{y\}\ ;\ y\in\Nset\}$ has an infinite
elasticity: $0,\ \Nset\setminus\{1\},\ 1,\ \Nset\setminus\{2\}, 2,\
\ldots.$ If $\op_\neg$ is $\Pi$-continuous, then 
 $\op_\neg^{-1}\left[ U_{\{0\}} \right]$ should be $\bigcup_G U_G$ where $G$ ranges over a certain class of finite
 subsets of $\Nset$. For such a finite set $G$, $g=1_G$ belongs to the
 inverse image by $\op_{\neg}$, but the support should be
 $\Nset\setminus \{0\}$. Contradiction.\qed
\end{proof}

\section{The order types of nondeterministically deformed set systems}
\label{sec:closureop}

 We present a typical application of
Corollary~\ref{cor:contimage}, and answer Question~\ref{q4} ``How much do such operations increase the order
type of set systems?'' by a Ramsey number argument.

Fix an alphabet $\Sigma$. To know whether a word $w$ belongs to the
Kleene closure 
$L^*=\bigcup_{n\ge0} L^n$ of a language $L$, we need to guess $n$ nondeterministically. Nondeterministic
operations such as the Kleene closure operator $(\cdot)^*$ and the
shuffle-closure operator $(\cdot)^\circledast$ are representable by
monotone, continuous functions.  So Corollary~\ref{cor:contimage} is useful
in deriving the following:
\begin{equation}
\M \subseteq P(\Sigma^*)\ \mbox{is an \textsc{fess}}\ \Rightarrow\
 \ewstar{\M}\ \mbox{and}\ \ewplus{\M}\  \mbox{are \textsc{fess}s}. \label{kleene}
\end{equation}

Let us see the proof to generalize for the case of the shuffle-closure.
Assume $\M$ is an \textsc{fess}. Let $\varepsilon$ be the empty word and let $\op_1
(1_M):=1_{M\setminus\{\varepsilon\}}$ and $\op_3
(1_M):=1_{M\cup\{\varepsilon\}}$. Then $\op_1$ and $\op_3$ are monotone
and continuous. Let $\op_2(1_L)$ be computed by a Turing machine with the
oracle tape being $1_L$ as follows: if an input $s\in\Sigma^*$ is
$\varepsilon$ then the oracle Turing machine returns 0. Otherwise, 
it tries to find a partition
$s_1,\ldots, s_m$
of $s$ such that $s=s_1\cdots s_m$,
$ln(s_i)>0$ $(1\le i\le m)$,
$m\ge1$ and $\{s_1,\ldots, s_m\}\subseteq L$. If such a partition is
found, then the oracle Turing machine returns 1, and 0 otherwise.
The number of queries the oracle Turing
machine makes is bounded by the number of partitions of $s$, which implies the continuity of $\op_2$. It is easy to see $\op_2$ is monotone.  Observe
 $\op_2(1_L)=
1_{L^+}$ for all $L\subseteq
\Sigma^*\setminus\{\varepsilon\}$. We can prove, for every
$M\subseteq\Sigma^*$, 
\begin{displaymath}
\left(M\setminus\{\varepsilon\}\right)^+ = M^+
\setminus\{\varepsilon\},\qquad \left( M\setminus \{\varepsilon
\}\right)^+\cup\{ \varepsilon \} = M^*.
\end{displaymath}
So we have
$\op_3\circ\op_2\circ\op_1(1_M)=1_{M^*}$. 
By Corollary~\ref{cor:contimage}, $\ewstar{\M}$ is an \textsc{fess}.

Assume $\ewplus{\M}$ has an infinite production sequence $\Langle \l t_0,
M_1^+ \r, \pair{t_1}{M_2^+}, \ldots\R$. Note that there is at most one $i$ such
that $t_i=\varepsilon$. Removal of such $\pair{t_i}{M_i}$ from the
infinite production sequence of $\ewplus{\M}$ results in still an infinite
production sequence of $\ewplus{\M}$. By adjoining the empty word
$\varepsilon$ to each language in the infinite production sequence, we
have an infinite production sequence of $\ewstar{\M}$, because
$M^+\cup\{\varepsilon\} = M^*$. But this is a contradiction against the
\textsc{fe} of $\ewstar{\M}$. So, $\ewplus{\M}$ is an \textsc{fess}.
\medskip

Remind that to find such a partition can be done by a nondeterministic
computation. 
We can prove the counterpart of \eqref{kleene} for the 
shuffle-closures $(\cdot)^\circledast$, as follows:

\begin{corollary}\label{cor:shuffle}
If $\L \subseteq P(\Sigma^*)$ is an \textsc{fess}, so is $\L^{\widetilde{\circledast}}$.
\end{corollary}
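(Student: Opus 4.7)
The plan is to mimic the derivation \eqref{kleene} of the Kleene-closure case: I will produce a single monotone, continuous function $\op:\P{\Sigma^*}\to\P{\Sigma^*}$ satisfying $\op(1_L)=1_{L^\circledast}$ for every $L\subseteq\Sigma^*$, and then invoke Corollary~\ref{cor:contimage} with source $\M:=\L$ and target $\L^{\widetilde{\circledast}}$.

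To define $\op$ I would, for each $f\in\P{\Sigma^*}$ and each $s\in\Sigma^*$, set $\op(f)(s)=1$ when $s=\varepsilon$; and otherwise set $\op(f)(s)=1$ iff there exist an integer $m\ge 1$ together with a partition $\{1,\ldots,|s|\}=P_1\uplus\cdots\uplus P_m$ of the positions of $s$ into nonempty blocks such that, writing $s_{P_k}$ for the subword obtained by reading the positions of $P_k$ in increasing order, $f(s_{P_k})=1$ for every $k\le m$. Because $s$ has finite length, the collection of such partitions is finite, so $\op(f)(s)$ is the truth value under $f$ of a finite positive Boolean formula $B_s=\bigvee_{P}\bigwedge_{k\le m(P)} s_{P_k}$ over $\Sigma^*$. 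By Lemma~\ref{lem:or}~\eqref{thm:expcont}, $\op$ is therefore monotone and continuous.

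Next I would verify the identity $\op(1_L)=1_{L^\circledast}$ by a direct combinatorial argument: for $s=\varepsilon$ both sides equal $1$; for $s\ne\varepsilon$, any partition of the positions of $s$ into nonempty blocks $P_1,\ldots,P_m$ exhibits $s$ as a shuffle $s_{P_1}\diamond\cdots\diamond s_{P_m}$, and conversely any shuffle witness $s\in\ell_1\diamond\cdots\diamond\ell_m$ with $\ell_k\in L$ induces such a partition after discarding empty factors (using $\varepsilon\diamond w=\{w\}$). Thus $\op(1_L)(s)=1$ iff $s\in L^\diamond\cup\{\varepsilon\}=L^\circledast$, and Corollary~\ref{cor:contimage} then yields $\inj^{-1}\bigl(\op[\inj\L]\bigr)=\{L^\circledast\;;\;L\in\L\}=\L^{\widetilde{\circledast}}$ as an \textsc{fess}.

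The only delicate point is the combinatorial translation between shuffle witnesses of $s\in L^\circledast$ and partitions of the positions of $s$; once this bijection is in place the disjunction defining $B_s$ is finite and positive, so the continuity lemma applies and the corollary closes the argument. Note that, unlike in the Kleene-closure derivation \eqref{kleene}, no auxiliary $\varepsilon$-removal maps $\op_1,\op_3$ are needed, because empty factors in a shuffle decomposition can always be dropped without changing the resulting word.
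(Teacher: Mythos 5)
Your proof is correct and follows essentially the same route as the paper: realize the shuffle-closure as the image of $\L$ under a monotone, continuous function (continuity coming from the fact that a word of finite length admits only finitely many partitions of its positions, hence a finite positive Boolean formula / boundedly many oracle queries) and then apply Corollary~\ref{cor:contimage}. The only difference is cosmetic: the paper factors the map as $\op_3\circ\op_2\circ\op_1$ with explicit $\varepsilon$-removal and $\varepsilon$-adjunction and verifies $(M\setminus\{\varepsilon\})^\diamond\cup\{\varepsilon\}=M^\circledast$, whereas you build a single map and absorb that bookkeeping into the observation that empty shuffle factors may be discarded.
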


\begin{proof} The proof is similar to that of
 \eqref{kleene} except $\op_2(1_L)$ is computed by another Turing
 machine with the oracle tape being $1_L$ as follows: if an input $s\in
\Sigma^*$ is $\varepsilon$, then it returns 0.  Otherwise, it tries to find a sequence $s_1,\ldots, s_m$ such that $s$ is an ``interleaving
 merge'' of $s_1,\ldots, s_m$,  $ln(s_i)>0$ $(1\le i\le m)$, $m\ge1$ and
 $\{s_1,\ldots, s_m\}\subseteq L$. Then $\op_2$ is clearly monotone and continuous. Moreover $\op_2(1_L)= 1_{L^\diamond}$ for every
 $L\subseteq \Sigma^*\setminus\{\varepsilon\}$.  We can prove, for every
 $M\subseteq\Sigma^*$, 
\begin{displaymath}
\left(M\setminus\{\varepsilon\}\right)^\diamond = M^\diamond
\setminus\{\varepsilon\}, \qquad \left( M\setminus \{\varepsilon
\}\right)^\diamond\cup\{ \varepsilon \} = M^\circledast.
\end{displaymath}   So we have
$\op_3\circ \op_2\circ\op_1(1_M)=1_{M^\circledast}$. 
By Corollary~\ref{cor:contimage}, we have done. \qed
\end{proof}

 We can see that \eqref{kleene} also holds for tree languages~\cite{267871}.

\bigskip
Next we answer Question~\ref{q4} ``How much do such operations increase
the order type of set systems?'' by a Ramsey number argument.

The finitely branching relation 
\begin{equation}R_n(s,u):\iff \bigvee_{i=0}^{n-1} \ \left(u=\left\{\langle
 s,i\rangle\right\}. \right)\qquad (n=2,3,\ldots)  \label{prod_to_ewunion}
\end{equation}
satisfies
$\widetilde{R_n^{-1}}\left[[ \L_1\ewuplus\cdots \ewuplus \L_n ]\right] = \L_1 \ewunion
 \cdots\ewunion \L_n$.
So, if $\L_i$'s are all \textsc{fess}s, then so is $\L_1\ewuplus\cdots
\ewuplus \L_n$ by Lemma~\ref{lem:a}. By Corollary~\ref{thm:newkan}, $\L_1 \ewunion
 \cdots\ewunion \L_n$ is an \textsc{fess}, too.

On the other hand, in \cite{93373}, Wright proved that ``if $\L_1$ and $\L_2$ are
\textsc{fess}s, then so is $\L_1\ewunion\L_2$,'' by using Ramsey
theorem ``for any dichromatic coloring of an infinite complete graph,
there is a monochromatic infinite complete subgraph.'' By adapting his
proof, we can provide an explicit upper bound of the dimension  by using
a 
\emph{Ramsey number}~\cite{MR1044995}:
\begin{proposition}[\protect{\cite[Sect~1.1]{MR1044995}}]
 For all positive integers $n, l_1,\ldots,l_n$, there exists a positive integer $k$ such
 that any edge-coloring with colors $1,\ldots,n$ for the complete graph of
 size $k$
 has a  complete subgraph of size $l_i$ colored homogeneously by some
 color $i\in \{1,\ldots,n\}$ . Such minimum integer $k$, denoted by $\Ram(l_1,\ldots,l_n,n)$, is called the \emph{Ramsey
 number of $l_1,\ldots,l_n$}. When $l_1=\cdots=l_n$, we call it the \emph{$n$-adic
 diagonal Ramsey number of $l_1$}, and 
write it as
 $\Ram(l_1; n)$.
For the sake of convenience, put $\Ram(l;1)=l$ for every nonzero ordinal number
 (and hence every positive integer) $l$.
\end{proposition}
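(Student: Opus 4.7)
The plan is to prove the classical finite Ramsey theorem by induction on the sum $s = l_1 + l_2 + \cdots + l_n$, with $n \geq 2$ held fixed. The case $n = 1$ is handled by the stated convention $\Ram(l;1) = l$, so I only need to treat $n \geq 2$. The base case of the induction is when some $l_i = 1$: here any single vertex is vacuously a monochromatic $K_1$ in color $i$, so $\Ram(l_1,\ldots,l_n;n) = 1$ works.

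For the inductive step, assume $n \geq 2$ and $l_i \geq 2$ for every $i$, so that by the induction hypothesis the quantities
\[
N_i := \Ram(l_1, \ldots, l_{i-1}, l_i - 1, l_{i+1}, \ldots, l_n;\, n)
\]
all exist. Set $N := N_1 + N_2 + \cdots + N_n - n + 2$; I would claim $\Ram(l_1,\ldots,l_n;n) \leq N$. To check it, take any edge-coloring of the complete graph on $N$ vertices with the palette $\{1,\ldots,n\}$, pick a distinguished vertex $v$, and partition the remaining $N-1$ vertices into classes $V_1,\ldots,V_n$ according to the color of the edge joining each one to $v$. Since $|V_1|+\cdots+|V_n| = N - 1 = (N_1-1)+\cdots+(N_n-1)+1$, pigeonhole forces $|V_i| \geq N_i$ for some $i$. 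Apply the induction hypothesis to the coloring restricted to $V_i$: either one already finds a monochromatic $l_j$-clique in some color $j \neq i$ (done), or one finds a monochromatic $(l_i - 1)$-clique in color $i$ inside $V_i$, to which adjoining $v$ produces a monochromatic $l_i$-clique in color $i$ (because $v$ is joined to every vertex of $V_i$ by a color-$i$ edge).

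The main obstacle is not conceptual but combinatorial bookkeeping: one must choose $N$ just large enough that the pigeonhole step is guaranteed to deliver at least one $V_i$ of size $\geq N_i$. The choice $N = \sum_i N_i - n + 2$ is the tightest value that works from the identity $|V_1|+\cdots+|V_n| = N-1$, and it is exactly the classical Erd\H{o}s--Szekeres recursion for multicolor Ramsey numbers. The diagonal specialization $\Ram(l;n)$ is the particular case $l_1 = \cdots = l_n = l$; no separate argument is needed for it.
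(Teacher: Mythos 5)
Your proof is correct: the induction on $l_1+\cdots+l_n$, the base case with some $l_i=1$, and the pigeonhole step with $N=\sum_i N_i-n+2$ all check out, and adjoining the distinguished vertex $v$ to the $(l_i-1)$-clique in color $i$ closes the argument. The paper does not prove this proposition at all---it is imported from Graham--Rothschild--Spencer---and your argument is exactly the standard Erd\H{o}s--Szekeres recursion given in that cited source, so there is nothing to compare beyond noting that your bound is consistent with the binomial estimate $\Ram(l,m)\le\left(\begin{array}{l}{m+l-2}\\{l-1}\end{array}\right)$ the paper quotes immediately afterwards.
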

By \cite[Section~4.2]{Graham.Rothschild.ea:80},
$\Ram(l,m)\le
\left(\begin{array}{l}{m+l-2}\\{l-1}\end{array}\right)\le {c4^{\max(l,m)}}/{\sqrt{\max(l,m)}}$ for some constant $c$.

\begin{lemma}\label{thm:ramseyfn} For every positive integer $n$, if $\dim \L_i<\omega$ $(i=1,\ldots,n)$, then
\[
\dim\left(\L_1 \ewunion \cdots \ewunion \L_n\right) +1 <
 \Ram(\dim(\L_1)+2,\ldots, \dim(\L_n)+2).
\]
\end{lemma}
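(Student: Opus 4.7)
The plan is to adapt the Ramsey-theoretic argument behind Proposition~\ref{prop:msw} into a quantitative bound. Write $d_c := \dim\L_c$ for $c = 1, \ldots, n$, set $D := \dim(\L_1 \ewunion \cdots \ewunion \L_n)$, and put $R := \Ram(d_1 + 2, \ldots, d_n + 2)$. By Proposition~\ref{prop:msw}, $D < \omega$, so I may fix a longest production sequence $\langle \pair{t_0}{L_1}, \ldots, \pair{t_{D-1}}{L_D}\rangle$ in $\L_1 \ewunion \cdots \ewunion \L_n$ together with, for each $j$, a decomposition $L_j = M_{j,1} \cup \cdots \cup M_{j,n}$ with $M_{j,c} \in \L_c$. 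I will argue by contradiction that $D + 1 \ge R$ is impossible.

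The first step is to introduce an edge-coloring $\chi$ of the complete graph on the vertex set $\{0, 1, \ldots, D\}$ of size $D + 1$: for $0 \le i < j \le D$, the condition $L_j \supseteq \{t_0, \ldots, t_{j-1}\}$ forces $t_i \in L_j = \bigcup_c M_{j,c}$, so set $\chi(\{i,j\}) := \min\{c : t_i \in M_{j,c}\}$. Assuming $D + 1 \ge R$, Ramsey's theorem supplies some color $c$ and vertices $i_0 < i_1 < \cdots < i_{d_c + 1}$ forming a monochromatic clique of size $d_c + 2$ in color $c$.

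From this clique I would extract
\[
\langle \pair{t_{i_0}}{M_{i_1,c}},\ \pair{t_{i_1}}{M_{i_2,c}},\ \ldots,\ \pair{t_{i_{d_c}}}{M_{i_{d_c+1},c}}\rangle,
\]
a candidate production sequence of length $d_c + 1$ in $\L_c$, and verify both defining conditions. Monochromaticity in color $c$ yields the containment $\{t_{i_0}, \ldots, t_{i_{a-1}}\} \subseteq M_{i_a, c}$ for $a = 1, \ldots, d_c + 1$. The non-containment $M_{i_a, c} \not\ni t_{i_a}$, required for $a = 1, \ldots, d_c$, follows from $M_{i_a, c} \subseteq L_{i_a}$ together with the original condition $L_{i_a} \not\ni t_{i_a}$; the latter applies because $a \le d_c$ combined with $i_a < i_{d_c + 1} \le D$ forces $1 \le i_a \le D - 1$, placing $i_a$ in the range where the original non-containment holds. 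Hence the extracted sequence is a valid production sequence of length $d_c + 1$ in $\L_c$, contradicting $\dim \L_c = d_c$.

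The main subtlety is the correct choice of the vertex set: enlarging the natural set $\{0, \ldots, D-1\}$ to $\{0, \ldots, D\}$ by including a ``final'' vertex $D$ associated with the last language $L_D$ is what yields the strict inequality $D + 1 < R$. The extra vertex contributes one more index to the extracted clique, hence one additional pair in the derived production sequence, and this extra pair is legitimate precisely because the last pair of a production sequence is exempt from the non-containment requirement. Using only $D$ vertices would give the weaker bound $D < R$.
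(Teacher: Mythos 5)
Your proof is correct and follows essentially the same Ramsey-number argument as the paper: color each pair $\{i,j\}$ by the least $c$ with $t_i\in M_{j,c}$ and extract from a monochromatic clique of size $d_c+2$ a production sequence of length $d_c+1$ in $\L_c$, contradicting $\dim\L_c=d_c$. Your explicit use of the $(D+1)$-element vertex set $\{0,\dots,D\}$ — with the last vertex exempt from the freshness requirement — is precisely what makes the bound strict (the paper's write-up is slightly off by one here), and you treat general $n$ in one stroke where the paper does $n=2$ first and then cascades the colors, which amounts to the same ``minimum color'' rule.
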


\begin{proof}When $n=1$ the assertion is trivial.
Consider the case $n=2$.
  Suppose that $k + 1 \ge 
\Ram(\dim(\L)+2,\, \dim(\M)+2)$, and
suppose there are a sequence $t_0,\ldots,t_{k-1}$, a sequence $L_1,\ldots,
L_k$ of $\L$
and a sequence $M_1,\ldots, M_k$ of $\M$ such that
\begin{equation}
 \{t_0,\ldots,t_{i-1} \}\subseteq L_i\cup M_i\
  (i=1,2,\ldots,k)\quad\mbox{and}\quad L_j \not \ni
t_j\quad(j=1,\ldots k-1.)
\label{Nseq:dag}
\end{equation}

By the definition of Ramsey number, 
\begin{displaymath}
k \ge \dim (\L) + 1, \dim (\M) +1.  
\end{displaymath}

Consider a complete graph $G$ with the vertices being ${0,\ldots, k-1}$.
For any edge $\{i,j\} (i\ne j)$,
color it by red if $0\le i<j\le k-1$ and $t_i\in L_{j}$,
while color it by black otherwise.  

Assume $k + 1\ge \Ram(\dim(\L)+2,\ \dim(\M)+2)$. By Ramsey's theorem, the
colored complete graph $G$ has either a red clique of size $\dim(\L)+2$ or
a black clique of size $\dim(\M)+2$. When a red clique of size
 $\dim(\L)+2$ exists, write it
 as $\{u_0<\cdots< u_{\dim(\L)+1}\}$. Then we have
$\{t_{u_0},\ldots,t_{u_{i-1}}\}\subseteq L_{u_i}$ 
$(i=1,\ldots, \dim(\L)+1)$ but $L_{u_j}
\not \ni
 t_{u_j}\quad(j=1,2,\ldots, \dim(\L))$,
which contradicts the definition of $\dim \L $. 

Otherwise, a black clique
 of size $\dim(\M)+2$ exists, so we
write it
 as $\{u_0<\cdots<u_{\dim(\M)+1}\}$. Then we have
$\{t_{u_0},\ldots,t_{u_{i-1}}\}\cap L_{u_i}=\emptyset\quad(
 i=1,\ldots, \dim(\M)+1)$.
By \eqref{Nseq:dag}, we
 have 
$\{t_{u_0},\ldots,t_{u_{i-1}}\}\subseteq
 L_{u_i}\cup M_{u_i}$ and $L_{u_j}\cup M_{u_j} \not\ni t_{u_j}$
 $(j=1,\ldots, \dim (\M))$, so 
$\{t_{u_0},\ldots,t_{u_{i-1}}\}\subseteq  M_{u_i}$ and $M_{u_j} \not\ni
 t_{u_j}$ $(j=1,\ldots, \dim\M)$,
which contradicts the definition of $\dim(\M)$.  

Consider the case $n\ge3$. Suppose that $k+1\ge \Ram(\dim \L_1+2,\ldots,
 \dim \L_n+2)$, and suppose there are a sequence $t_0,\ldots,
 t_{k-1}$ and a sequence $L_1^{(l)},\ldots, L_k^{(l)}$ of $\L_l$ ($1\le
 l\le n$) such that $\{t_0,\ldots, t_{i-1}\}\subseteq \bigcup_{l=1}^n
 L_i^{(l)}$ $(i=1,\ldots,k)$ and $\bigcup_{l=1}^n L_j^{(l)}\not\ni t_j$ ($j=1,\ldots,k-1$.)
For any edge $\{i,j\}$ $(i\ne j)$, if $0\le
 i<j\le k-1$ and $t_i\in L_j^{(1)}$, color $\{i,j\}$ by the color 1;  else if 
 $0\le i<j\le k-1$ and $t_i\in L_j^{(2)}$, color it by the color 2; else
 if $\ldots$; else if $0\le i<j\le k-1$ and $t_i\in L_j^{(n-1)}$,
 color it by the color $n-1$;
 else color $\{i,j\}$ by the color $n$. Then apply
 the same argument as above.
\qed
\end{proof}

The lemma generalizes for any relation $R$ with  $\sup_s \#\{ v\;;\;
R(s,v)\}\le n$. 
\begin{theorem}\label{cor:akama}
Assume $R\subseteq \bigcup\L \times \inhfin{\bigcup\M}$ has a bound $n\ge 1$ of $\#\{ v\;;\; R(x,v)\}$ $(x\in\bigcup\L)$. If $\M$ is an
 \textsc{fess}, then
\begin{displaymath}
 \dim \widetilde{R^{-1}}[[\M]] + 1 <  \Ram(\dim \M+2; n),
\end{displaymath}
provided $\dim \M$ is finite or $n=1$. Actually, when $n=1$, 
\begin{equation}
\dim\ \widetilde{R^{-1}}\left[[\M]\right] \le \dim \ \M  \label{n1}
\end{equation}where
the equality holds if  each $y\in \bigcup\M$ has $\xi(y)\in \bigcup\L$
 such that $R(\xi(y),\{y\})$. 
\end{theorem}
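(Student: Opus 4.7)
The proof naturally splits into three regimes: the single-valued case $n=1$, the equality claim under the $\xi$ hypothesis, and the general Ramsey bound for $n\ge2$ with $\dim\M$ finite.

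For $n=1$ the plan is to build, from any production sequence $\Langle\pair{t_0}{L_1},\ldots,\pair{t_{l-1}}{L_l}\R$ of $\widetilde{R^{-1}}[[\M]]$ with a chosen representation $L_i=R^{-1}[[M_i]]$, a matching production sequence of $\M$. Single-valuedness lets me assign to each $t_k$ the unique finite subset $v_k\subseteq\bigcup\M$ with $R(t_k,v_k)$; the production conditions then force $v_k\subseteq M_i$ for every $i>k$ and $v_k\not\subseteq M_k$ for $k\ge1$, so picking $y_k\in v_k\setminus M_k$ and $y_0\in v_0$ assembles $\Langle\pair{y_0}{M_1},\ldots,\pair{y_{l-1}}{M_l}\R$ as a production sequence of $\M$. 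Fixing a choice of representatives $M_i$ makes the assignment a tree embedding, and Fact~\ref{fact:TE} yields the inequality \eqref{n1}.

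For the equality I plan to run the construction backwards: given a production sequence $\Langle\pair{y_0}{M_1},\ldots,\pair{y_{l-1}}{M_l}\R$ of $\M$, set $t_i=\xi(y_i)$ and $L_i=R^{-1}[[M_i]]$. The inclusion $t_k\in L_i$ for $k<i$ is immediate from $R(\xi(y_k),\{y_k\})$ together with $y_k\in M_i$; for the fresh-example condition $t_j\notin L_j$, single-valuedness forces the only $v$ with $R(\xi(y_j),v)$ to be $\{y_j\}$, and $\{y_j\}\not\subseteq M_j$ because $y_j\notin M_j$. This yields a tree embedding in the reverse direction, giving equality.

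For the Ramsey regime I will adapt the colouring of Lemma~\ref{thm:ramseyfn}. Assuming for contradiction a production sequence $\Langle\pair{t_0}{L_1},\ldots,\pair{t_{k-1}}{L_k}\R$ with $k+1\ge\Ram(\dim\M+2;n)$, I fix $M_i\in\M$ with $L_i=R^{-1}[[M_i]]$ and enumerate, for each $i$, the at most $n$ finite sets $v_i^{(1)},\ldots,v_i^{(n_i)}$ witnessing $R(t_i,\cdot)$. For each edge $\{i,j\}$ with $i<j$ the condition $t_i\in L_j$ gives some $v_i^{(a)}\subseteq M_j$, and I colour the edge by the least such $a$. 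Ramsey's theorem supplies a monochromatic clique $\{u_0<u_1<\cdots<u_{\dim\M+1}\}$ of some colour $a^*$, for which $v_{u_i}^{(a^*)}\subseteq M_{u_j}$ whenever $i<j$ in the clique, while $v_{u_i}^{(a^*)}\not\subseteq M_{u_i}$ for $i\ge1$ (from $t_{u_i}\notin L_{u_i}$). Picking $y_i\in v_{u_i}^{(a^*)}\setminus M_{u_i}$ for $i\ge1$ and $y_0\in v_{u_0}^{(a^*)}$ then produces $\Langle\pair{y_0}{M_{u_1}},\ldots,\pair{y_{\dim\M}}{M_{u_{\dim\M+1}}}\R$ as a production sequence of $\M$ of length $\dim\M+1$, contradicting $\dim\M$. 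The main subtlety will be the design of the colouring: indexing by the choice of witness at the \emph{lower} endpoint of each edge (rather than by a joint feature of $(i,j)$) is precisely what forces monochromaticity to deliver, at each clique vertex $u_i$, a \emph{single} witness $v_{u_i}^{(a^*)}$ that simultaneously lies inside every later $M_{u_j}$; degenerate cases where some $v_i^{(a)}$ is empty can be absorbed by restricting to non-empty witnesses, which leaves $\widetilde{R^{-1}}[[\M]]$ unchanged up to a common subset added to every member.
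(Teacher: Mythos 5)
Your handling of the $n=1$ inequality and of the equality criterion is essentially the paper's own argument: fix a representative $M_i$ with $L_i=R^{-1}[[M_i]]$, use single-valuedness to extract the unique witness $v_i$ of $t_i$, observe $v_i\subseteq M_j$ for $j>i$ while $v_i\not\subseteq M_i$, pick $y_i\in v_i\setminus M_i$, and conclude via Fact~\ref{fact:TE}; the reverse embedding through $\xi$ is likewise identical. (The paper additionally routes the choice of the $y_i$ through Hall's marriage theorem to make them distinct, but as your write-up tacitly exploits, distinctness is automatic from $y_i\in M_j\not\ni y_j$ for $i<j$, so nothing is lost.) Where you genuinely diverge is $n\ge2$: the paper does \emph{not} re-run Ramsey's theorem on $R$ directly. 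It splits $R=\bigcup_{i=1}^{n}R_i$ into single-valued relations, notes $\widetilde{R^{-1}}[[\M]]\subseteq\widetilde{R_1^{-1}}[[\M]]\ewunion\cdots\ewunion\widetilde{R_n^{-1}}[[\M]]$, applies Lemma~\ref{thm:ramseyfn} to that union, and finishes with the already-proved \eqref{n1} and monotonicity of $\Ram$. Your single colouring---edge $\{i,j\}$, $i<j$, labelled by which witness of $t_i$ lands inside $M_j$---is a correct direct generalization of the colouring inside Lemma~\ref{thm:ramseyfn}: monochromaticity in colour $a^*$ pins down, at each clique vertex, one witness contained in every later $M_{u_j}$ but not in its own $M_{u_i}$, yielding a length-$(\dim\M+1)$ production sequence of $\M$. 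The paper's modular route buys reuse of \eqref{n1} and the sharper non-diagonal intermediate bound $\Ram(\dim\widetilde{R_1^{-1}}[[\M]]+2,\ldots,\dim\widetilde{R_n^{-1}}[[\M]]+2)$ before relaxing to the diagonal; yours buys self-containedness. Two blemishes, both shared with the paper and neither fatal: (i) the complete graph should be taken on the $k+1$ indices $0,\ldots,k$ (the top vertex needs only $M_k$, never $t_k$), since with only $k$ vertices the hypothesis $k+1\ge\Ram(\dim\M+2;n)$ does not license Ramsey's theorem---Lemma~\ref{thm:ramseyfn} has the same off-by-one; and (ii) the empty-witness degeneracy you flag can occur only at index $0$, because for $i\ge1$ the freshness condition $t_i\notin L_i$ forces every witness of $t_i$ to be nonempty; it is cleaner to dispose of it by that observation (together with a separate one-step treatment of $t_0$) than by your ``restrict to non-empty witnesses'' manoeuvre, which modifies every member of $\widetilde{R^{-1}}[[\M]]$ by a common subset and does not obviously preserve $\dim$.
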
 

\begin{proof}
To show the inequality for $n=1$,  by Fact~\ref{fact:TE}, it is
 sufficient to build an embedding $f$ from
a well-founded tree $\Prd{\widetilde{R^{-1}}[[\M]]}$ to a well-founded tree $\Prd{ \M }$.
Suppose
\begin{displaymath}
a=\Langle
 \pair{s_0}{L_1},\ldots, \pair{s_{l-1}}{L_l}\R\in
 \Prd{\widetilde{R^{-1}}[[\M]]}.
\end{displaymath} 
 For each $L\in \widetilde{R^{-1}}[[\M]]$, choose  $M(L)$ from $\left\{M\in
 \M\ ;\ L=R^{-1}[[M]]\right\}\ne\emptyset$.
For each $i=0,\ldots,l-1$, because $n=1$, there exists exactly one $v_i$ such that
 $R(s_i,v_i)$ and $v_i \in \inhfin{M(L_{i+1})}$. Since $s_i\not\in L_i$, 
 $v_i\not\subseteq M(L_i)$. Because the class of finite sets $v_i\setminus
 M(L_i)$ satisfies the Hall's condition of the marriage~\cite[Theorem~3.41]{MR2440898} theorem, we have a
 system $\{y_i\;;\;i\in \Nset\}$ of distinct representative.
 Obviously $y_i\not\in M(L_i)$.
Define
\begin{displaymath}
f(a):=\Langle \pair{y_0}{M(L_1)}\,  ,\ldots, \pair{y_{l-1}}{M(L_l)}\,\R.
\end{displaymath}
 We have indeed $f(a) \in \Prd{\M}$, because for $0\le i<j\le l$, since
 $s_i\in L_j$, $y_i\in v_i\subseteq M(L_j)$. 
The mapping $f$ is indeed injective by the
 construction. Clearly $f$ preserves the greatest upper bounds.

The verification of the \emph{equality} is as follows:
By $n=1$ and the assumption of Theorem~\ref{cor:akama}, we have $\exists
 v\in \inhfin{M_i}.\ R(\xi(y_j), v)\iff \{y_j\}=v\subseteq M_i$. So
 $\xi$ is injective. Moreover \begin{equation}
\xi(y_j)\in  R^{-1}[[M_i]] \iff y_j\in M_i. \label{hosihosi}
\end{equation}
Define a function $g$ as:
\begin{eqnarray*}
& b=\Langle \pair{y_0}{M_1},\ldots,\pair{ y_{l-1}}{ M_l}\R \in \Prd{ \M }\\
\mapsto\ &g(b):=  \left\langle \langle \xi(y_0),  R^{-1}[[M_1]]\,\rangle,\ldots,\pair{ \xi(y_{l-1})}{ R^{-1}[[M_l]]}\R. 
\end{eqnarray*}
Then $g(b)\in \Prd{ \widetilde{R^{-1}}\left[[\M]\right] }$ by \eqref{hosihosi}. 
The injectivity of $g$  is from that of $\xi$.
The preservation of glb's by $g$  is easy.

Next we prove the case where $n>1$ and $\dim \M<\omega$.
There are  relations $R_i\subseteq \bigcup \L\times \inhfin{\bigcup \M}$
 such that for all $x\in \bigcup\L$ and all $v\in \inhfin{\bigcup\M}$
\begin{displaymath}
 R=\bigcup_{i=1}^n R_i\ \ \mbox{and}\ \ \#\{
 v\;;\; R_i(x,v)\}\le 1.
\end{displaymath}
 Then for all $M\in\M$, we have
$R^{-1}[[M]]= \bigcup_{i=1}^n  R_i^{-1}[[M]]$,
 because the left-hand side is
$\{ s\in \bigcup\L\;;\; \exists v\subseteq M.\ R(s,v) \}
         = \bigcup_{i=1}^n \ \{ s\in \bigcup\L\;;\; \exists
 v\subseteq M.\ R_i(s,v) \}$ which is the right-hand side. So
we have 
\begin{displaymath}
\widetilde{R^{-1}}[[\M]]\subseteq \widetilde{R_1^{-1}}[[\M]]\ewunion\cdots\ewunion
\widetilde{ R_n^{-1}}[[\M]].
\end{displaymath} By
 Lemma~\ref{thm:ramseyfn},
 we have 
\begin{displaymath}
\dim \widetilde{R^{-1}}[[\M]]+1\ \le\ \Ram(\dim \widetilde{R_1^{-1}}[[\M]]+2,\ldots,\dim
\widetilde{ R_n^{-1}}[[\M]]+2).
\end{displaymath} 
Since we have already proved \eqref{n1}, we can use \eqref{n1} to derive
 $\dim \widetilde{R_i^{-1}}[[\M]] \le \dim \M < \omega$.  The monotonicity
 of $\Ram$ concludes the desired consequence. \qed
\end{proof}

We will use Theorem~\ref{cor:akama} again to derive the linearization
~(Corollary~\ref{cor:linearbound}) of \textsc{wqo}s.

Let us see an example of the inequality \eqref{n1} of Theorem~\ref{cor:akama}.

\begin{corollary}\label{ew:cap}
Let $\L$ and $\M$ be \textsc{fess}s.
\begin{enumerate}
\item \label{ew:cap:cap}
Let $R$ be $\bigl\{\;\left(s, \left\{\pair{s}{s}\right\}\right)\;;\; s\in
 \bigcup \L \cap \bigcup \M\bigr\}$.
\begin{enumerate}
\item \label{assrt:cap:1} $\widetilde{R^{-1}}\left[[\L\ewprod\M]\right]
      =\L\ewcap \M $. 
\item \label{assrt:2} $\L\ewprod\M$ is an \textsc{fess}.

\item \label{assrt:3}
If $\dim \L<\omega$ and $\dim\M<\omega$, then 
\begin{displaymath}
 \dim(\L\ewprod\M)\ge\dim\L + \dim\M -1\ge \dim(\L\ewcap \M).
\end{displaymath} 
The inequalities are best possible.
\end{enumerate} 

\item \label{ew:cap:bang}$\dim \M=\dim\; !\M$.
\end{enumerate} 
\end{corollary}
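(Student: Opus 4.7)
Part~(\ref{assrt:cap:1}) is by unfolding: since $R(s,v)$ holds iff $v=\{\langle s,s\rangle\}$, we have $R^{-1}[[L\times M]]=\{s:\{\langle s,s\rangle\}\subseteq L\times M\}=L\cap M$, and varying $L\in\L$, $M\in\M$ gives $\widetilde{R^{-1}}[[\L\ewprod\M]]=\L\ewcap\M$. For part~(\ref{assrt:2}) I would apply Corollary~\ref{cor:contimage} to the monotone continuous function $\op(g)(\langle a,b\rangle):=g(\langle a,0\rangle)\wedge g(\langle b,1\rangle)$ from $\{0,1\}^{\bigcup(\L\ewuplus\M)}$ to $\{0,1\}^{\bigcup\L\times\bigcup\M}$; since $\op(1_{L\uplus M})=1_{L\times M}$, the direct image equals $\inj(\L\ewprod\M)$, and $\L\ewuplus\M$ is an \textsc{fess} by Lemma~\ref{lem:a}. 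Part~(\ref{ew:cap:bang}) follows immediately from the equality clause of Theorem~\ref{cor:akama} applied to $R_!$ with $n=1$ and $\xi(y):=\{y\}\in\bigcup\,!\M$, which clearly satisfies $R_!(\xi(y),\{y\})$.

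The substantive content is part~(\ref{assrt:3}). For the lower bound $\dim(\L\ewprod\M)\ge\dim\L+\dim\M-1$, fix production sequences $\langle t_0,L_1,\ldots,t_{m-1},L_m\rangle$ of $\L$ and $\langle u_0,M_1,\ldots,u_{n-1},M_n\rangle$ of $\M$ of respective lengths $m=\dim\L$ and $n=\dim\M$. Then the length-$(m+n-1)$ sequence of examples $\langle t_0,u_0\rangle,\ldots,\langle t_{m-1},u_0\rangle,\langle t_{m-1},u_1\rangle,\ldots,\langle t_{m-1},u_{n-1}\rangle$ paired with hypotheses $L_1\times M_1,\ldots,L_m\times M_1,L_m\times M_2,\ldots,L_m\times M_n$ is a production sequence of $\L\ewprod\M$: in the first phase $(j\le m-1)$ freshness is witnessed by $L_j\not\ni t_j$, in the second phase $(j\ge m)$ by $M_{j-m+1}\not\ni u_{j-m+1}$, and every required inclusion is inherited coordinatewise from the two original sequences.

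For the upper bound $\dim\L+\dim\M-1\ge\dim(\L\ewcap\M)$, take a production sequence $\langle t_0,C_1,\ldots,t_{k-1},C_k\rangle$ of $\L\ewcap\M$ with fixed representation $C_i=L_i\cap M_i$. Classify each $j\in\{1,\ldots,k-1\}$ as $L$-bad if $t_j\notin L_j$, else $M$-bad (so $t_j\notin M_j$), giving disjoint sets $J_L,J_M$ with $|J_L|+|J_M|=k-1$. Enumerating $J_L=\{s_1<\cdots<s_p\}$, the pairs $\langle t_0,L_{s_1}\rangle,\langle t_{s_1},L_{s_2}\rangle,\ldots,\langle t_{s_{p-1}},L_{s_p}\rangle,\langle t_{s_p},L_k\rangle$ form a production sequence of $\L$ of length $p+1$: each inclusion $\{t_0,t_{s_1},\ldots,t_{s_{r-1}}\}\subseteq L_{s_r}$ is contained in the original $\{t_0,\ldots,t_{s_r-1}\}\subseteq L_{s_r}$, and freshness $L_{s_r}\not\ni t_{s_r}$ is precisely the $L$-bad condition. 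Hence $p+1\le\dim\L$, symmetrically $|J_M|+1\le\dim\M$, and summing gives $k\le\dim\L+\dim\M-1$. Tightness is witnessed by $\L=\M=\{\{0\},\{0,1\}\}$, which saturates the first inequality with $\dim(\L\ewprod\M)=3$, and by $\L=\{\{0\},\{0,1\}\}$, $\M=\{\{0,1\}\}$, which saturate the second with $\L\ewcap\M=\{\{0\},\{0,1\}\}$ of dimension $2$. The main obstacle is this reindexing step: one must anchor each $L$-bad index $s_r$ to hypothesis $L_{s_r}$ (not $L_{s_r+1}$) so that freshness aligns, and append the final hypothesis $L_k$ to collect all selected examples simultaneously.
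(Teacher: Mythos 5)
Your proof is correct, and apart from one part it follows the paper's own route: part~(\ref{assrt:cap:1}) by unfolding, part~(\ref{ew:cap:bang}) via the equality clause of Theorem~\ref{cor:akama} with $R_!$ and $\xi(y)=\{y\}$, the lower bound in~(\ref{assrt:3}) by concatenating the two production sequences along an ``L-shape'' in the product (the paper varies the $\M$-coordinate first and then the $\L$-coordinate; you do the mirror image, which checks out the same way), and the upper bound in~(\ref{assrt:3}) by the same disjoint classification of bad indices into $L$-bad and $M$-bad, extraction of two production sequences of lengths $p+1$ and $(k-1-p)+1$, and summation. Your tightness witnesses differ from the paper's ($\L=\M=\{\{1\}\}$ for the first inequality, and $\L=\{\emptyset,\{0\},\{0,1,2\}\}$, $\M=\{\emptyset,\{1\},\{0,1,2\}\}$ for the second), but both of yours are easily verified to attain equality.

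The genuinely different step is part~(\ref{assrt:2}). The paper argues directly: given an infinite production sequence of $\L\ewprod\M$, either cofinitely many of the freshness witnesses fall on the $\M$-coordinate, yielding an infinite production sequence of $\M$, or infinitely many fall on the $\L$-coordinate, yielding (after passing to a subsequence) an infinite production sequence of $\L$. You instead push $\L\ewuplus\M$ (an \textsc{fess} by Lemma~\ref{lem:a}) through the monotone continuous map $1_{L\uplus M}\mapsto 1_{L\times M}$, whose trace assigns to each point $\langle a,b\rangle$ the single two-element set $\{\langle a,0\rangle,\langle b,1\rangle\}$, and invoke Corollary~\ref{cor:contimage}. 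This is a clean use of the machinery the paper develops and is arguably more in its spirit (it exhibits $\ewprod$ as a ``discoloration'' of $\ewuplus$ by a sequential, non-linear deformation); what it costs is reliance on Lemma~\ref{lem:a} and the marriage-theorem argument behind Corollary~\ref{cor:contimage}, whereas the paper's case analysis is self-contained and elementary. Both are valid.
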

\begin{proof}\eqref{assrt:cap:1} is immediate. To prove \eqref{assrt:2},
assume $\L \ewprod \M$ is not an \textsc{fess}. Then  we have an infinite production sequence
\begin{displaymath}
\Bigl\langle \l (t_0,p_0), L_1\times M_1\r,\ \l(t_1,p_1),\, L_2\times M_2\r,
  \ldots \Bigr\rangle\in \Prd{\L \ewprod \M}.
\end{displaymath}
 When $k:=\sup \{i\;;\;
 L_i\not\ni t_i\}<\infty$, then for all $i>k$, we have $M_i\not\ni
 p_i$, and thus an infinite production sequence
$\Bigl\langle \l p_k, M_{k+1}\r , \l p_{k+1},  M_{k+2}\r,   \ldots \Bigr\rangle$ of
 $\M$, contradicting the \textsc{fe} of $\M$. Otherwise, we have an
 infinite sequence $i_0, i_1, \ldots$ such that 
$\Bigl\langle \l t_{i_0}, L_{i_1}\r,\ \l t_{i_1},\  L_{i_2}\r, \ldots \Bigr\rangle\in \Prd{\L}$,  contradicting the
 \textsc{fe} of $\L$. 

\medskip
To show  $\dim (\L \ewprod \M)\ge \dim \L + \dim \M -1$ of \eqref{assrt:3},
let 
\begin{eqnarray*}
&\Bigl\langle \l t_0, L_1\r,\ \l t_1, L_2\r,\ldots, \l t_{l-1},L_l\r\Bigr\rangle\in
 \Prd{\L},\\
&\Bigl\langle \l p_0, M_1\r, \l p_1, M_2\r,\ldots,
\l p_{m-1},M_m\r\Bigr\rangle\in\Prd{\M}.
\end{eqnarray*} Then the class $\L \ewprod \M$ has a following
production sequence consisting of $(l+m-1)$ members of $\L \ewprod \M$:
\[
\begin{array}{llllllllllllllll}
\Bigl\l \l (t_0,p_0), L_1\times M_1\r,  \qquad\l (t_0,p_1), L_1\times M_2\r,
\ \  \ldots,\ \ \l (t_0,p_{m-1}),\  L_1\times M_m\r,\\
\ \ \l (t_1,p_{m-1}), L_2\times M_m \r,\ \l (t_2,p_{m-1}), L_3\times M_m \r, \ldots, \l (t_{l-1},p_{m-1}), L_l\times M_m\r\ \Bigr\rangle.
\end{array}
\]
Thus $l+m-1\le \dim (\L\ewprod\M)$. Since $\dim \L<\omega$ and
 $\dim\M<\omega$, we have $\dim\L + \dim \M -1\le \dim (\L\ewprod\M)$.
The equality is attained by $\L=\M=\{\{1\}\}$.

\medskip
To verify the inequality $\dim \L + \dim \M -1\ge \dim(\L\ewcap\M)$   of \eqref{assrt:3},
 let 
\begin{displaymath}
\Bigl\langle \langle t_0, L_1\cap M_1\r,\ \l t_1,  L_2\cap M_2\r, 
 \ldots, \l t_{n-1}, L_n\cap M_n\rangle\Bigr\rangle\in \Prd{\L \ewcap \M}.
\end{displaymath} Then $t_i\in \bigcup\L\;\cap\; \bigcup\M$, and
for every positive integer $i\le n$, we have
$\left\{ t_0,\ldots, t_{i-1}\right\}\subseteq L_i\cap M_i\not\ni
t_i$. So for each positive $i\le n-1$, $L_i\not\ni t_i$ or $M_i\not\ni t_i$. Let
$i_1,\ldots,i_l$ be the strictly ascending list of positive integers $i$ such
 that $L_i\not\ni t_i$, and $j_1,\ldots,j_m$ be the strictly ascending list of positive integers $j$ such
 that $M_j\not\ni t_j$.  Then $i_l\ne n$, $j_m\ne n$, and so
 $\Bigl\langle \l t_0, L_{i_1}\r, \l t_{i_1}, L_{i_2}\r,\ldots, \l
 t_{i_{l-1}},  L_{i_l}\r, \l t_{i_l}, L_n\r \Bigr\rangle\in \Prd{\L}$,  and
 $\Bigl\langle \l t_0, M_{j_1}\r, \l t_{j_1}, M_{j_2}\r,\ldots, \l
 t_{j_{m-1}},  M_{j_m} \r, \l t_{j_m}, M_n\r\Bigr\rangle\in \Prd{
 \M}$. Therefore $l+1\le \dim \L$ as well as $m+1\le \dim \M$. Because $n-1\le l+m
$,  we have $\dim (\L\ewcap \M) -1 \le (\dim \L -1) + (\dim \M -1)$, from
 which the conclusion follows.
The latter inequality of Corollary~\ref{ew:cap}~\eqref{assrt:3} is best
possible. The equality holds for 
\begin{equation}
\L=\{\emptyset, \{0\}, \{0,1,2\}\}\ \mbox{and}\ 
\M=\{\emptyset, \{1\}, \{0,1,2\}\},\label{eg:best}\end{equation}
because
\begin{equation}
 \L\ewcap \M=\{\emptyset, \{0\}, \{1\}, \{0,1,2\}\},\ 
  \dim(\L\ewcap\M)=3,\ \dim\L=\dim\M=2.
\label{why}
\end{equation}

 The assertion \eqref{ew:cap:bang} holds, because of Definition~\ref{def:rbang}, $n=1$, and $\xi(y)=\{y\}$. 
\qed
\end{proof}

There are many equivalent definitions of \textsc{wqo}s~(see
\cite[Theorem~2.1]{MR0049867} and \cite{MR0447056}.) 
In \cite{MR2078917},  
Cholak-Marcone-Solomon studied  for which definition
of \textsc{wqo} and which subsystem of second order
arithmetic~\cite{MR1723993} proves
\begin{displaymath}
\X\ \mbox{and}\ \Y\ \mbox{are \textsc{wqo}s} \Rightarrow
\X\cap \Y\mbox{ and } \X \times \Y\ \mbox{are \textsc{wqo}s.} 
\end{displaymath}
 The results are certainly related to a question ``for which
ordinal number do we have $\otp(\X), \otp(\Y)<\alpha\Rightarrow
\otp(\X\cap \Y)<\alpha$?'' We conjecture that we can take as $\alpha$ the proof-theoretic
ordinal $\Gamma_0$. According to Simpson~\cite[Ch.~V]{MR1723993},
$\Gamma_0$ is the proof-theoretic ordinal of a formal system which
can formalize and develop significant parts of order (type) theory. I
wonder whether we can take as $\alpha$ the the first nonrecursive ordinal.
If $\dim (\L)$ were almost equal to $\otp (\qo{\L})$ (cf. Remark~\ref{rem:false}),
then we would smoothly study which ordinal numbers satisfy
\begin{displaymath}
 \dim(\L), \dim(\M)<\alpha\implies  \dim(\L
 \anonbin \M)<\alpha,\quad (\anonbin=\ewprod,\ewunion,\ewcap,\ldots .)
 \end{displaymath}

A Ramsey number argument used in the proof of Lemma~\ref{thm:ramseyfn}
establishes an upper bound of a \textsc{wqo} obtained as the intersection of \textsc{wqo}s.

\begin{theorem}\label{thm:wpo} 
\begin{displaymath}
\otp(\X), \otp(\Y)<\omega\ \Rightarrow\ \otp\left(\X\cap \Y\right)<
 \Ram(\otp(\X)+1,\, \otp(\Y)+1, \, 2).
%
\end{displaymath}
\end{theorem}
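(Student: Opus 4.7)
The plan is to mimic the Ramsey coloring used in the proof of Lemma~\ref{thm:ramseyfn}, but applied directly to bad sequences of the two quasi-orders instead of to production sequences of set systems. Recall that $\otp(\X)$ is the order type of the well-founded tree of bad sequences in $\X$; when $\otp(\X)$ is a natural number $n$, the tree has depth exactly $n$, so the longest bad sequence in $\X$ has length $n$, and in particular any sequence of length $n+1$ must contain a pair $i<j$ with $t_i\preceq_\X t_j$.

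Set $N:=\Ram(\otp(\X)+1,\,\otp(\Y)+1,\,2)$ and suppose, aiming for a contradiction, that $\otp(\X\cap\Y)\ge N$. Then there is a bad sequence $\langle t_0,t_1,\ldots,t_{N-1}\rangle$ in $\X\cap\Y$: for each pair $i<j$ we have $t_i\not\preceq_{\X\cap\Y}t_j$, which unpacks to the disjunction $t_i\not\preceq_\X t_j$ or $t_i\not\preceq_\Y t_j$. Consider the complete graph on vertex set $\{0,1,\ldots,N-1\}$ and color the edge $\{i,j\}$ (with $i<j$) red if $t_i\not\preceq_\X t_j$, and black otherwise (in which case necessarily $t_i\not\preceq_\Y t_j$).

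By the defining property of the Ramsey number, this $2$-coloring has either a red clique of size $\otp(\X)+1$ or a black clique of size $\otp(\Y)+1$. A red clique $\{u_0<u_1<\cdots<u_{\otp(\X)}\}$ yields a bad sequence $\langle t_{u_0},t_{u_1},\ldots,t_{u_{\otp(\X)}}\rangle$ in $\X$ of length $\otp(\X)+1$, exceeding the maximum bad-sequence length allowed by $\otp(\X)$; symmetrically, a black clique supplies a bad sequence in $\Y$ of length $\otp(\Y)+1$, contradicting $\otp(\Y)$. Either case is impossible, so $\otp(\X\cap\Y)<N$.

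The only subtle point, and the step I expect to dwell on when drafting carefully, is reconciling the ordinal bookkeeping of $\otp$ with the combinatorial notion ``longest bad sequence''; once it is observed that a finite $\otp(\X)=n$ really does rule out bad sequences of length $n+1$ (directly from the tree-depth interpretation in the paragraph after Fact~\ref{fact:TE}), the Ramsey step is entirely routine and no extra machinery about $\X\cap\Y$ is needed beyond its defining relation $\preceq_\X\cap\preceq_\Y$.
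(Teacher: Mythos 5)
Your proof is correct and follows essentially the same route as the paper's: both take a bad sequence of $\X\cap\Y$ of length $\Ram(\otp(\X)+1,\otp(\Y)+1,2)$, color the edge $\{i,j\}$ by whether $t_i\not\preceq_\X t_j$ or (else) $t_i\not\sqsubseteq_\Y t_j$, and extract from the monochromatic clique a bad subsequence of length $\otp(\X)+1$ in $\X$ or $\otp(\Y)+1$ in $\Y$, contradicting the finite order types. Your explicit justification that a finite $\otp(\X)=n$ forbids bad sequences of length $n+1$ (via the tree-rank definition) is a detail the paper leaves implicit, and it is correct.
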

\begin{proof}
%
The proof is similar as that of Lemma~\ref{thm:ramseyfn}.
Assume $\X=(X,\preceq)$, $\Y=(Y,\sqsubseteq)$ and $\langle
 t_1,t_2,\ldots,t_m\rangle$ is a bad sequence of $\X\cap \Y$. Then for
 all $i,j$ with $1\le i<j\le m$, we have $t_i\not\preceq t_j$ or
 $t_i\not\sqsubseteq t_j$.  For the complete graph consisting of
 $\{1,\ldots,m\}$, color all edges $\{i,j\}$ $(i\ne j)$ by red if
 $t_i\not\preceq t_j$, and color the other edges by black. Then there is
 a red complete graph consisting of size $\otp(\X)+1$, or
 a black complete graph of size $\otp(\Y)+1$. For the former case, the
 bad sequence $\langle t_1,t_2,\ldots,t_m\rangle$ has a bad subsequence, which consists of terms with the suffixes from the
 red graph's vertices. This bad sequence of $\X$ has the length
 $\otp(\X)+1$, a contradiction.
For the latter case, the black complete graph of size $\otp(\Y)+1$
 induces a bad subsequence of $\Y$ having the length $\otp(\Y)+1$, a
 contradiction. Thus, we have the desired consequence.
\qed
\end{proof}

One may conjecture
\begin{equation}
 \closure{\X\cap \Y} \subset \closure{\X}\ewcap\closure{\Y} \label{cain}
\end{equation} 
in order to derive a following asymptotic improvement of Theorem~\ref{thm:wpo}
\begin{equation}\otp(\X\cap\Y)< \otp(\X)+\otp(\Y)\ \mbox{for}\ \otp (\X),
\otp(\Y)<\omega,\label{eq:false}
\end{equation}
with an argument below: By \eqref{cain} and Theorem~\ref{thm:repre}~\eqref{assert:otp},
we have $\otp(\X\cap\Y)\le \dim\left(
\closure{\X}\ewcap\closure{\Y}\right)$, but Corollary~\ref{ew:cap}~\eqref{assrt:3}
implies the latter is less than or equal to 
$\dim \closure{\X}+\dim \closure{\Y} -1 =\otp(\X) + \otp( \Y) -1$.

However the inclusion of \eqref{cain} is actually opposite, when
$\X,\Y$ are following \textsc{wqo}s $\le_0$ and $\le_1$. Let $\le_i$ $(i=0,1)$ be a
\textsc{wqo} over $\{0,1,2\}$  such that the pair of $\closure{\le_i}$ $(i=1,2)$
is the pair of $\L$ and $\M$ presented in \eqref{eg:best}, which attains $\dim \L + \dim\M -1= \dim (\L\ewcap\M)$.
 Namely, $\le_i$ is such that
two
elements other than $i$ are mutually related by $\le_i$ and are strictly
lower than $i$ by $\le_i$. 
Then
$\le_0\cap\le_1$ becomes a \textsc{wqo} such that the elements 0 and 1 are
not comparable but are  strictly greater than the element 2. Thus
$\closure{\le_0\cap\le_1}= \{ \emptyset, \{0\}, \{1\}, \{0,1\},
\{0,1,2\}\}\supset \closure{\le_1}\ewcap \closure{\le_2}= \{ \emptyset, \{0\}, \{1\}, 
\{0,1,2\}\}$.

\section{Embedding of the category of quasi-orders and finitely
  branching simulations}\label{sec:embed}

In hope that we could import idea and results on closure properties of
\textsc{wqo}s and \textsc{bqo}s to study those of
\textsc{fe}s, we show that $\closure{\bullet}$ studied in
Section~\ref{sec:intersection} becomes a neat embedding from
the category of quasi-orders and \emph{finitely branching simulations} to the
category of set systems and \emph{linear monotone, continuous functions}.
Here a
``\emph{simulation}'' is used widely in theoretical computer science~(see \cite{Sangiorgi:2009:OBC:1516507.1516510}.)  ``\emph{Linear}''
is used in the model theory of linear logic~\cite{Gir89} and we will point
out that it corresponds
to Kanazawa's relation $R\subseteq X\times Y$~(see Proposition~\ref{prop:5}.)

By ``neat embedding,'' we mean that $\closure{\bullet}$ not only
 preserves order types but also, in the jargon of category
 theory~\cite{MR1712872}, becomes an injective-on-objects, full and
 faithful contravariant functor right adjoint to a functor that
 $\qo{\bullet}$ (see Section~\ref{sec:intersection}) induces.

\begin{definition}[Finitely Branching Simulation]
Let $\X=(X,\preceq)$ and $\Y=(Y,\sqsubseteq)$  be quasi-orders. We say a relation  $R$ is a \emph{simulation} of $\X$ by $\Y$,
provided
$R\subseteq X\times Y$  and  whenever $R(x,y)$ and $x\preceq x'$,  there
 exists $y'\sqsupseteq y$ such that $R(x',y')$. We say a simulation $R$
 is \emph{finitely branching} if $\#\{ y\;;\; R(x,y)\}<\infty$ for every $x$.
\end{definition}

\begin{example}[Lineariztion]\label{eg:simulation}\begin{enumerate}\rm
\item \label{assert:functional_simulation}
For an order-homomorphism $f:\X\to\Y$, a relation 
$R_f:=\left\{(x,y)\;;\; f(x)=y\right\}$ is a finitely branching simulation of $\X$ by $\Y$. 

\item For every surjective order-homomorphism $f$ from a quasi-order $\X$
to a linear order $\Y$, the relation $R_f$ is a finitely branching simulation of $\X$ by $\Y$. In
this case, we call $\Y$ a \emph{linearization} of $\X$.
\end{enumerate} 
\end{example}

\begin{lemma}\label{lem:simulation}Let $\X$ and $\Y$ be quasi-orders. If $R$  is a simulation of $\X$ by $\Y$,  then
$\widetilde {R^{-1}}[\closure{\Y}]\subseteq\closure{\X}$.
\end{lemma}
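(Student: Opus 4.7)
The plan is to unfold the definitions of $\closure{\bullet}$ and $\widetilde{R^{-1}}[\cdot]$ and then do a one-step chase using the simulation condition. Fix any $M \in \closure{\Y}$; the goal is to show that $R^{-1}[M]$ is upper-closed with respect to $\preceq$, i.e., $R^{-1}[M] \in \closure{\X}$. Since $\widetilde{R^{-1}}[\closure{\Y}] = \{R^{-1}[M] : M \in \closure{\Y}\}$ by \eqref{def:invimage}, proving this for an arbitrary $M$ suffices.

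First I would pick $x \in R^{-1}[M]$ and any $x' \in X$ with $x \preceq x'$; the task reduces to checking $x' \in R^{-1}[M]$. By definition of $R^{-1}[M]$ there exists $y \in M$ with $R(x,y)$. Now I invoke the simulation hypothesis: since $R(x,y)$ and $x \preceq x'$, there is some $y' \in Y$ with $y \sqsubseteq y'$ and $R(x',y')$.

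The final step is to use that $M$ is upper-closed in $\Y$: from $y \in M$ and $y \sqsubseteq y'$ we get $y' \in M$. Combined with $R(x',y')$ this exhibits $x'$ as an element of $R^{-1}[M]$, completing the verification of upper-closedness. The argument is essentially a diagram chase, with no quantitative or combinatorial content; there is no real obstacle — the only thing to be careful about is the direction of the simulation (the witness $y'$ must be \emph{above} $y$, which is exactly what is needed to stay inside the upper-closed set $M$). Finite branching of $R$ is not used here; it will only become relevant when one wants $\widetilde{R^{-1}}[\closure{\Y}]$ itself to inherit finite elasticity from $\closure{\Y}$ via Proposition~\ref{prop:5} or Corollary~\ref{thm:newkan}.
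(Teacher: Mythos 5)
Your proof is correct and is essentially the same one-step diagram chase as the paper's: take $x\in R^{-1}[M]$ with witness $y\in M$, use the simulation condition to get $y'\sqsupseteq y$ with $R(x',y')$, and use upper-closedness of $M$ to conclude $y'\in M$, hence $x'\in R^{-1}[M]$. The only cosmetic difference is that the paper unwinds $M$'s upper-closedness into an explicit ``$\exists g\in M.\,\exists y\sqsupseteq g$'' form and then invokes transitivity of $\sqsubseteq$, whereas you apply upper-closedness directly; your closing remark that finite branching is not needed here is also accurate.
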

 \begin{proof}Let $\X=(X,\preceq)$ and $\Y=(Y,\sqsubseteq)$. 
Any member of $\widetilde {R^{-1}}[\closure{\Y}]$ is written as a set
 $L:=\{ x \in X\;;\; \exists g\in M.\, \exists y\sqsupseteq g.\ R(x,y)
 \}$ for some $M\in\closure{\Y}$. Suppose $x'\succeq x\in L$. Then
 because $R$ is a simulation, there is $y'$ such that $y'\sqsupseteq y$
 and $R(x',y')$. By the transitivity of $\sqsubseteq$, we have
 $y'\sqsupseteq g$. Therefore $x'\in L$. Thus $L$ is an upper-closed
 set, which implies $L\in \closure{\X}$.\qed
 \end{proof}

By Lemma~\ref{lem:simulation}, we have a so-called linearization lower bound~\cite[Sect.~2]{MR1283862}:
\begin{corollary}\label{cor:linearbound}For any linearization $\Y$ of $\X$,
 $\otp(\Y)\le \otp(\X)$.
\end{corollary}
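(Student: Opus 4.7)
The plan is to combine the simulation embedding lemma with the $n=1$ case of Theorem~\ref{cor:akama} (in its equality form), translating the situation via the representation $\closure{\bullet}$ and the identification $\otp(\X)=\dim\closure{\X}$ from Theorem~\ref{thm:repre}\eqref{assert:otp}.

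First, unpack the hypothesis: a linearization $\Y$ of $\X$ comes with a surjective order-homomorphism $f:\X\to\Y$, so by Example~\ref{eg:simulation}\eqref{assert:functional_simulation} the relation $R_f=\{(x,y)\;;\;f(x)=y\}\subseteq X\times Y$ is a finitely branching simulation of $\X$ by $\Y$. Lemma~\ref{lem:simulation} then gives the inclusion $\widetilde{R_f^{-1}}[\closure{\Y}]\subseteq\closure{\X}$, so by the monotonicity of $\dim$ under $\subseteq$ (immediate from Fact~\ref{fact:TE}),
\begin{displaymath}
\dim \widetilde{R_f^{-1}}[\closure{\Y}]\ \le\ \dim\closure{\X}\ =\ \otp(\X).
\end{displaymath}

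Next, I would reduce to the single-branching version of Theorem~\ref{cor:akama}. Define $R'_f\subseteq X\times\inhfin{Y}$ by $R'_f(x,v)\iff v=\{f(x)\}$; this satisfies $\#\{v\;;\;R'_f(x,v)\}=1$ for every $x$, and one sees directly from the definitions that $\widetilde{R'^{-1}_f}[[\closure{\Y}]]=\widetilde{R_f^{-1}}[\closure{\Y}]$. The surjectivity of $f$ supplies, for each $y\in Y=\bigcup\closure{\Y}$, some $\xi(y)\in X$ with $R'_f(\xi(y),\{y\})$, which is exactly the side condition for equality in the $n=1$ statement of Theorem~\ref{cor:akama}. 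That theorem therefore yields
\begin{displaymath}
\dim\widetilde{R_f^{-1}}[\closure{\Y}]\ =\ \dim\closure{\Y}\ =\ \otp(\Y).
\end{displaymath}

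Chaining the equality and the inequality produces $\otp(\Y)\le\otp(\X)$, as desired. There is no real obstacle here: the only subtle point is the notational reconciliation between the single-bracket inverse image used in Lemma~\ref{lem:simulation} and the double-bracket version used in Theorem~\ref{cor:akama}, which is trivial once $R_f$ is reinterpreted as the relation $R'_f$ whose second components are all singletons. Everything else is a direct application of results already in hand.
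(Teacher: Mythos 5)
Your proposal is correct and follows essentially the same route as the paper's own proof: the relation you call $R'_f$ is exactly the paper's $Q_f$, and both arguments combine Lemma~\ref{lem:simulation} (plus monotonicity of $\dim$ under inclusion) for the upper bound $\dim\widetilde{R_f^{-1}}[\closure{\Y}]\le\otp(\X)$ with the equality case of Theorem~\ref{cor:akama} for $n=1$, using the right-inverse $\xi$ supplied by surjectivity, to get $\dim\widetilde{R_f^{-1}}[\closure{\Y}]=\otp(\Y)$. No substantive differences.
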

\begin{proof} By the premise, there is a surjective order-homomorphism
 $f:\X\to\Y$. Because a relation $R_f$ is a simulation,
 Lemma~\ref{lem:simulation} implies 
$\dim \closure{\X}\ge \dim
 \widetilde{R_f^{-1}}[\closure{\Y}]$ from which
 Theorem~\ref{thm:repre}~\eqref{assert:otp} implies
\begin{equation} \otp ( \X) \ge \dim\widetilde{R_f^{-1}}[\closure{\Y}].
\label{twostar}\end{equation}

Put
\begin{displaymath}
Q_f:=\bigl\{(x, \{y\})\;;\;
 f(x)=y\bigr\}\subseteq\bigcup\closure{\X}\times\inhfin{\bigcup
 \closure{\Y}}.
\end{displaymath}
Then $\#\{v\;;\;Q_f(s,v)\}=\#\{v\;;\;\{f(s)\}=v\}\le 1$ and
\begin{equation}
 \dim \widetilde{R_f^{-1}} [\closure{\Y}] = \dim \widetilde{Q_f^{-1}}
  [[\closure{\Y}]],\label{hosihitotu}
\end{equation}
 because  $\widetilde{R_f^{-1}}[\closure{\Y}]= \{ R_f^{-1}[M]\;;\; M\in
 \closure{\Y}\} = \bigl\{ \{ x\in \bigcup{\closure{\X}}\;;\; \exists y\in
 M.\, f(x)=y\}\ ; \ M\in \closure{\Y}\bigr\} = \bigl\{ \{x\in\bigcup\closure{\X}\;;\;
 \exists v\in \inhfin{M}.\ Q_f(x,v)\}\ ;\ M\in \closure{\Y}\bigr\} = \{
 Q_f^{-1}[[ M]]\;;\; M\in\closure{\Y}\} = \widetilde{Q_f^{-1}}[[
 \closure{\Y}]]$.

Since $f:\X\to\Y$ is surjective and $\bigcup\closure{\Y}$ is the
 underlying set of $\Y$, there is a right-inverse $\xi:\bigcup\closure{\Y}\to\bigcup\closure{\X}$ of $f$. In other
 words, each $y\in \bigcup\closure{\Y}$ has $\xi(y)\in \bigcup\closure{\X}$
 such that $f(\xi(y))=y$, i.e., $R_f\left(\xi(y), y\right)$. Hence
 $Q_f\left(\xi(y),\,\{y\}\right)$. By 
 Theorem~\ref{cor:akama}, $\dim \widetilde{Q_f^{-1}} [[ \closure{\Y}]]=
 \dim \closure{\Y}=\otp{(\Y)}$. By \eqref{hosihitotu}, $\dim
 \widetilde{R_f^{-1}}[ \closure{\Y}] = \otp{(\Y)}$. By \eqref{twostar},
 we have the desired consequence.
\qed\end{proof}

We
will define the category $\QO_{FinSim}$ of quasi-orders and
finitely branching simulations between them, as well as a
suitable category of set systems and monotone, continuous functions
between them, and then will show that the operation $\closure{\bullet}$ becomes
 a contravariant, functor from the former category $\QO_{FinSim}$ to the
 latter category, and that the functor $\closure{\bullet}$ is
order-type-preserving,
injective-on-objects,  full and faithful.
For notion of category theory, see \cite{MR1712872}.

\begin{definition} \emph{The category $\QO_{FinSim}$ of quasi-orders and
finitely branching simulations between them} is defined as follows: The
objects are quasi-orders $(X,  \preceq)$. The identity
 morphism of object $\X=(X, \preceq)$ is $\id_\X=\{(x,x)\;;\; x\in
 X\}$. The morphisms
 from $\X=(X,\preceq)$ to $\Y=(Y,\sqsubseteq)$ are finitely branching simulations $R\subseteq X\times Y$.
 For morphisms $R:(X,\preceq)\to(Y,\sqsubseteq)$ and
 $S:(Y,\sqsubseteq)\to(Z, \trianglelefteq)$,
 the composition is defined as
 the relational composition 
\begin{displaymath}
S\circ R=\{(x,z)\;;\; R(x,y)\ \mbox{and}\ S(y,z)\
 \mbox{for some $y\in Y$}\}.
\end{displaymath}

Let $\QO$ be the category of quasi-ordered sets and order-homomorphisms between them. Then there is a faithful,
 identity-on-objects functor from $\QO$ to $\QO_{FinSim}$, because of Example~\ref{eg:simulation}~\eqref{assert:functional_simulation}.
\end{definition}

\begin{definition}[Linear, Sequential]Let $\D$ and $\C$ be set systems and  $\op:\D\to\C$ be a monotone, continuous function.
$\op$ is said to be \emph{linear}, if there is
 $R\subseteq\fld{\C}\times\left[ \fld{\D}\right]^{<2}$ such that
 $\op=\op_R$.
$\op$ is said to be \emph{sequential}, if there is
 $R\subseteq\fld{\C}\times\inhfin{\fld{\D}}$ such that
 $\op=\op_R$ and $\#\{ v\;;\; R(s,v)\}\le 1$ for all $s\in\fld{\C}$.
\bigskip

Let \emph{$\SetSys$ be the category of set systems and
 monotone, continuous functions between them}.
Let $\SetSys_{lin}$~(\/$\SetSys_{seq}$, resp.) be the category of set systems and linear~(sequential, resp.) monotone, continuous
 functions between them.
\end{definition}

Thus every object $\C$ of $\SetSys$ is written as $\inj \L$ for some set
system $\L$.

Let $\coh_{stable}$  be the cartesian closed category of
 coherence spaces and stable functions between them, introduced by
 Girard~\cite{Gir89}. Here a stable function was originally introduced
 by Berry in an attempt to give a semantic characterization of
 \emph{sequential} algorithms. 
Defining coproducts in $\coh_{stable}$ is difficult
 according to \cite{Gir89}. However not in $\SetSys$ and
 $\SetSys_{lin}$. It is 
 because the morphisms of the two categories can represent 
nondeterministic computations as we saw in the
 proof of Section~\ref{sec:closureop}.

\begin{theorem}\label{thm:catmain}
\begin{enumerate}
\item
$\SetSys$ and $\SetSys_{lin}$ are indeed complete
categories with all finite coproducts. 
In $\RC$ and $\RC_{lin}$, for objects $\inj \L_j$ $(j\in J)$,
 the coproduct is
\begin{equation}\inj \L := \bigoplus_{j\in J}\;\inj \L_j\ \mbox{where}\
 \L=\left\{\ L\times\{j\}\;;\; L \in \L_ j,\  j \in J \,\right\},\label{def:coproduct}
\end{equation} 
and the product is $\inj\left( \widetilde\biguplus_{j\in J}\L_j\right)$.

\item
 A following $\iota$ is a full functor from $\coh_{stable}$  to 
		      $\SetSys_{seq}$ : 
\begin{eqnarray*}
\iota(\A) = \inj \A,\qquad\quad
\iota(\A\stackrel{F}{\to}\B) = \inj
		      \A\stackrel{\inj^{-1}}\to \A \stackrel{F}{\to} \B
		      \stackrel{\inj}{\to} \inj \B.
\end{eqnarray*} 
\end{enumerate} 
\end{theorem}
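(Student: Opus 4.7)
The plan is to verify the two parts separately: for part (1), I would establish limits via arbitrary products and equalizers and then construct coproducts by the formula stated; for part (2), I would observe functoriality by telescoping and deduce fullness from the correspondence between stable functions and sequential traces.

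For products, given set systems $\L_j$ ($j\in J$), the space $\inj(\widetilde\biguplus_{j\in J}\L_j)$ lives in $\{0,1\}^{\bigsqcup_j\bigcup\L_j}$ and is canonically homeomorphic, by regrouping of coordinates, to the Tychonoff product $\prod_{j\in J}\inj\L_j$. Each projection $\pi_j$ is induced by the relation $R_j=\{(x,\{(x,j)\}) : x\in\bigcup\L_j\}$ with singleton right components, so Lemma~\ref{lem:or}~\eqref{lem:Or} yields that $\pi_j$ is linear monotone continuous. The universal property follows from pointwise convergence in the product topology, and the tuple map of linear functions $f_j:\C\to\inj\L_j$ is itself linear because its trace is the disjoint union of the component traces. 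For equalizers of parallel $\op_1,\op_2:\C\to\D$, the topological equalizer $E=\{h\in\C : \op_1(h)=\op_2(h)\}$ sits inside $\{0,1\}^{\fld\C}$ and is of the form $\inj\L'$ for $\L'=\inj^{-1}(E)$; its inclusion has diagonal trace, hence is linear. For coproducts, I would set $\L=\{L\times\{j\} : L\in\L_j,\ j\in J\}$ and let $\kappa_j(h)(x,j')=h(x)$ if $j'=j$ and $0$ otherwise, whose trace $\{((x,j),\{x\}) : x\in\bigcup\L_j\}$ makes $\kappa_j$ linear monotone continuous. Given monotone continuous $f_j:\inj\L_j\to\C$, I would write each $f_j$ via Lemma~\ref{prop:b} as a tuple of positive Boolean formulas $B^j_y$ over $\bigcup\L_j$, and define the copairing $[f_j]_j:\inj\L\to\C$ by the reindexed formula $B^j_y[x\mapsto(x,j)]$ applied to the unique summand in which the input's support lies. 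Positivity is preserved, and linearity is preserved whenever each $f_j$ is linear, since the variable sets $\bigcup\L_j\times\{j\}$ across distinct $j$ are disjoint.

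For part (2), $\iota(F)=\inj\circ F\circ\inj^{-1}$ is functorial by cancellation of $\inj^{-1}\circ\inj$ in the interior of the composite $\iota(G)\circ\iota(F)$. The essential content is the correspondence, already recorded in the introduction, between stable functions on coherence spaces and monotone continuous functions whose trace $R$ satisfies $\#\{v : R(x,v)\}\leq 1$, i.e., sequential functions in the current terminology; this makes $\iota$ well defined. Fullness is immediate: any sequential $\op:\inj\A\to\inj\B$ yields a stable $F=\inj^{-1}\circ\op\circ\inj$ with $\iota(F)=\op$.

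The main obstacle is the bookkeeping around coproducts, specifically the compatibility at the empty language: the indicator $1_\emptyset$ is the zero function, identified across summands in $\inj\L$, so the copairing is well defined only when the values $f_j(1_\emptyset)$ agree across those $j$ with $\emptyset\in\L_j$. This is unproblematic in the intended applications (e.g.\ when no $\L_j$ contains $\emptyset$, or when the $f_j$'s preserve the bottom element), but requires explicit handling. Maintaining linearity through the copairing further demands verifying that the positive Boolean formulas over disjoint variable sets assemble without collision, so that the trace of the copairing consists of singletons whenever those of the $f_j$'s do.
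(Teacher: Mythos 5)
Your treatment of products, equalizers, and part (2) follows the paper's own proof essentially step for step: the product is realized as $\inj\bigl(\widetilde{\biguplus}_{j}\L_j\bigr)$ with projections and tuple maps given by reindexed traces, the equalizer is the topological one with inclusion of diagonal trace, and part (2) rests on the bijection between Girard traces of stable functions and sequential traces $R$ with $\#\{v\;;\;R(x,v)\}\le 1$. No issues there.

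The substantive point is your final paragraph on coproducts, and your unease is justified; in fact the difficulty is worse than you state, and the paper's own proof does not resolve it either. First, the obstruction you locate at $1_\emptyset$ is a genuine failure of the statement, not mere bookkeeping. If $\emptyset\in\L_1$ and $\emptyset\in\L_2$, then $\iota_1(1_\emptyset)=\iota_2(1_\emptyset)=1_\emptyset$ in $\inj\L$, so for the cocone given by the constant maps $f_1\equiv 1_{\{0\}}$ and $f_2\equiv 1_\emptyset$ into $\inj\{\emptyset,\{0\}\}$ (both monotone, continuous, and linear, with traces $\{(0,\emptyset)\}$ and $\emptyset$) no mediating map can exist at all; the object of \eqref{def:coproduct} is simply not a coproduct for such $\L_j$, and the statement needs either a disjointness hypothesis or tagged copies $\{*_j\}\cup(L\times\{j\})$ in place of $L\times\{j\}$.

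Second, your claim that excluding $\emptyset$ from every $\L_j$ makes things unproblematic conceals a further step. The copairing must be exhibited as a single monotone continuous map, and the paper's candidate --- the disjunction over all $j$ of the reindexed formulas, i.e.\ $\discop=\op_S$ with $S=\{(y,\,v_j\times\{j\})\;;\;S_j(y,v_j)\}$ --- fails to satisfy $\discop\circ\iota_j=\op_{S_j}$ whenever some other trace contains a pair $(y,\emptyset)$: the empty conjunction contributes a constant-true disjunct that contaminates every summand. (Take $\L_1=\L_2=\SGL$, $f_1$ the constant-top map, any finitely branching trace of which must contain $(0,\emptyset)$, and $f_2$ the constant-bottom map.) Your prescription ``apply the formula of the unique summand containing the support'' is the right semantics, but as written it is a case split, not a positive Boolean formula, so it does not come with continuity for free. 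To rescue it one must argue directly that, when $\emptyset\notin\L_j$ for all $j$ and $J$ is finite, the images $\iota_j(\inj\L_j)$ partition $\inj\L$ into pairwise disjoint clopen and mutually order-incomparable pieces, so that the piecewise map is monotone and continuous --- and then linearity of the copairing still has to be checked separately, which the constant-top example above puts in doubt for $\SetSys_{lin}$. In short, you have correctly identified the weak point of the theorem, but neither your sketch nor the paper's proof closes it.
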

\begin{proof}See \ref{sec:cat}. \qed\end{proof}

In $\SetSys_{seq}$, the dimension of an object is a categorical notion.
\begin{theorem}\label{thm:categorical} If $\C$ and $\D$ are isomorphic objects in
 $\SetSys_{seq}$,  $\dim \inj^{-1}\C=\dim \inj^{-1} \D$.
\end{theorem}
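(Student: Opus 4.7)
The plan is to invoke Theorem~\ref{cor:akama} in the $n=1$ case, applied in both directions of the isomorphism. Write $\C=\inj\L$ and $\D=\inj\M$, so the assertion becomes $\dim\L=\dim\M$. Let $\op:\C\to\D$ be the assumed isomorphism in $\SetSys_{seq}$. Since inverses in a category lie in the same category, both $\op$ and $\op^{-1}$ are sequential monotone continuous functions, and in particular, being two-sided inverses as morphisms (i.e.\ as functions), $\op$ is a set-theoretic bijection, so $\op[\inj\L]=\inj\M$.

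By Lemma~\ref{lem:or}\eqref{lem:Or}, $\op=\op_R$ for some finitely branching relation $R\subseteq\fld{\D}\times\inhfin{\fld{\C}}$, and sequentiality of $\op$ gives $\#\{v\;;\;R(s,v)\}\le 1$ for all $s\in\fld{\D}$. Equation~\eqref{special} then yields
\begin{displaymath}
\M\;=\;\inj^{-1}(\inj\M)\;=\;\inj^{-1}\!\left(\op[\inj\L]\right)\;=\;\widetilde{R^{-1}}[[\L]].
\end{displaymath}
Applying Theorem~\ref{cor:akama} with $n=1$, in particular the bound \eqref{n1}, to the sequential relation $R$ delivers $\dim\M\le\dim\L$.

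The reverse inequality is obtained symmetrically: $\op^{-1}:\D\to\C$ is also a sequential monotone continuous bijection, so its trace $R'\subseteq\fld{\C}\times\inhfin{\fld{\D}}$ satisfies $\#\{v\;;\;R'(s,v)\}\le 1$, and an identical computation gives $\L=\widetilde{(R')^{-1}}[[\M]]$ and hence $\dim\L\le\dim\M$ by Theorem~\ref{cor:akama}. Combining the two inequalities establishes $\dim\inj^{-1}\C=\dim\inj^{-1}\D$.

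There is no serious obstacle; the content is entirely bookkeeping around Lemma~\ref{lem:or} and \eqref{special}. The only points that require care are (i) confirming that a categorical iso in $\SetSys_{seq}$ is a genuine bijection on the underlying Cantor-space points, so that $\op[\C]=\D$ holds set-theoretically, and (ii) noting that no Ramsey-number argument is needed, because the $n=1$ clause of Theorem~\ref{cor:akama} already contracts the estimate to $\dim\widetilde{R^{-1}}[[\M]]\le\dim\M$; the two such inequalities pin the two dimensions to a common value.
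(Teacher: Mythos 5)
Your proof is correct and follows essentially the same route as the paper's: the paper likewise observes that each object is the direct image of the other under a sequential function and applies the $n=1$ bound \eqref{n1} of Theorem~\ref{cor:akama} in both directions. You merely spell out the bookkeeping (the iso being a bijection, the identification of $\M$ with $\widetilde{R^{-1}}[[\L]]$ via \eqref{special}) that the paper leaves implicit.
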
 
\begin{proof} Because one object is the image of the other object by a
 sequential function, the former dimension is less than or equal to the
 latter dimension by Theorem~\ref{cor:akama}~\eqref{n1}.\qed\end{proof}

Following proves a part of Proposition~\ref{prop:a} by
Moriyama-Sato~\cite{735102}. 

\begin{theorem}\label{lem:coproduct} If $\L$ and $\M$ are
 \textsc{fess}s, 
 $\dim \inj^{-1}\left(\inj \L_1 \oplus \inj \L_2\right)=\max(\dim \L_1, \dim \L_2)$ and
 the union $\L_1\cup\L_2$ is an \textsc{fess}.
\end{theorem}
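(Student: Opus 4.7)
The plan is first to unfold the coproduct. By \eqref{def:coproduct}, the object $\inj^{-1}(\inj\L_1 \oplus \inj\L_2)$ is the tagged disjoint union
\[
\L \;:=\; \bigl\{\,L \times \{j\} \;;\; L \in \L_j,\ j \in \{1,2\}\,\bigr\}.
\]

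Next I would establish the dimension identity $\dim\L = \max(\dim\L_1,\dim\L_2)$ by a structural analysis of $\Prd{\L}$. The pivotal observation is \emph{tag-uniformity}: for any production sequence $\Langle \pair{t_0}{L_1}, \ldots, \pair{t_{m-1}}{L_m}\R \in \Prd{\L}$, writing $L_i = S_i \times \{j_i\}$ with $S_i \in \L_{j_i}$, one has $j_1 = j_2 = \cdots = j_m$. Indeed, the clause $\{t_0,\ldots,t_{i-1}\} \subseteq L_i$ forces $t_0 \in L_i$ for every $i=1,\ldots,m$, so the second coordinate of $t_0$ simultaneously equals every $j_i$. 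Tag-uniformity then ensures that the subtree of $\Prd{\L}$ rooted at a child $\pair{t_0}{S \times \{j\}}$ of $\emptyseq$ is isomorphic (in the sense of Fact~\ref{fact:TE}) to the subtree of $\Prd{\L_j}$ rooted at $\pair{t_0'}{S}$, where $t_0'$ is the first coordinate of $t_0$, via the prefix-preserving first-coordinate bijection. Consequently their order types coincide, and the definition $|\emptyseq|_{\L} = \sup\{|\sigma|+1 : \sigma$ a child of the root$\}$ gives $\dim\L = \max(\dim\L_1, \dim\L_2)$.

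For the finite elasticity of $\L_1 \cup \L_2$, since $\max(\dim\L_1,\dim\L_2)$ is a genuine ordinal, $\L$ is itself an \textsc{fess}. I then realize $\L_1 \cup \L_2$ as an inverse image $\widetilde{R^{-1}}[\L]$ for a finitely branching relation, so as to invoke Proposition~\ref{prop:5}. Explicitly, let
\[
R \;\subseteq\; \Bigl(\bigcup\L_1 \,\cup\, \bigcup\L_2\Bigr) \times \bigcup\L,\qquad R(x,(y,j)) \iff x = y.
\]
Then $R^{-1}[L \times \{j\}] = L$ for every $L \times \{j\} \in \L$, hence $\widetilde{R^{-1}}[\L] = \L_1 \cup \L_2$; and $R$ is finitely branching since each $x$ has at most the two images $(x,1)$ and $(x,2)$. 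Proposition~\ref{prop:5} now concludes that $\L_1 \cup \L_2$ is an \textsc{fess}.

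The main obstacle is pinning down the tree-isomorphism below each child of $\emptyseq$: once tag-uniformity is in hand, the first-coordinate projection induces a prefix-preserving bijection between the two subtrees, and the equality of node values $|\sigma|$ then follows routinely by transfinite induction on depth.
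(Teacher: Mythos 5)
Your proof is correct and follows essentially the same route as the paper: the tag-uniformity observation (every production sequence of the coproduct stays within one summand because $t_0$ must lie in every $L_i$) is exactly how the paper identifies $\Prd{\inj^{-1}(\inj\L_1\oplus\inj\L_2)}$ with the disjoint sum of $\Prd{\L_1}$ and $\Prd{\L_2}$. For the union, you invoke Proposition~\ref{prop:5} directly with a finitely branching relation, while the paper packages the same relation $R_2$ as a monotone continuous function and cites Corollary~\ref{cor:contimage}; this is only a cosmetic difference since that corollary reduces to Kanazawa's lemma anyway.
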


\begin{proof}
Because
 $\inj^{-1}\left(\inj \L_1 \oplus \inj \L_2\right)=\{ L\times\{j\}\;;\; L\in
 \L_j, j=1,2\}$,  any production sequence of it is exactly written as 
 $\bigl \langle \langle (t_0, j),
 N_1\times \{j\}\rangle, \langle (t_1, j), N_2\times\{j\}\rangle,
 \ldots, \langle (t_{n-1}, j),
 N_n\times \{j\}\rangle \bigr\rangle$ for some $n$,  $N_i\in \L_j$
 $(1\le i\le n)$ and $t_i\in N_i$ $(0\le i\le n-1)$.
 Therefore, $\Prd{\inj^{-1}\left(\inj \L_1 \oplus \inj \L_2\right)}$ is the
 disjoint sum of $\Prd{\L_1}$ and  $\Prd{\L_2}$, from which the
 conclusion follows.

The second assertion is because $\inj(\L\cup\M)$ is the direct image by the
 monotone, continuous function $\op_{R_2}:\inj\L\oplus\inj\M\to \inj\left(\L\cup\M\right)$ of $\inj \L \oplus \inj \M$ where $R_2$ is defined in
 \eqref{prod_to_ewunion}.
\qed
\end{proof}

\begin{definition}
\begin{enumerate}
\item
Define a contravariant functor $\Ss$ from $\QO_{FinSim}$
 to $\SetSys_{lin}$ as follows. Let $\X=(X,\preceq), \Y=(Y,\sqsubseteq)$ be objects of $\QO_{FinSim}$. Put
 $\ss{\X}:=\inj ( \closure{\X})$. For each morphism $R$ from
 $\X$ to $\Y$, let
 $\ss{R}$ be the monotone, linear, continuous function $\op_{\hat{R}}:\ss{\Y}\to
\ss{\X}$ with the trace $\hat{R}=\{(x,\{y\})\;;\; R(x,y)\}$.

\item Define a contravariant functor $\mathrm{Qo}$ from
 $\SetSys_{lin}$ to $\QO_{FinSim}$ as follows. Let $\C$ be an object of
 $\RC_{lin}$. Put $\Qo{\C}:=\qo{\inj^{-1}\C}$. For each morphism
 $\op_R:\D\to \C$ in $\SetSys_{lin}$, let $\Qo{\op_R}$ be a finitely
     branching simulation
 $\check{R}:=\{ (x, y)\;;\; R(x,\,\{y\})\}\subseteq \fld{\C}\times\fld{\D}$ of $\QO_{FinSim}$.
\end{enumerate} 
\end{definition} 

\begin{lemma}\label{lem:functor}
\begin{enumerate}
\item $\Ss$  is indeed a functor       $\QO_{FinSim}^{\mathrm{op}}$ from to
$\SetSys_{lin}$ .
\item
$\mathrm{Qo}$ is indeed a functor from $\SetSys_{lin}$ to
      $\QO_{FinSim}^{\mathrm{op}}$.
\end{enumerate} 
\end{lemma}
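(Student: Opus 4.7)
The plan is to verify, for each of $\Ss$ and $\mathrm{Qo}$, the two functor axioms---preservation of identities and contravariant composition---after confirming that each assignment sends morphisms in the source category to morphisms in the target. The object-level assignments are immediate from the definitions of $\closure{\bullet}$ and $\qo{\bullet}$; what requires work is the action on morphisms and the verification of the categorical structure.

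For $\Ss$, given a finitely branching simulation $R\colon \X\to \Y$, the trace $\hat R=\{(x,\{y\})\,;\,R(x,y)\}\subseteq \fld{\ss{\X}}\times\left[\fld{\ss{\Y}}\right]^{<2}$ inherits finite branching from $R$ and uses only singleton (or empty) fibres, so Lemma~\ref{lem:or}~\eqref{lem:Or} produces a linear, monotone, continuous function $\op_{\hat R}$. That $\op_{\hat R}$ carries $\ss{\Y}$ into $\ss{\X}$ reduces to the identity $\hat R^{-1}[[M]]=R^{-1}[M]$ for $M\in\closure{\Y}$, obtained by unpacking Definition~\ref{def:bang}, combined with Lemma~\ref{lem:simulation}, which says a simulation pulls upper-closed sets back to upper-closed sets. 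Identity preservation is immediate from $\widehat{\id_\X}=\{(x,\{x\})\,;\,x\in X\}$, since \eqref{disjtt} then gives $\op_{\widehat{\id_\X}}(1_M)(x)=(x\in M)$. Contravariant composition follows from the direct calculation
\begin{displaymath}
  \op_{\hat R}\!\left(\op_{\hat S}(1_M)\right)\!(x)\;=\;\bigvee_{R(x,y)}\,\bigvee_{S(y,z)}(z\in M)\;=\;\bigvee_{(S\circ R)(x,z)}(z\in M)\;=\;\op_{\widehat{S\circ R}}(1_M)(x),
\end{displaymath}
which gives $\Ss(S\circ R)=\Ss(R)\circ \Ss(S)$ as required for a contravariant functor.

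For $\mathrm{Qo}$, the assignment on objects is clear. Given a linear $\op_R\colon\D\to\C$, the relation $\check R=\{(x,y)\,;\,R(x,\{y\})\}\subseteq\fld{\C}\times\fld{\D}$ is finitely branching because $R$ is. The central obligation is that $\check R$ be a simulation of $\Qo{\C}$ by $\Qo{\D}$: whenever $\check R(x,y)$ and $x\preceq_{\inj^{-1}\C}x'$, we must exhibit $y'\in\fld{\D}$ with $\check R(x',y')$ and $y\preceq_{\inj^{-1}\D}y'$. My approach is to translate the order $\preceq_{\inj^{-1}\C}$ through $\op_R$: for every $g\in\D$ with $g(y)=1$, the trace $R(x,\{y\})$ forces $\op_R(g)(x)=1$, hence $\op_R(g)(x')=1$, exhibiting some $v\subseteq\{z\,;\,g(z)=1\}$ with $R(x',v)$; by linearity $v$ is empty or a singleton. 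The required $y'$ is then to be extracted from the finite set $\{v\,;\,R(x',v)\}$ by a finiteness/pigeonhole step. Identity and composition preservation then reduce to computations dual to those for $\Ss$, using the bijection between singleton traces of $R$ and pairs in $\check R$.

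The main obstacle is the simulation verification for $\mathrm{Qo}$: one must extract a \emph{single} target $y'$ that works uniformly across all $g\in\D$ with $g(y)=1$, rather than a $g$-dependent witness. The straightforward per-$g$ argument only produces pointwise witnesses, so the linearity of $R$, its finite branching, and the definition of $\preceq_{\inj^{-1}\D}$ must be combined carefully---exploiting that the finite set $\{v\,;\,R(x',v)\}$ is independent of $g$ to funnel the pointwise choices into one fixed singleton trace. Once this uniformization is secured, the remainder of the proof is routine unfolding of definitions.
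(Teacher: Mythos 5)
Your argument for part (1) is correct and is essentially the paper's own: well-definedness of $\Ss$ on morphisms comes from \eqref{special} together with Lemma~\ref{lem:simulation}, and contravariant functoriality from the computation the paper records as \eqref{demo}.

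The gap is in part (2), exactly where you flag it, and the ``uniformization'' you defer cannot be carried out in general. Your per-$g$ argument yields, for each $M\in\inj^{-1}\D$ with $y\in M$, \emph{some} $v\subseteq M$ with $R(x',v)$; but different $M$ may be served by different traces, and linearity even permits $v=\emptyset$, which contributes no pair to $\check R$ at all. Concretely, take $\D=\left\{1_{\{y,a\}},1_{\{y,b\}}\right\}$, $\C=\left\{1_{\{x,x'\}}\right\}$ and $R=\left\{(x,\{y\}),\,(x',\{a\}),\,(x',\{b\})\right\}$: then $\op_R$ is a linear, monotone, continuous map $\D\to\C$, we have $x\preceq x'$ in $\Qo{\C}$ and $\check R(x,y)$, yet neither $y\preceq a$ nor $y\preceq b$ holds in $\Qo{\D}$, so $\check R$ is not a simulation. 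You should know that the paper's proof of part (2) does not attempt this verification at all: it establishes only that the trace is recoverable from the function ($\op_R=\op_S$ implies $R=S$, checked by evaluating at indicators of singletons), that the diagonal relation is preserved, and that composition is preserved via \eqref{demo}. So you are trying to discharge an obligation the paper leaves unaddressed, and one that fails for general objects and morphisms of $\SetSys_{lin}$; it does hold on the image of $\Ss$, which is what the adjunction $\mathrm{Qo}\circ\Ss=\mathrm{Id}$ in Theorem~\ref{thm:adjunction} actually uses. A complete write-up should include the well-definedness step your sketch omits, and should either restrict $\mathrm{Qo}$ so that the simulation condition is available or state explicitly that it is being assumed rather than proved.
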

\begin{proof}
(1) For every morphism $R:\X\to\Y$ of $\QO_{FinSim}$,
$\ss{R}\left[ \inj ( \closure{\Y})\right] =\op_R\left[\inj \left( \closure{\Y}\right)\right]$
 is $\inj\left(\widetilde{R^{-1}}\left[\closure{\Y}\right] \right)$ by
 \eqref{special},
 a subset of $\inj\left(\closure{\X}\right)$ by 
Lemma~\ref{lem:simulation} with $R$ being a simulation. Thus $\ss{R}$ is
 indeed a function from $\inj\left(\closure{\Y}\right)$ to
 $\inj\left(\closure{\X}\right)$. The functoriality is because 
\begin{eqnarray}
\op_{S \circ R}(g)(x)=&
 \bigvee_{R(x,y)}\bigvee_{S(y,z)} (g(z)=1) =
 \bigvee_{R(x,y)}\left(\op_S(g)(y)=1\right)\nonumber\\
=&\op_R\left(\op_S(g)\right)(x)=
 \left(\op_R\circ\op_S\right)(g)(x) . \label{demo}
\end{eqnarray} 

\bigskip
(2) Firstly, we establish the well-definedness of $\mathrm{Qo}$.
For any finitely branching relations $R,S\subseteq X\times Y$,
 $\op_R=\op_S$ implies $R=S$. To see it,
suppose $\op_R=\op_S$. If $R(x,y)$, then
 $\op_R(1_{\{y\}})(x)=1=\op_S(1_{\{y\}})(x)=\bigvee_{S(x,y')}(y'=y)$,
 which implies $S(x,y)$. Therefore $R=S$. 

Next, $\mathrm{Qo}$ preserves the identity morphism, because
      $\Qo{\op_{\Delta}}=\Delta$ for every set $A$ and for every
      diagonal relation on $A\times
      A$. $\Qo{\op_R\circ\op_S}=\Qo{\op_S}\circ\Qo{\op_R}$ follows from \eqref{demo}.\qed
\end{proof} 

%

According to \cite[Theorem~IV.1.1, Theorem~IV.1.2]{MR1712872},
a functor $G:A\to X$ is a left adjoint functor to a functor $F:X\to A$
if and only if there are natural transformations
$\eta:\mathrm{Id}_X\stackrel{\cdot}{\to} GF$ and
$\varepsilon:F G\stackrel{\cdot}{\to}\mathrm{Id}_A$ such that both the
following composites are the identity natural transformations (of $G$,
resp. $F$.) 
\begin{equation}G\stackrel{\eta G}{\longrightarrow} G F G\stackrel{G\varepsilon}{\longrightarrow} G,
\quad
F\stackrel{F\eta }{\longrightarrow} F G F \stackrel{\varepsilon
F}{\longrightarrow} F. \label{units}
\end{equation}
$\eta$ is called the \emph{unit} and $\varepsilon$ is called the
\emph{counit}.
The \emph{opposite category} of a category $A$ is denoted by $A^{\mathrm{op}}$.

\begin{theorem}\label{thm:adjunction}  The functor $\mathrm{Qo}:\SetSys_{lin}\to
 \QO_{FinSim}^{\mathrm{op}}$ is a left adjoint functor to the functor
 $\Ss:\QO_{FinSim}^{\mathrm{op}}\to \SetSys_{lin}$ where the counit of the
 adjunction is the identity natural transformation of the identity functor $\mathrm{Id}_{\QO_{lin}^\mathrm{op}}$.
\end{theorem}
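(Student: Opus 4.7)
The plan is to apply the unit--counit characterization \eqref{units}. The crucial observation is that Theorem~\ref{thm:repre}(\ref{assert:inj}) gives $\qo{\closure{\X}}=\X$, so $\mathrm{Qo}\circ\Ss=\mathrm{Id}_{\QO_{FinSim}^{\mathrm{op}}}$ on objects; on morphisms one computes directly $\check{\hat R}=\{(x,y);\hat R(x,\{y\})\}=\{(x,y);R(x,y)\}=R$ for every simulation $R$, so the equality of functors holds on the nose. This lets us take the identity natural transformation as the counit $\varepsilon:\mathrm{Qo}\Ss\Rightarrow\mathrm{Id}_{\QO_{FinSim}^{\mathrm{op}}}$, matching the theorem's claim.

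Next I would construct the unit $\eta_\C:\C\to\ss{\Qo{\C}}$ for each $\C\in\SetSys_{lin}$. Writing $\C=\inj \L$, Definition~\ref{def:ss_to_qo} makes every $L\in\L$ upper-closed with respect to $\qo{\L}$, so $\L\subseteq\closure{\qo{\L}}$ and $\inj \L\subseteq\inj\closure{\qo{\L}}=\ss{\Qo{\C}}$. I set $\eta_\C:=\op_{\Delta_\C}$ where $\Delta_\C=\{(x,\{x\});x\in\fld{\C}\}$; this diagonal-trace morphism realizes the inclusion, is linear by inspection ($\#\{v;\Delta_\C(x,v)\}=1$ and $\#v\le 1$), and is monotone and continuous by Lemma~\ref{lem:or}. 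Naturality of $\eta$ at a morphism $\op_R:\D\to\C$ of $\SetSys_{lin}$ follows from the composition formula \eqref{demo}: both legs of the naturality square reduce to $\op_R$ viewed in the enlarged target, using $\Delta_\C\circ R=R=R\circ\Delta_\D$ as composable relations and the identity $\widehat{\check R}=R$ (valid for canonical traces without empty entries).

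Finally I would check the two triangle identities. With $\varepsilon=\mathrm{Id}$ and $\mathrm{Qo}\Ss=\mathrm{Id}$, the triangle $\mathrm{Qo}\to\mathrm{Qo}\Ss\mathrm{Qo}\to\mathrm{Qo}$ reduces to $\Qo{\eta_\C}=\mathrm{Id}_{\Qo{\C}}$ for every $\C$; but $\Qo{\op_{\Delta_\C}}=\check{\Delta_\C}=\{(x,x);x\in\fld{\C}\}$ is precisely the identity simulation on $\qo{\L}$. The triangle $\Ss\to\Ss\mathrm{Qo}\Ss\to\Ss$ reduces to $\eta_{\ss{\X}}=\mathrm{Id}_{\ss{\X}}$, and Theorem~\ref{thm:repre}(\ref{assert:inj}) gives $\closure{\qo{\closure{\X}}}=\closure{\X}$, so the underlying indexing set is unchanged and $\eta_{\ss{\X}}=\op_{\Delta_{\ss{\X}}}$ acts as the identity function on $\inj\closure{\X}$. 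The main subtlety will be the variance bookkeeping---keeping track of whether $\Qo{\op_R}$ is viewed as a simulation in $\QO_{FinSim}$ or as a morphism in $\QO_{FinSim}^{\mathrm{op}}$---and the implicit reliance on Lemma~\ref{lem:functor} to guarantee that $\check R$ is genuinely a finitely branching simulation, which depends on linearity of $R$ together with the fact that $R^{-1}[[M]]\in\L$ for every $M\in\N$.
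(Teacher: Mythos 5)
Your proposal is correct and follows essentially the same route as the paper: identify $\mathrm{Qo}\circ\Ss$ with the identity functor via Theorem~\ref{thm:repre}~\eqref{assert:inj}, take the inclusion $\C\hookrightarrow\ss{\Qo{\C}}$ as the unit, and verify the triangle identities with the counit being the identity. You merely supply details the paper leaves implicit (the morphism-level identity $\check{\hat R}=R$, linearity of the inclusion as $\op_{\Delta_\C}$, naturality, and the caveat about traces with empty entries), and you correctly cite assertion~\eqref{assert:inj} where the paper's proof misprints \eqref{assert:otp}.
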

\begin{proof}By Theorem~\ref{thm:repre}~\eqref{assert:otp} and the
 definition, the composite $\mathrm{Qo}\circ\Ss$ is the identity
 functor of $\QO_{lin}$ .  Define the unit $\eta_\C:\C\to \ss{\Qo{\C}}$
 by the inclusion map. Then \eqref{units} follows immediately.\qed
\end{proof}

\begin{corollary}\label{cor:last}The functor
$\Ss$ is an injective-on-objects, full and faithful functor from
 $\QO_{FinSim}$  to $\SetSys_{lin}^{\mathrm{op}}$. Moreover $\otp (\X)  = \dim \inj^{-1}\ss{\X}$ for every object $\X$ of $\QO_{FinSim}$. 
\end{corollary}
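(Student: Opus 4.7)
The plan is to extract both claims from the adjunction $\Qo \dashv \Ss$ established in Theorem~\ref{thm:adjunction}---whose counit is the identity natural transformation---together with the representation Theorem~\ref{thm:repre}.

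First, the identity counit forces $\Qo \circ \Ss = \mathrm{Id}_{\QO_{FinSim}}$. From this I immediately read off both injectivity on objects and faithfulness: if $\ss{\X} = \ss{\Y}$, applying $\Qo$ yields $\X = \Y$; and if $\ss{R} = \ss{S}$ for morphisms $R, S : \X \to \Y$, applying $\Qo$ again yields $R = S$. For fullness I would appeal to the standard categorical fact that in an adjunction $F \dashv G$ the right adjoint $G$ is full and faithful precisely when the counit is a natural isomorphism. Concretely, given any morphism $\op_T : \ss{\Y} \to \ss{\X}$ in $\SetSys_{lin}$, the finitely branching simulation $\check{T} := \Qo(\op_T)$ from $\X$ to $\Y$ satisfies $\ss{\check{T}} = \op_T$; this reduces to the trivial observation that the operations $R \mapsto \hat{R} = \{(x,\{y\}) : R(x,y)\}$ and $T \mapsto \check{T} = \{(x,y) : T(x,\{y\})\}$ are mutually inverse bijections between finitely branching relations on $X \times Y$ and linear traces in $X \times \left[Y\right]^{<2}$, and that these bijections commute with the construction $\op_{(-)}$ of Lemma~\ref{lem:or}\eqref{lem:Or}.

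The order-type equality is almost by definition. By construction $\ss{\X} = \inj(\closure{\X})$, so $\inj^{-1}\ss{\X} = \closure{\X}$, and Theorem~\ref{thm:repre}\eqref{assert:otp} identifies $\dim \closure{\X}$ with $\otp(\X)$. No real obstacle arises here: once Theorem~\ref{thm:adjunction} is in hand, this corollary is essentially the unpacking of definitions together with the standard counit-isomorphism criterion for a full and faithful right adjoint; the only minor bookkeeping is verifying that $\widehat{(\cdot)}$ and $\check{(\cdot)}$ are mutually inverse, which is built into the very definition of a linear monotone, continuous function.
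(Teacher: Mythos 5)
Your proposal is correct and follows essentially the same route as the paper: both derive fullness and faithfulness from Theorem~\ref{thm:adjunction} via the standard fact that a right adjoint is full and faithful when the counit is an isomorphism (here the identity), and both obtain the order-type equality by unwinding $\inj^{-1}\ss{\X}=\closure{\X}$ and invoking Theorem~\ref{thm:repre}~\eqref{assert:otp}. Your extra remarks---reading injectivity-on-objects off $\mathrm{Qo}\circ\Ss=\mathrm{Id}$ and noting that $\widehat{(\cdot)}$ and $\check{(\cdot)}$ are mutually inverse---are just explicit unpackings of what the paper leaves implicit.
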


\begin{proof}The first assertion follows from
 Theorem~\ref{thm:adjunction}, because every right adjoint functor is
 full and faithful whenever every component of the counit is an isomorphism~\cite[Theorem~IV.3.1]{MR1712872}.
The other assertion follows from Theorem~\ref{thm:repre}~\eqref{assert:otp}.
\qed\end{proof}

\section*{Acknowledgement} The author sincerely thanks Dr.~Matthew de
 Brecht, Hajime Ishihara, and
 anonymous referees.
This work is partially supported by Grant-in-Aid for Scientific
Research (C) (21540105) of the Ministry of Education, Culture, Sports,
Science and Technology (MEXT.)

\appendix

%
%
%
%

\section{The categories of set systems and
linear/sequential  monotone, continuous functions}\label{sec:cat}

First we will prove Theorem~\ref{thm:catmain}. $\inj \L$ and $\inj L_j$
are as in the Theorem. See Figure~\ref{diagram}~(right).

 For each  $j \in  J$,
the
 injection $\iota_j:\inj \L_j \to \inj \L $ is $\op_{T_j}$ where 
\begin{displaymath}
T_j := \left\{\left(
 \l x,\,  j \r, \{x\}\right)\;;\; x\in \bigcup \L_j,\ j\in J\,\right\}\  \subseteq\
 \fld{\inj \L} \times\left[\fld{\inj \L_ j }\right]^{<2}. 
\end{displaymath} 
For any set $\left\{\op_{S_j}:\inj \L_j\to \D\ ;\ j\in J\right\}$ of
 morphisms of $\SetSys$, define
\begin{displaymath}
 S\, :=\, \left\{(y, v_j\times\{j\})\;;\;  S_j(y, v_j),\ j\in
 J\;\right\} \ \subseteq\  \fld{\D}\times P\left(\fld{\inj \L}\right),
\end{displaymath}
and a possibly non-continuous function $\discop:\inj \L\to\D$ by 
\begin{equation}
 \discop(h)(y):= \bigvee_{S(y,v)} \bigwedge_{x\in v} h(x)=1.\quad(h\in\inj\L,\ y\in\fld{\D} .) \label{S}
\end{equation} 

\begin{proof}[Theorem~\ref{thm:catmain}](1) The two categories are closed under composition because of 
 \eqref{disjtt}.

The terminal object is $\{\emptyset\}=\P{\emptyset}$. Any monotone,
 continuous function $\op_R:\C\to \P{\emptyset}$ has $R\subseteq
 \emptyset\times \inhfin{\fld{\C}}$ and thus $R=\emptyset$. Actually,
 for any $g\in \C$ and $x\in \emptyset$, we have
 $\op_R(g)(x)=\bigvee_{R(x,v)}\bigwedge_{y\in v} (g(y)=1)$. 

For arbitrary nonempty set $\Lambda$, the product of objects $\inj (\L_\lambda)$
$(\lambda\in\Lambda)$ is just the 
$\C = \inj
\left(\ewuplus_{\lambda\in\Lambda} \L_\lambda\right)$.
For each
$\lambda\in \Lambda$, the projection $\Pi_\lambda:\C\to \inj \L_\lambda$
is $\Pi_\lambda(h):= h\left( \langle \bullet, \lambda\rangle \right)$
for all $h\in \C$. For any $\op_{R_\lambda}:\D\to \inj \L_\lambda$
$(\lambda\in\Lambda)$, the mediating morphism $\op_R$ of
Figure~\ref{diagram} is defined by $R\subseteq \fld{\C}\times
\inhfin{\fld{\D}}$ where $R:=\left\{\left( \l{s},\,{\lambda}\r,\
 v\right)\;;\; 
R_\lambda(s,v)\right\}$.
\begin{figure}
\begin{displaymath}
\begin{xy}
*!C\xybox{
\xymatrix{
\D \ar[dr]^{\op_{R_\lambda}}\ar@{-->}[d]_{\op_R}  &       &\inj \D
 \ar@{-->}[d]_{\widetilde\op} \ar[dr]^{\op'}& & & \D  & \\ 
\ar[r]_{\ \ \Pi_\lambda}
              \inj(\ewuplus_{\lambda\in\Lambda}  \L_\lambda)
 &\inj \L_\lambda &   \inj \N
 \ar[r]_{\op} &\ \inj \M\   \ar@<0.5ex>[r]^{\op_1}\ar@<-0.5ex>[r]_{\op_2}
 & \ \inj \L&\ar@{-->}[u]^{\discop}  
              \inj \L
 &\ar[l]^{\ \ \ \ \ \iota_j=\op_{T_j}}  \ar[ul]_{\op_{S_j}}\inj \L_j \\
h\ar@{|->}[r]&h\left( \langle \bullet, \lambda\rangle \right)&&&&
}}
\end{xy}
\end{displaymath}
\caption{Product of $\inj \L_\lambda$'s is $\inj \left( \ewuplus_{\lambda\in \Lambda} \L_\lambda
 \right)$~(left), the
 equalizer $\inj \N$ is constructed in a standard manner~(middle), and coproduct of
 $\inj \L_j$'s is $\inj \left\{\ \L_j \times\{j\} \ ;\  j \in J
 \,\right\}$ where $j$ ranges over a finite set $J$~(right).
\label{diagram}
}
\end{figure}
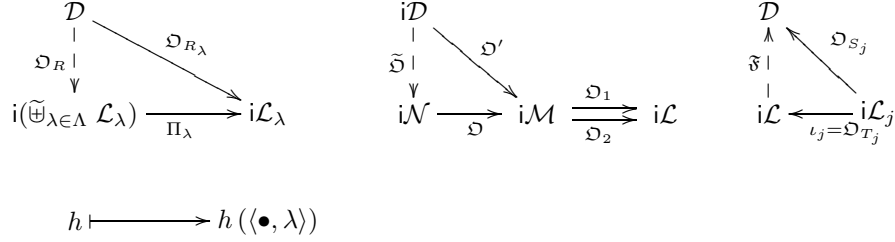
The $\op_R$ is a morphism of $\SetSys$ (and $\SetSys_{lin}$ resp.) if
 $\op_{R_\lambda}$'s are.

The equalizer $\op:\inj \N\to \inj \M$ of a pair of functions $\op_1,\op_2:\inj
 \M\rightrightarrows \inj \L$ is  defined by
\begin{displaymath}
 \inj \N :=\{ g\in \inj \M\;;\; \op_1(g)=\op_2(g)\},\quad \op(g)(x):=
 g(x)\quad(g\in \inj \N, x\in \bigcup\M.)
\end{displaymath}
For Figure~\ref{diagram}~(middle), the mediating morphism
 $\widetilde{\op}:\D\to \inj\N$ is
 defined by $\widetilde{\op}(g)(y)=\op'(g)(y)$ for any $g\in \D$ and
 any $y\in \bigcup \N$.
The initial object is $\emptyset$. Any function from $\emptyset$ to $\C$ is
 the function
 $\emptyset$, which is monotone, continuous because for any $g\in
 \emptyset$ and any $x\in \fld{\C}$, we have $\emptyset(g)(x)= \op_\emptyset(g)(x)=
 \bigvee_{\emptyset(x,v)}\bigwedge _{y\in v} \left( g(y)=1\right)$. 

The existence of a binary coproduct is because the finiteness of $J$
 implies the $\discop$ is indeed a
 morphism of $\SetSys$ ($\SetSys_{lin}$ resp.) if $\op_{S_j}$'s are.

\medskip
(2) According to \cite[Section~8.5]{Gir89}, the stable function $F:\A\to\B$ is
 exactly a
 function from $\A$ to $\B$ having a \emph{trace}. Here the trace of $F$ is the set $R$ of pairs $(x,v)\in
 \bigcup\B\times\inhfin{\bigcup\A}$ such that $v$ is a minimal (and
 actually the minimum) among $L$'s such that $x\in F(L)$. A stable
 function $F$ is recovered from the trace $R$ by
$F(L)=\{x\in \bigcup \B\;;\; \exists v\subseteq L.\ R(x,v)\}$ for all $L
 \in \A$. So $\iota(F)=\inj\circ F\circ\inj^{-1}$ is written as $\op_R$. Because $v$ is minimum, and
 is in particular unique, $\op_R$ is sequential, i.e., $\op_R\in
 \SetSys_{seq}$. So $\iota$ is indeed  well-defined. We can easily see
 that $\iota$ is indeed a functor.

Next we verify that the functor $\iota$ is indeed full.
Let  $\op_R:\iota(\A)\to \iota(\B)$ be a morphism of the category $\SetSys_{seq}$. Recall $\op_R(1_L)(x)=1\iff \exists
 v\subseteq L.\ R(x,v)$. Because $\op_R$ is sequential,
 each $x$ has at most one $v$ such
 that $R(x,v)$.  So  $R$ is the set of pairs
 $(x,v)$ such that $v$ is minimum among $L$'s such that
 $\op_R(1_L)(x)=1$. Thus $\inj^{-1} \circ F\circ \inj$ is the stable
 function with the trace being $R$.
\qed
\end{proof}


\begin{lemma}\label{lem:infcopro}
None of $\SetSys$, $\SetSys_{lin}$ and $\SetSys_{seq}$ does not have 
the object $\inj \L$ of
 \eqref{def:coproduct} as a coproduct if $J$ is infinite. Even
 $\SetSys_{seq}$ does not for $2\le \#J\le\infty$.

\end{lemma}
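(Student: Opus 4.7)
The plan is to exhibit, in each claimed failure case, a specific cocone whose mediating morphism in the relevant category cannot exist. Two sub-cases are needed: an infinite $J$ for all three categories, and a finite $J\ge 2$ for $\SetSys_{seq}$. Both rely on the explicit form of the canonical injections $\iota_j=\op_{T_j}$ with $T_j=\{(\l x,j\r,\{x\}):x\in\bigcup\L_j\}$, and on the representation formula \eqref{disjtt} for monotone continuous functions.

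For infinite $J$, I would take $\L_j=\{\emptyset,\{\ast_j\}\}$ with points $\ast_j$ chosen fresh for each $j\in J$, target $\D=\P{\{\bullet\}}$, and morphisms $\op_{S_j}\colon\inj\L_j\to\D$ of trace $S_j=\{(\bullet,\{\ast_j\})\}$; each $\op_{S_j}$ is sequential, hence linear, hence monotone continuous, and sends $0\mapsto 0$, $1_{\{\ast_j\}}\mapsto 1$. Since $\emptyset\in\L$ we have $0\in\inj\L$, and $\iota_j$ sends $1_{\{\ast_j\}}\mapsto 1_{\{\l\ast_j,j\r\}}$, so any mediating $\discop$ must satisfy $\discop(0)=0$ while $\discop(1_{\{\l\ast_j,j\r\}})=1$ for every $j\in J$. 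The obstruction is then purely topological: in the Cantor subspace $\inj\L\subseteq\P{\{\l\ast_j,j\r:j\in J\}}$, any basic open set around $0$ pins down only finitely many coordinates, so (because $J$ is infinite) contains some $1_{\{\l\ast_k,k\r\}}$. Hence $\discop^{-1}[\{0\}]=\{0\}$ is not open in $\inj\L$, while $\{0\}=\pi_\bullet^{-1}[\{0\}]$ is open in $\D$; so $\discop$ cannot be continuous.

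For $\SetSys_{seq}$ with $2\le\#J<\omega$, it is enough to take $J=\{1,2\}$, $\L_1=\L_2=\{\{a\},\{b\}\}$, $\D=\inj\{\{a\},\{b\}\}$, $\op_{S_1}$ the identity with trace $S_1=\{(a,\{a\}),(b,\{b\})\}$, and $\op_{S_2}$ the swap with trace $S_2=\{(a,\{b\}),(b,\{a\})\}$; both are sequential. Any mediating $\discop=\op_R$ must satisfy four coordinate equations; reading off $\op_R(\cdot)(a)$ on the first two forces some $(a,v)\in R$ with $v\subseteq\{\l a,1\r\}$ but $v\not\subseteq\{\l b,1\r\}$, whence $v=\{\l a,1\r\}$. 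The last two equations symmetrically force a second trace pair $(a,\{\l b,2\r\})\in R$. So $\#\{v:R(a,v)\}\ge 2$, and $\discop$ cannot be sequential.

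The main obstacle is really just spotting these minimal cocones; once they are in hand, both arguments collapse to one-line observations, namely that basic neighbourhoods of $0$ in a Cantor subspace are cofinite in coordinates, and that a single-branching trace cannot perform case analysis. Combined, the two examples cover all the cases asserted in the lemma.
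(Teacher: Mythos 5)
Your proof is correct, but it takes a genuinely different route from the paper's at both the choice of cocone and the nature of the obstruction. For infinite $J$ the paper takes every $\op_{S_\mu}$ to be the constant-true map with trace $S_\mu=\{(y,\emptyset)\;;\;y\in\fld{\D}\}$ and argues that any finitely branching trace $S$ of a mediating map can mention only finitely many summand indices $j$ at a fixed output coordinate $y$, so feeding the all-ones element of an unmentioned $\inj \L_\mu$ into $\op_S\circ\iota_\mu=\op_{S_\mu}$ forces $0$ on the left against $1$ on the right. Your cocone of membership detectors $S_j=\{(\bullet,\{\ast_j\})\}$ replaces this trace-counting by a direct topological obstruction: the mediating map would have to pull the open set $\{0\}\subseteq\D$ back to the singleton $\{0\}\subseteq\inj\L$, which is not open because basic neighbourhoods in a Cantor subspace constrain only finitely many coordinates. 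The two arguments exploit the same finiteness phenomenon, but yours is the more watertight: the paper's cocone in fact admits a continuous (even sequential and linear) mediating map, namely the function constantly equal to the all-ones element, whose trace contains $(y,\emptyset)$ --- a case that the paper's step ``$\xi\in v$ is written as $\langle a,j\rangle$ for some $j\ne\mu$'' silently excludes --- whereas your requirement $\discop(0)=0$ blocks that escape. Moreover, the paper's proof says nothing about the second assertion (failure in $\SetSys_{seq}$ for finite $J\ge2$); your identity/swap example, which forces any trace of the mediating map to contain the two branches $(a,\{\langle a,1\rangle\})$ and $(a,\{\langle b,2\rangle\})$, supplies that missing half. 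The only point to add is one sentence padding the two-summand example to arbitrary finite $J\ge 2$ (e.g., by adjoining dummy summands mapped by $\op_{S_1}$), since the lemma asserts failure for every such $J$.
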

\begin{proof} We show that $\discop:\inj \L\to\D$ of \eqref{S} is not a
 morphism of $\SetSys$, when
\begin{equation}
S_\mu:=\{ (y,\emptyset)\;;\;y\in\fld{\D}\}\subseteq
 \fld{\D}\times\left[\fld{\inj \L}\right]^{<2},\quad (\mu\in J) \label{akm1}
\end{equation}
 Let $y\in \fld{\D}$.
Because $J$ is infinite but $S\subseteq\fld{\D}\times\inhfin{\fld{\inj \L}}$, there is $\mu\in J\setminus \{ j\;;\; \exists v\in \inhfin{\fld{\inj \L}}.\  S(y,v)\,
 \wedge \exists \xi\in v\exists a.\ \xi=\pair{a}{j}\ \}$. Let $f\in \inj\L_\mu $ such that $f(x)=1$ for all $x\in \fld{\inj
 \L_\mu}$. Then by Figure~\ref{diagram} (right), we have
$\left(\op_S\circ \iota_\mu\right) (f)(y) =  \op_{S_\mu} (f)(y)$. Therefore
\begin{displaymath}
 \bigvee_{S(y,v)}\bigwedge_{\xi\in v}\ (\iota_\mu (f)(\xi) = 1)  =  \bigvee_{S_\mu(y,u)}\bigwedge_{x\in u}\ (f(x) = 1) .
\end{displaymath}
Here $\xi\in v$ is written as $\xi=\langle a, j\rangle$ for some
 $j\ne\mu$. So $\iota_\mu (f)(\langle a, j\rangle) =0$, which implies
 the left-hand side is 0. But, the
 right-hand side is 1 by \eqref{akm1}.
\qed
\end{proof} 

It is difficult to relate $\dim (\L \ewprod \M)$ with $\dim (\L \ewunion
\M)$. When $\L$ and
$\M$ are both coherence spaces, 
$\L\ewprod\M$ is the ``tensor product'' $\L\otimes\M$. 

\begin{lemma}\label{lem:no_ewprod_ewunion}Let $\L$ and $\M$ be set
 systems with $\bigcup \L$ infinite and $\bigcup\M\ne\emptyset$.   Then,
\begin{enumerate}
\item \label{assert:mult_to_add}
There is no monotone, continuous function $\op : \inj (\L\ewprod \M) \to
\inj( \L\ewunion \M )$ such that $\op(1_{L\times M})= 1_{L\cup M}$ for all
 $L\in \L$ and $M\in \M$.

\item\label{assert:mult_is_not_product}
There is no monotone, continuous function $\op:\inj (\L \ewprod \M) \to
\inj \M$ such that $\op(1_{L\times M})= 1_M$ for all
 $L\in \L$ and $M\in \M$.
\end{enumerate}
\end{lemma}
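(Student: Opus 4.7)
The plan is to invoke Lemma~\ref{lem:or}: any hypothetical monotone, continuous $\op$ must equal $\op_R$ for some finitely branching relation $R$, so that for each coordinate $z$ of the codomain, the value $\op(1_{L\times M})(z)$ is a positive Boolean combination of atoms $[a \in L \wedge b \in M]$ ranging over finitely many pairs $(a,b) \in \bigcup\L \times \bigcup\M$. This finite support will conflict with the ``global'' dependence on $L \in \L$ of the required outputs $1_{L\cup M}(z)$ and $1_M(z)$, using the infinitude of $\bigcup\L$.

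For part~\eqref{assert:mult_to_add}, I would fix $x \in \bigcup\L$, enumerate the finitely many witnesses $v_1,\ldots,v_k$ with $R(x,v_j)$, and set $A := \bigcup_j \pi_1(v_j) \subseteq \bigcup\L$ (finite). By the representation,
\[
\op(1_{L\times M})(x) \;=\; \bigvee_{j=1}^k \bigl[\,\pi_1(v_j) \subseteq L \ \wedge\ \pi_2(v_j) \subseteq M\,\bigr],
\]
so the value depends on $L$ only through $L \cap A$. The required $1_{L\cup M}(x) = [x \in L] \vee [x \in M]$ cannot respect this factorization: using $\bigcup\L$ infinite against $A$ finite (and nonemptiness of $\bigcup\M$ to secure $M \in \M$ with $x \notin M$), I would extract $L_0, L_1 \in \L$ with $L_0 \cap A = L_1 \cap A$ and $x \in L_0 \setminus L_1$. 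Then $\op(1_{L_0 \times M})(x) = \op(1_{L_1 \times M})(x)$ by the factorization, while $1_{L_0\cup M}(x) = 1 \ne 0 = 1_{L_1\cup M}(x)$, a contradiction.

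For part~\eqref{assert:mult_is_not_product}, I would fix $y \in \bigcup\M$ and form the analogous finite $A := \bigcup_j \pi_1(v_j) \subseteq \bigcup\L$ from $\{v_j : R(y,v_j)\}$. Since $1_M(y) = [y \in M]$ is independent of $L$, the $L$-dependence of $\op_R(1_{L\times M})(y)$ through $L \cap A$ must vanish. Picking $L \in \L$ with $L \cap A = \emptyset$ (available by infinitude of $\bigcup\L$ and finiteness of $A$), the only witnesses that can still fire have $\pi_1(v_j) = \emptyset$, i.e.\ $v_j = \emptyset$; but $R(y, \emptyset)$ forces $\op_R(g)(y) \equiv 1$, contradicting $\op(1_{L\times M})(y) = 0$ for any $M \in \M$ with $y \notin M$.

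The main obstacle will be producing the required $L_0, L_1 \in \L$ (resp.\ the disjoint $L \in \L$) and a suitable $M \in \M$ from only the hypotheses ``$\bigcup\L$ infinite'' and ``$\bigcup\M\ne\emptyset$''; this is the combinatorial pigeonhole at the heart of the argument, matching the finite $R$-induced support $A$ against the structurally rich set systems $\L$ and $\M$ implicit in the paper's setting.
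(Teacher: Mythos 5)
Your overall strategy is the same as the paper's: represent the hypothetical $\op$ via Lemma~\ref{lem:or} as $\op_R$ for a finitely branching $R$, so that each output coordinate is a positive Boolean combination of finitely many membership atoms $v_{(a,b)}$, and then play this finite support off against the infinitude of $\bigcup\L$. The difference is only in where the contradiction is extracted: you test a coordinate $x\in\bigcup\L$ and look for two inputs $1_{L_0\times M}$, $1_{L_1\times M}$ agreeing on the finite support $A$ but required to give different outputs, whereas the paper tests a coordinate $y\in\bigcup\M$, picks $x$ so that the variable $v_{(x,y)}$ is absent from the positive formula $B_y$, and evaluates at the single assignment $1_{\{(x,y)\}}$, where positivity forces $B_y$ to $0$ while the specification forces the value $1_{\{x\}\cup\{y\}}(y)=1$.

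The step you flag as ``the main obstacle'' is a genuine gap, and it cannot be closed from the stated hypotheses alone: the assumption that $\bigcup\L$ is infinite says nothing about how many sets $\L$ contains, so there need not exist $L_0,L_1\in\L$ agreeing on $A$ and differing at $x$, nor $L\in\L$ disjoint from $A$, nor $M\in\M$ omitting the chosen point. Indeed, for $\L=\{\Nset\}$ and $\M=\bigl\{\{0\}\bigr\}$ the domain and codomain are one-point spaces and the required $\op$ exists trivially, so some tacit strengthening of the hypotheses is unavoidable. You should be aware that the paper's own proof also assumes more than it states: the evaluation $\op(1_{\{(x,y)\}})=1_{\{x\}\cup\{y\}}$ presupposes that $\{x\}\in\L$ and $\{y\}\in\M$ (so that $1_{\{(x,y)\}}$ actually lies in the domain $\inj(\L\ewprod\M)$), and the claim that $B_y$ evaluates to $0$ presupposes that $B_y$ is not built from the constant $1$. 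So your reduction to the finite support $A$ is sound and reproduces the paper's mechanism, but to finish you must either add richness assumptions on $\L$ and $\M$ (for instance that they contain the relevant singletons), or, as the paper implicitly does, regard $\op_R$ as defined on the whole Cantor space and impose the specification at points outside $\inj(\L\ewprod\M)$.
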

\begin{proof}\eqref{assert:mult_to_add} Let $X:=\bigcup\L$ and
 $Y:=\bigcup\M$. 
 Assume there is such $\op$. Then for each $s\in X\cup Y$, there
 exists a positive Boolean formula $B_s$ over $\{v_{(x,y)}\;;\;
 x\in X,\ y\in Y \}$, such that $\op(1_{L\times M})(s)$ is the truth value of $B_s$ under
 the truth assignment $1_{L \times M}$ for all $L\in\L$ and all
 $M\in \M$.
 Choose some $y\in Y$. 
There is a variable $v_{(x,y)}$ such that it does not appear
 $B_y$, because $X$ is infinite. Therefore the truth value of $B_y$ under
 the truth assignment $1_{\{(x,y)\}}$ is 0 because $B_y$ does not
 contain negations of Boolean variables. On the other hand
 $\op(1_{\{(x,y)\}})(y) = 1_{\{x\}\cup\{ y\}}(y) = 1$. Contradiction. 
The assertion \eqref{assert:mult_is_not_product} is similarly proved.
\qed
\end{proof}

The bang operator of a coherence space have following counterparts in $\SetSys$:
\begin{displaymath}
 !\inj \L := \inj !\L 
\end{displaymath}
where the `!' in the right-hand side is defined in Theorem~\ref{thm:bang}.
Then
$!  \L\ \ewprod\ ! \M$ is isomorphic to $ \L \ewuplus \M$, as in the case of $\coh_{stable}$.
The duality operator of a coherence space, however, seems to have no exact counterpart in
$\SetSys$, when we take an \textsc{fe} seriously. Since the elementwise complement of an \textsc{fess} is not
necessarily an \textsc{fess},  the complement operation seems useless in
defining the duality operator in $\SetSys$. So let us examine the exchange of Teacher and Learner. To be precise,
For a set system $\L$ and $x\in \L$,  put $\L(x):=\left\{
 L\in \L\;;\; x\in L\right\}$, and $\L^\perp:=\left\{ \ \L(x) ;\ x \in
 \bigcup \L\ \right\}$. Then
$\bigcup\left(\L^\perp  \right):= \L\setminus\{\emptyset\}$.
If $\L$ is the class of open
 sets of a sober space, then  $\left(\L^\perp\right)^\perp$ is isomorphic
 to $\L$ in $\SetSys$. Since $L\in \L^\perp(x)$ iff $x\in L$,
\begin{eqnarray*}
&& \Langle \pair{t_0}{L_1}, \pair{t_1}{L_2},\ldots,
 \pair{t_{l-2}}{L_{l-1}},\pair{t_{l-1}}{L_l}\R \in\Prd{\L} \\
&&\iff\\
&& \Langle \pair{L_l}{\L(t_{l-1})}, \pair{L_{l-1}}{\L(t_{l-2})},\ldots,
\pair{L_2}{\L(t_1)},\pair{L_1}{\L(t_0)}\R \in\Prd{\L^\perp}
\end{eqnarray*}
We have an embedding from $\Prd{\L}$ to $\Prd{\left(\L^\perp\right)^\perp}$,
by $\bigcap\bigcap\left(\left(\L^\perp\right)^\perp(L)\right)=L$.
 Thus $\dim \L\le \dim \left(
\L^\perp\right)^\perp$.

Further categorical structures will be studied elsewhere.

\end{document}